\crefname{hypothesis}{Hypothesis}{Hypotheses}
\title{Bounded-Confidence Models of Opinion Dynamics with Adaptive Confidence Bounds\thanks{Submitted to the editors DATE.
\funding{GJL and JL acknowledge funding from NSF grant number 1829071. GJL and MAP acknowledge financial support from the National Science Foundation (grant number 1922952) through the Algorithms for Threat Detection (ATD) program}}}
\author{Grace J. Li\thanks{These two authors contributed equally. \\ Department of Mathematics, University of California, Los Angeles, CA, USA
  (\email{graceli@math.ucla.edu},
  \email{jerryluo8@math.ucla.edu}
  ).}
\and Jiajie Luo\footnotemark[2]
\and Mason A. Porter\thanks{Department of Mathematics, University of California, Los Angeles, CA, USA; Department of Sociology, University of California, Los Angeles, CA, USA; Santa Fe Institute, Santa Fe, NM, USA (\email{mason@math.ucla.edu}).}
}
\newcommand*{\addFileDependency}[1]{
  \typeout{(#1)}
  \@addtofilelist{#1}
  \IfFileExists{#1}{}{\typeout{No file #1.}}
}
\newtheorem{thm}{Theorem}
\begin{document}

\maketitle

\begin{abstract}
People’s opinions change with time as they interact with each other. In a bounded-confidence model (BCM) of opinion dynamics, individuals (which are represented by the nodes of a network) have continuous-valued opinions and are influenced by neighboring nodes whose opinions are sufficiently similar to theirs (i.e., are within a confidence bound). In this paper, we formulate and analyze discrete-time BCMs with heterogeneous and adaptive confidence bounds. We introduce two new models: (1) a BCM with synchronous opinion updates that generalizes the Hegselmann--Krause (HK) model and (2) a BCM with asynchronous opinion updates that generalizes the Deffuant--Weisbuch (DW) model. We analytically and numerically explore our adaptive BCMs' limiting behaviors, including the confidence-bound dynamics, the formation of clusters of nodes with similar opinions, and the time evolution of an ``effective graph'', which is a time-dependent subgraph of a network with edges between nodes that are currently receptive to each other. For a variety of networks and a wide range of values of the parameters that control the increase and decrease of confidence bounds, we demonstrate numerically that our adaptive BCMs result in fewer major opinion clusters and longer convergence times than the baseline (i.e., nonadaptive) BCMs. We also show that our adaptive BCMs can have adjacent nodes that converge to the same opinion but are not 
receptive to each other. This qualitative behavior does not occur in the associated baseline BCMs.
\end{abstract}

\begin{keywords}
  bounded-confidence models, opinion dynamics, adaptive networks, dynamical systems, social networks, stochastic processes
\end{keywords}

\begin{AMS}
91D30, 05C82, 37H05
\end{AMS}


\section{Introduction}\label{sec:introduction}
Social interactions play an important role in shaping the opinions of individuals, communities of people, and society at large \cite{bak2021}. An individual's opinion on a topic is often influenced by the people with whom they interact \cite{jackson2008}, and researchers in many disciplines study such interactions and how they change opinions and actions~\cite{noorazar_recent_2020}. 
In an agent-based model of opinion dynamics, each agent represents an individual and a network encodes which agents are able to interact with each other. Each node (i.e., agent) of a network has an opinion in some opinion space. Studying opinion models allows researchers to examine the evolution of opinions on social networks with time, leading to insights into the spread of ideas \cite{friedkin_social_1990,jia_opinion_2015}, 
when communities of individuals reach consensus and when they do not \cite{vasca2021}, and the formation of ``opinion clusters'' (i.e., clusters of nodes with similar opinions) \cite{lorenz2008}.

Individuals are often influenced most by people and other sources whose opinions are similar to theirs \cite{selective_exposure_def}.
This phenomenon is encapsulated in a simple form in \emph{bounded-confidence models} (BCMs) \cite{noorazar-review, HK_model, deffuant} of opinion dynamics, in which the nodes of a network have continuous-valued opinions and interacting nodes influence each others' opinions if and only if their opinions are sufficiently similar. A key feature of BCMs is the presence of a ``confidence bound'', which is a parameter that determines which nodes can influence each other. A node can only influence and be influenced by its neighbors when the difference in their opinions is less than their confidence bound. 

The two most popular BCMs are the Hegselmann--Krause (HK) model \cite{krause2000, HK_model} and the Deffuant--Weisbuch (DW) model \cite{deffuant}. Both of these models use discrete time.
The HK model updates synchronously; at each time, every node updates its opinions based on the opinions of all of its neighbors. 
The DW model updates asynchronously; at each time, one selects a dyad (i.e., a pair of adjacent nodes and the edge between them), and the two nodes in the dyad interact and potentially influence each others' opinions. The DW model also has a compromise parameter, which controls how much nodes in a dyad influence each other when they compromise their opinions. In both the HK model and the DW model, the confidence bound is traditionally a single fixed scalar parameter that is the same for all dyads. 
In \cref{sec:related}, we discuss generalizations of these baseline models.

In the present paper, we formulate and study adaptive-confidence BCMs that generalize the HK and DW models by incorporating distinct, time-dependent confidence bounds for each dyad.
The choice of modeling interactions synchronously (as in the HK model) or asynchronously (as in the DW model) impacts a BCM's tendency towards consensus \cite{urbig2008}.
Because the synchronous updates of the HK model yield faster convergence times than asynchronous updates, they allow us to more feasibly study our adaptive-confidence HK model on larger networks than for our adaptive-confidence DW model. Therefore, we concentrate more on our adaptive-confidence HK model than on our adaptive-confidence DW model, although we do discuss some analytical and computational results for the latter.

The confidence bounds in our adaptive-confidence BCMs change after nodes interact with each other. These changes highlight the idea that the quality of an interaction between individuals can affect how much they trust each other~\cite{glanville_social_connections_create_trust,choi_market_2020,li_who_2017}.
For example, in online marketplaces, trust between users depends on their past experiences with each other and on the reported experiences of other users in reputation systems \cite{sherchan2013,ruohomaa2007,resnick2000}.
The word ``trust'' can have different meanings in different disciplines; one interpretation is that trust represents an expectation about future behavior~\cite{sherchan2013}. 
Rather than considering trust, our BCMs use a notion of ``receptiveness'', which encodes the willingness of an individual to consider the future opinions of another individual. 
When two nodes interact with each other, their mutual receptiveness changes. 
See~\cite{bagnoli2007,xiong2017,nugent2023} for other opinion models with interaction-influenced receptiveness. 

In our adaptive-confidence BCMs, when two nodes successfully compromise their opinions in an interaction (i.e., they have a ``positive interaction''), they become more receptive to each other. 
Likewise, when two nodes interact but do not change their opinions (i.e., they have a ``negative interaction''), they become less receptive to each other. 
When nodes $i$ and $j$ interact and influence each others' opinions (i.e., their current opinion difference is smaller than their current confidence bound), we increase their confidence bound $c_{ij}$. When nodes $i$ and $j$ interact and do not influence each others' opinions (i.e., their current opinion difference is {at least as large as}
their current confidence bound), we decrease their confidence bound $c_{ij}$. In our adaptive-confidence BCMs, {each dyad has a distinct confidence bound and} interactions are symmetric (i.e., either both nodes influence each other or neither node influences the other).
One can interpret the increase of a dyadic confidence bound in our BCMs as nodes becoming more receptive to nodes with whom they compromise, and one can interpret the decrease of a dyadic confidence bound as nodes becoming less receptive to nodes with whom they do not compromise. 
When nodes in our BCMs have a negative interaction, they adapt their dyadic confidence bounds, but their opinions stay the same. Other researchers have considered BCMs with ``repulsion'', in which the opinions of interacting nodes with sufficiently different opinions move farther apart from each other~\cite{alizadeh2015,huet_rejection_2008,kann2023}.

In the present paper, we study the time evolution and long-term behaviors of our adaptive-confidence HK and DW models. We examine the
formation of ``limit opinion clusters'' (i.e., sets of nodes that converge to the same opinion),
the dynamics of the confidence bounds, and the convergence rate of the opinions. 
We simulate our models on various networks (see \cref{sec:simulations_networks}) and study the time evolution of their associated ``effective graphs'', which are time-dependent subgraphs of a network with edges only between nodes {that are receptive to each other} (see \cref{sec:theory}). 
We show numerically that our adaptive-confidence BCMs tend to have less ``opinion fragmentation''\footnote{In our study, ``opinion fragmentation'' signifies the existence of at least two ``major'' opinion clusters, which include
more than 1\% of the nodes of a network. 
In \cref{sec:simulations_quantities}, we give more detail about how we define opinion fragmentation and major opinion clusters.
} 
than their associated baseline (i.e., nonadaptive) BCMs. 
In our numerical simulations, we study ``final'' opinion clusters (see \cref{sec:simulations_specs}) to approximate limit opinion clusters.
We demonstrate numerically that the connected components of the final effective graphs in our BCMs can have more complicated structures than those of the baseline BCMs.


\subsection{Related work}\label{sec:related}

There has been much research on standard (i.e., nonadaptive) HK and DW models on networks through numerical simulations \cite{meng_opinion_2018,fortunato_universality_2004,lorenz_consensus_2006,HK_model} 
and both heuristic analytical arguments and mathematically rigorous proofs \cite{ben-naim_bifurcations_2003,lorenz_stabalization_2005,lorenz2008,HK_model}.
The DW model was studied initially on both a fully-mixed population (i.e., a complete network) and on a square-lattice network~\cite{deffuant}, and the HK model was studied initially only on a fully-mixed population~\cite{HK_model}. 
Subsequently, the DW and HK models have been studied on a variety of networks~\cite{fortunato2005,meng_opinion_2018,schawe2021_HK_networks}.
See \cite{noorazar-review, noorazar_recent_2020, bernardo2024} for reviews of research on the standard DW and HK models and their generalizations.

Many researchers have generalized the HK and DW models by incorporating heterogeneity into the confidence bounds.
Lorenz \cite{lorenz_heterogeneous_2009} extended these BCMs so that each node has its own confidence bound, which can result in asymmetric influence and opinion updates. Using numerical simulations, Lorenz demonstrated that these BCMs are more likely than the baseline BCMs to reach a consensus state when there are both open-minded and close-minded nodes
(which have large and small confidence bounds, respectively).
By analyzing the heterogeneous-confidence DW model of \cite{lorenz_heterogeneous_2009} on {complete graphs,} 
Chen et al.~\cite{chen_convergence_2020} 
proved almost-sure convergence of opinions for certain parameter values and derived sufficient conditions for the nodes of a network to eventually reach a consensus.
In a related work, Chen et al. \cite{chen_heterogeneous_2020} 
examined a heterogeneous HK model with ``environmental noise'' (e.g., from media sources) and showed that heterogeneous confidence bounds in this setting can yield larger differences in node opinions in the infinite-time limit.
Su et al.~\cite{su_partial_2017} examined the heterogeneous-confidence HK model of \cite{lorenz_heterogeneous_2009} and proved that at least some nodes of a network converge to a steady-state opinion in finite time. 

Researchers have also incorporated edge-based heterogeneities in the confidence bounds of BCMs.
Etesami~\cite{etesami2019} examined an HK model on networks with time-independent edge-heterogeneous confidence bounds {and proved that their model is Lyapunov stable.}
Shang~\cite{shang_agent_2014} studied a DW model in which 
each edge has a confidence bound that takes a value from
an independent and identically distributed Poisson process. 
They derived sufficient conditions for consensus to occur almost surely for a one-dimensional lattice graph.

Other generalizations of BCMs and related opinion models generalize the model parameters by making them time-dependent or adaptive.
Weisbuch et al.~\cite{weisbuch_meet_2002} studied a generalized DW model in which each
{node} 
has a heterogeneous, time-dependent confidence bound that is proportional to the standard deviation of the opinions that that {node} 
observed in all prior interactions. 
They also considered a variant of their model that places more weight on the observed opinions from recent interactions.
Deffuant et al.~\cite{deffuant_how_2002} examined a DW model with ``relative agreement''. In their model, each {node} has an uncertainty parameter that determines (1) whether it and the {node} with which it interacts influence each other and (2) the amount by which they influence each other.
A node changes both its opinion and its uncertainty when {another node influences it.}
Bagnoli et al.~\cite{bagnoli2007} considered a BCM on complete graphs in which each pair of adjacent nodes (i.e., each dyad)
has an associated time-dependent affinity value (which determines whether or not they can influence each other) that depends on the magnitude of their opinion difference.
Chacoma and Zanette~\cite{chacoma2015} examined opinion and confidence changes in a questionnaire-based experiment,
and they then proposed an agent-based opinion model based on the results of their experiment.
Their model is not a BCM, but it does incorporate a notion of time-dependent confidence between nodes.
Bernardo, Vasca, and Iervolino~\cite{vasca2021, bernardo2022} developed variants of the HK model in which nodes have individual, time-dependent confidence bounds\footnote{{The confidence bounds update with time in different ways in the models in \cite{vasca2021} and \cite{bernardo2022}.}} that depend on the opinions of neighboring nodes.
In their models, nodes adapt their confidence bounds through a heterophilic mechanism (i.e., they seek neighboring nodes whose opinions differ from theirs).
By contrast, in our models, nodes do not actively seek neighbors with different opinions. Instead, their mutual receptiveness increases when their opinions are sufficiently close to each other.

In our paper, we incorporate adaptivity into the confidence bounds of BCMs, but one can instead incorporate adaptivity in the network structures of BCMs \cite{kozma_consensus_2008,kozma_consensus_2008-1,del_vicario_modeling_2017,kan_adaptive_2022}.\footnote{See the reviews \cite{adaptivity2023,berner2023} for discussions of various notions of adaptivity in dynamical systems.}
Kozma and Barrat \cite{kozma_consensus_2008,kozma_consensus_2008-1} modified the DW model to allow rewiring of ``discordant'' edges, which occur between nodes whose opinions differ from each other by more than the confidence bound. In their model, rewired edges connect to new nodes uniformly at random. 
Recently, Kan et al.~\cite{kan_adaptive_2022} generalized this model by including both a confidence bound and an opinion-tolerance threshold, with discordant edges occurring between nodes whose opinions differ by more than that threshold. 
They incorporated opinion homophily into the rewiring probabilities, so nodes are more likely to rewire to nodes with more similar opinions.
They observed in numerical simulations that it is often harder to achieve consensus in their adaptive DW model than in an associated baseline DW model.

There has been much theoretical development of models of opinion dynamics, and it is 
important to empirically validate these models \cite{vazquez2022, galesic2021}.
Some researchers have used questionnaires \cite{chacoma2015,vandeKerckhove2016,takacs2016} or data from social-media platforms \cite{kozitsin2022_real_data, kozitsin2023_micro_level} to examine how opinions change in controlled experimental settings. 
Another approach is to develop models of opinion dynamics that infer model parameters \cite{chu2022, kozitsin_linking_2022} or opinion trajectories \cite{monti2020} from empirical data. 
There are many challenges to developing and validating models of opinion dynamics that represent real-world situations \cite{mas2019, bak2021}, but mechanistic modeling is valuable, as it (1) forces researchers to clearly
specify relationships and assumptions during model development and (2) provides a framework to explore complex social phenomena \cite{vazquez2022, holme2015}.

\subsection{Organization of our paper}\label{sec:organization}
Our paper proceeds as follows. In \cref{sec:model-introduce}, we introduce our adaptive-confidence BCMs and discuss the associated baseline BCMs. In \cref{sec:theory}, we give theoretical guarantees for our adaptive-confidence BCMs. We describe the specifications of our numerical simulations in \cref{sec:simulations} and the results of our numerical simulations in \cref{sec:results}. In \cref{sec:discussion}, we summarize our main results and discuss possible avenues of future work. 
In \cref{appendix:proofs} and \cref{appendix:proof_eff_baseline_DW}, 
we prove the results of \cref{sec:adaptive-confidenceDW_theorems}.
We present additional numerical results for our adaptive-confidence HK model in \cref{appendix:HK_simulations}, and we present additional numerical results for our adaptive-confidence DW model in \cref{appendix:DW_simulations}.
Our code and plots are available 
at \url{https://gitlab.com/graceli1/Adaptive-Confidence-BCM}.

\section{Baseline and adaptive BCMs}\label{sec:model-introduce}

We extend the HK and DW models by introducing adaptive confidence bounds. For both the HK and DW models, which we study on networks, we first present the baseline BCM and then introduce our adaptive-confidence generalization of it. The nodes in our BCMs 
represent agents that
have opinions that lie in the closed interval $[0,1]$. 
Let $G = (V,E)$, where $V$ is the set of nodes and $E$ is the set of edges, denote a time-independent, unweighted, and undirected graph without self-edges or multi-edges. The edges in the set $E$ specify which pairs of nodes can interact with each other at each discrete time $t$.
Let $N = |V|$ denote the number of nodes of the graph (i.e., network), $x_i(t)$ denote the opinion of node $i$ at time $t$, and $\vec{x}(t)$ denote the vector of the opinions of all nodes at time $t$ (i.e., the entry $[\vec{x}(t)]_i = x_i(t)$). We denote the edge that is attached to adjacent nodes $i$ and $j$ by $(i,j)$.

\subsection{The HK model}\label{sec:originalHK}
The baseline HK model \cite{krause2000, HK_model} is a discrete-time synchronous BCM on 
a time-independent, unweighted, and undirected graph $G = (V,E)$
with no self-edges or multi-edges.\footnote{The HK model was examined initially on a fully-mixed population \cite{HK_model}, but we use its extension to networks (see, e.g., \cite{fortunato2005, parasnis_hegselmann-krause_2018, schawe2021_HK_networks})
as our ``baseline HK model''.}
At each time $t$, we update the opinion of each node $i$ by calculating
\begin{equation}\label{eq:HK_update_rule}
	x_{i}(t + 1) = |I(i, x(t))|^{-1} \sum_{j \in I(i, x(t))} x_{j}(t)\,,
\end{equation}
where\footnote{In \cite{krause2000, HK_model}, 
$I(i, x(t)) = \{ i \} \cup \left\{j \ | \ |x_i(t) - x_j(t)| \leq c { \text{ and }(i,j)\in E} \right\}$. 
We use a strict inequality to be consistent with the strict inequality in the DW model.}
$I(i, x(t))= \{ i \} \cup \left\{j \ | \ |x_i(t) - x_j(t)| < c { \text{ and }(i,j)\in E} \right\} \subseteq \left\{1, 2, \ldots, N \right\}$. 
The \emph{confidence bound} {$c \in [0,1]$} 
controls the ``open-mindedness'' of nodes to different opinions.
We say that adjacent nodes $i$ and $j$ are 
\emph{receptive} to each other at time $t$ if their opinion difference is less than the confidence bound $c$ (i.e., $|x_i(t) - x_j(t)| < c$).
Accordingly, $I(i,x(t))$ is node $i$ itself along with all adjacent nodes 
to which $i$ is receptive.\footnote{An alternative interpretation is that each node has a self-edge. We do not use this interpretation.}
The confidence bound $c$ in the baseline HK model is homogeneous (i.e., the confidence bound is the same for all dyads) and time-independent.

\subsection{Our HK model with adaptive confidence bounds}\label{sec:DynamicHK}

Our HK model with adaptive confidence bounds is similar to the baseline HK model 
{with update rule} \cref{eq:HK_update_rule}, 
but now each edge $(i,j) \in E$ has a dyadic confidence bound $c_{ij}(t) \in [0,1]$ that is time-dependent and changes after each interaction between the nodes {in that dyad}. We refer to this model as our \emph{adaptive-confidence HK model}. 
Instead of a fixed confidence bound,
there is an initial confidence bound {$c_0 \in (0,1)$} and we initialize all of the confidence bounds\footnote{When $c_0 = 0$, nodes are never receptive to their neighbors (i.e., $c_{ij}(t) = 0$ for all adjacent nodes $i$ and $j$ at all times $t$).
When $c_0 = 1$, all nodes are always receptive to all of their neighbors (i.e., $c_{ij}(t) = 1$ for all adjacent nodes $i$ and $j$ at all times $t$).
We do not examine these values of $c_0$.}
to $c_{ij}(0) = c_0$ for each edge $(i,j) \in E$.
There is also a confidence-increase parameter $\gamma \in [0,1]$ and a confidence-decrease parameter $\delta \in [0,1]$, which control how much $c_{ij}(t)$ increases and decreases, respectively, after each interaction. 

At each time $t$, we update the opinion of each node $i$ by calculating
\begin{equation}\label{eq:HK_opinion}
	x_{i}(t+1) = |I(i, x(t))|^{-1} \sum_{j \in I(i, x(t))} x_{j}(t) \, ,
\end{equation}
where\footnote{{Although equations \cref{eq:HK_update_rule} and \cref{eq:HK_opinion} look the same, they use different definitions of the quantity $I(i,x(t))$. Equation \cref{eq:HK_update_rule} has a homogeneous and 
time-independent confidence bound, whereas equation \cref{eq:HK_opinion} has heterogeneous and adaptive confidence bounds.}} $I(i, x(t)) =
\{ i \} \cup \left\{j \ | \ |x_i(t) - x_j(t)| < c_{ij}(t){ \text{ and }(i,j) \in E}\right\} \subseteq \left\{1, 2, \ldots, N \right\}$.
Adjacent nodes $i$ and $j$ are 
receptive to each other at time $t$ if their opinion difference is less than their dyadic confidence bound $c_{ij}$ (i.e., $|x_i(t) - x_j(t)| < c_{ij}(t)$).
At each time, we also update each confidence bound $c_{ij}$ by calculating
\begin{equation}\label{eq:HK_confidence}
	c_{ij}(t+1)
    = \begin{cases}
        c_{ij}(t) + \gamma(1-c_{ij}(t)) \,, 
        & \text{ if } |x_i(t) - x_j(t)| < c_{ij}(t) \\ 
        \delta c_{ij}(t) \,,  
        & \text{ if } |x_i(t) - x_j(t)| \geq c_{ij}(t) \,.
	\end{cases}
\end{equation}
That is, if the opinion difference between {adjacent} nodes $i$ and $j$ is smaller than their confidence bound at time $t$, their associated dyadic confidence bound $c_{ij}$ increases; otherwise, their dyadic confidence bound decreases. 
Larger values of $\gamma$ correspond to sharper increases in the receptiveness between nodes when nodes compromise their opinions. Smaller values of $\delta$ correspond to sharper drops in the receptiveness between nodes
when nodes interact but do not compromise. 

Because {$c_0 \in (0,1)$ and $\gamma, \delta \in [0,1]$}, 
the update rule \cref{eq:HK_confidence} preserves {$c_{ij}(t) \in (0,1)$} for each edge $(i,j)$ and all times $t$.
If $(\gamma, \delta) = (0,1)$,
then $c_{ij}(t) = c_0$ for all $t$ and all edges $(i,j)\in E$.
That is, the confidence bounds are homogeneous and time-independent, so our adaptive-confidence HK model reduces to the baseline HK model.

\subsection{The DW model}\label{sec:originalDW}
The baseline DW model \cite{deffuant} is a discrete-time asynchronous BCM on a time-independent, 
unweighted, and undirected graph $G = (V,E)$ with no self-edges or multi-edges. At each time $t$, we choose an edge $(i,j) \in E$ uniformly at random. 
If nodes $i$ and $j$ are receptive to each other (i.e., if 
the opinion difference $|x_i(t) - x_j(t)|$ is less than the confidence bound $c$), 
we update the opinions of these nodes by calculating
\begin{equation}\label{eq:DW_update_rule}
\begin{split}
	x_i(t+1) &= x_i(t) + \mu(x_j(t) - x_i(t)) \,,\\
	x_j(t+1) &= x_j(t) + \mu(x_i(t) - x_j(t)) \,, 
\end{split}
\end{equation}
where $\mu \in (0, 0.5]$ is the compromise parameter\footnote{Alternatively, one can consider $\mu \in (0,1)$ as in Meng et al. \cite{meng_opinion_2018}, although this is an uncommon choice. When $\mu > 0.5$, nodes ``overcompromise'' when they change their opinions; they overshoot the mean opinion and change which side of the mean opinion they are on.
}.
Otherwise, the opinions $x_i$ and $x_j$ remain the same. At a given time $t$, we do not update the opinions of any nodes other than $i$ and $j$.
The confidence bound {$c \in [0,1]$ in the baseline DW model}
is homogeneous (i.e., the confidence bound is the same for all dyads) and time-independent.
As in the HK model, the confidence bound $c$ controls the open-mindedness of nodes to different opinions.
The compromise parameter $\mu$ indicates how much nodes adjust their opinions when they interact with a node to whom they are receptive.
When $\mu = 0.5$, two interacting nodes that are receptive to each other precisely average their opinions; when $\mu \in (0,0.5)$, interacting nodes that are receptive to each other move towards each others' opinions, but they do not adopt the mean opinion. Unlike in the HK model, the asynchronous update rule \eqref{eq:DW_update_rule} of the DW model incorporates only pairwise opinion updates.

\subsection{Our DW model with adaptive confidence bounds}\label{sec:dynamicDW}
We refer to our DW model with adaptive confidence bounds as our \emph{adaptive-confidence DW model}. As in the baseline DW model, there is a compromise parameter $\mu \in (0,0.5]$. 
As in our adaptive-confidence HK model, we initialize the confidence bounds in our adaptive-confidence DW model to be $c_{ij}(0) = c_0$, where $c_0 \in (0,1)$ is the initial confidence bound.\footnote{As in our adaptive-confidence HK model, when $c_0 = 0$, nodes are never receptive to any of their neighbors (i.e., $c_{ij}(t) = 0$ for all adjacent nodes $i$ and $j$ at all times $t$).
When $c_0 = 1$, nodes are always receptive to all of their neighbors (i.e., $c_{ij}(t) = 1$ for all adjacent nodes $i$ and $j$ at all times $t$).
We do not examine these values of $c_0$.}
There again is a confidence-increase parameter $\gamma \in [0,1]$ and a confidence-decrease parameter $\delta \in [0,1]$, 
which control how much $c_{ij}(t)$ increases and decreases, respectively, after each interaction.

At each time $t$, we select an edge $(i,j) \in E$ uniformly at random. 
If nodes $i$ and $j$ are receptive to each other (i.e., if $|x_i(t) - x_j(t)| < c_{ij}(t)$), we update the opinions of nodes $i$ and $j$ using the DW update rule \cref{eq:DW_update_rule}.
Otherwise, the opinions $x_i$ and $x_j$ remain the same. 
We also update the dyadic confidence bound $c_{ij}$ using equation \cref{eq:HK_confidence}.
That is, if the opinions of nodes $i$ and $j$ differ by less than their current dyadic confidence bound at time $t$, the confidence bound increases; otherwise, it decreases. 
The update rules preserves $c_{ij}(t) \in (0,1)$ for each edge $(i,j)$ and all times $t$.
All other opinions and confidence bounds remain the same.
Our adaptive-confidence DW model reduces to the baseline DW model when $(\gamma, \delta) = (0, 1)$.

\section{Theoretical results}\label{sec:theory}
We now discuss some theoretical guarantees of our BCMs. 
We study them on time-independent, unweighted, and undirected graphs without self-edges or multi-edges.

As a consequence of \cref{thm:lorenzthm} (which we state shortly), the opinion of each node in our BCMs converges to some limit value.
We define the \emph{limit opinion} $x^i$ of node $i$ as $\lim\limits_{t\to\infty}x_i(t)$.
We say that nodes $i$ and $j$ are in the same \emph{limit opinion cluster} if 
\begin{equation}\label{eq:opinion_cluster}
	\lim_{t\to\infty} x_i(t) = \lim_{t\to\infty} x_j(t)\,.
\end{equation}
Therefore, equation \cref{eq:opinion_cluster} gives an equivalence relation on the set of nodes; 
the limit opinion clusters are the equivalence classes.

A graph $G = (V,E)$ in a BCM has an associated time-dependent \emph{effective graph} $G_\mathrm{eff}(t)$, which is a subgraph of $G$ with edges only between nodes that 
are receptive to each other at time $t$.\footnote{Other researchers have referred to the effective graph as a ``confidence graph'' \cite{bernardo2024}, a ``communication graph'' \cite{bhattacharyya2013}, and a ``corresponding graph'' \cite{yang2014}.} 
That is,
\begin{align}
	G_\mathrm{eff}(t) &= (V,E_\mathrm{eff}(t))\,, \label{eq:eff_graph} \\
	E_\mathrm{eff}(t) &= \{(i,j)\in E \text{ such that } |x_i(t) - x_j(t)| < c_{ij}(t)\} \nonumber \, .
\end{align}

Consider the following theorem, which was stated and proved by Lorenz \cite{lorenz_stabalization_2005}.

\begin{thm}\label{thm:lorenzthm}
Let $\{ A(t) \}_{t=0}^\infty \in \mathbb{R}_{\geq 0}^{N \times N}$ be a sequence of row-stochastic matrices. Suppose that each matrix satisfies the following properties:

\begin{enumerate}
\item[(1)] The diagonal entries of $A(t)$ are positive.
\item[(2)] For each $i, j \in \{1,\ldots,N\}$, we have
 $[A(t)]_{ij} > 0$ if and only if $[A(t)]_{ji} > 0$.
\item[(3)] 
There is a constant $\alpha > 0$ such that the smallest positive entry of $A(t)$ for each $t$ is larger than $\alpha$.
\end{enumerate}

Given times $t_0$ and $t_1$ with $t_0 < t_1$, let 
\begin{equation}
	A(t_0, t_1) = A(t_1-1)\times A(t_1-2)\times\cdots\times A(t_0)\,.
\end{equation}
If conditions (1)--(3) are satisfied, then there exists a time $t'$ and pairwise-disjoint classes $\mathcal{I}_{1} \cup \cdots \cup \mathcal{I}_{p} = \{1,\ldots,N\}$ such that if we reindex the rows and columns of the matrices in the order $\mathcal{I}_{1}, \ldots, \mathcal{I}_{p}$, then
\begin{equation}
	\lim _{t \rightarrow \infty} A(0, t) = \left[\begin{array}{ccc}
		K_{1} & & 0 \\
		& \ddots & \\
		0 & & K_{p}
	\end{array}\right] A\left(0, t'\right)\,,
\end{equation}
where each $K_q$, with $q \in \{1, 2, \dots , p\}$, is a row-stochastic matrix of size $|\mathcal{I}_{q}|\times |\mathcal{I}_{q}|$ whose rows are all the same.
\end{thm}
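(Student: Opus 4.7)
The plan is to identify the classes $\mathcal{I}_q$ with the connected components of a ``persistent'' graph that records which pairs of indices interact infinitely often, and then to apply a Hajnal--Wolfowitz-style contraction argument on each resulting diagonal block. The cleanest way to get at the class structure is to observe that, although the entries of $A(t)$ can vary continuously, its \emph{support} $S(t) = \{(i,j) : [A(t)]_{ij} > 0\}$ takes only finitely many possible values. Let $\mathcal{S}$ be the set of supports that appear for infinitely many $t$, and set $S_\infty = \bigcup_{S \in \mathcal{S}} S$. Since supports outside $\mathcal{S}$ occur only finitely often, there is a time $t'$ such that $S(t) \subseteq S_\infty$ for all $t \geq t'$. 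Conditions (1) and (2) force $S_\infty$ to contain the diagonal and be symmetric, so $S_\infty$ is the edge set of an undirected graph $G_\infty$ on $\{1,\ldots,N\}$, and its connected components give the desired partition $\mathcal{I}_1 \cup \cdots \cup \mathcal{I}_p$.

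After reordering rows and columns so that indices within each class are contiguous, every $A(t)$ with $t \geq t'$ has block-diagonal support with respect to this partition. Consequently, the product $A(t', t)$ is genuinely block-diagonal for all $t \geq t'$, and the full product factors as $A(0,t) = A(t', t)\, A(0,t')$. It therefore suffices to show that, restricted to each block $\mathcal{I}_q$, the product $A(t', t)$ converges as $t \to \infty$ to a row-stochastic matrix $K_q$ whose rows are all identical; setting $K_q$ equal to this limit then yields the stated form.

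The core analytic step, and the one I expect to be the main obstacle, is this block-wise convergence. Within $\mathcal{I}_q$, the graph $G_\infty$ is connected, so any two indices are joined by a path all of whose edges lie in some $S \in \mathcal{S}$. Because $\mathcal{S}$ is finite and each $S \in \mathcal{S}$ recurs infinitely often, for any starting time $s \geq t'$ one can find a window $[s, s+L]$ of some uniformly bounded length $L$ (depending on $|\mathcal{I}_q|$ and $|\mathcal{S}|$ but not on $s$) during which every element of $\mathcal{S}$ appears at least once. Combined with condition (1) (positive diagonals, so indices can ``stay put'' along the path) and condition (3) (nonzero entries uniformly bounded below by $\delta$), a walk along any path in $G_\infty$ of length at most $L$ produces an entry of $A(s, s+L)$ restricted to $\mathcal{I}_q$ that is at least $\delta^L$. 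By picking a single column index reachable from every node in $\mathcal{I}_q$ (which exists because $G_\infty$ restricted to $\mathcal{I}_q$ is connected), one obtains a uniformly positive column and hence the scrambling property with a lower bound depending only on $\delta$ and $L$.

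Scrambling with a uniform constant forces the Dobrushin ergodicity coefficient $\tau$ of the restricted product over each such window to be bounded by some $\rho < 1$. Using submultiplicativity, $\tau(AB) \leq \tau(A)\tau(B)$, and chaining these estimates over disjoint consecutive windows of length $L$ drives $\tau$ of the entire restricted product to $0$ geometrically in $t$. Since $\tau$ controls the maximum $\ell^1$ distance between any two rows, this yields convergence of the block-restricted product to a rank-one row-stochastic matrix $K_q$. Assembling the $K_q$ into a block-diagonal matrix and right-multiplying by $A(0,t')$ gives the asserted formula. The uniform window argument is the delicate part: it requires combining the finiteness of $\mathcal{S}$ (combinatorial pigeonhole) with the quantitative lower bound $\delta$ to obtain a contraction rate that does not deteriorate with $t$.
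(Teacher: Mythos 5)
First, a point of context: the paper does not prove this theorem itself; it quotes it from Lorenz \cite{lorenz_stabalization_2005}, so your attempt must be judged on its own merits rather than against an in-paper argument. The setup in your first two paragraphs is sound: there are finitely many possible supports, so a tail time $t'$ exists after which every support lies in $S_\infty$; conditions (1)--(2) make $G_\infty$ reflexive and symmetric; the products $A(t',t)$ are block-diagonal with respect to the components of $G_\infty$; and $A(0,t)=A(t',t)\,A(0,t')$. The identification of the classes $\mathcal{I}_q$ with the components of $G_\infty$ is also the right one.

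The gap is exactly where you flagged it, and it is fatal to the method rather than a technicality. The claim that for every $s\geq t'$ there is a window $[s,s+L]$ of \emph{uniformly} bounded length containing every element of $\mathcal{S}$ is false: ``occurs infinitely often'' does not imply ``occurs with bounded gaps.'' For instance, with $\mathcal{S}=\{S_1,S_2\}$, the support $S_1$ may occur exactly at the times $2^k$ and $S_2$ at all other times; then any window meeting both must have length $L_k\to\infty$. Pigeonhole on the finitely many supports gives recurrence, not syndeticity. Your scrambling lower bound for a window of length $L$ is $\delta^{L}$, so the Dobrushin coefficient over the $k$-th window is only bounded by $1-c\,\delta^{L_k}$, and $\prod_k\bigl(1-c\,\delta^{L_k}\bigr)$ stays bounded away from $0$ whenever $\sum_k\delta^{L_k}<\infty$ (already for $L_k$ growing linearly). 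Hence submultiplicativity of $\tau$ does not force the block product to become rank one. This is precisely why the theorem is nontrivial: note that your argument invokes condition (2) only to make $G_\infty$ undirected, yet the theorem is false without type-symmetry (e.g., three agents where rows $1$ and $3$ are $e_1$ and $e_3$ forever and agent $2$ alternately listens to agent $1$ and to agent $3$ over ever-longer stretches satisfies (1) and (3) but $A(0,t)$ does not converge). The correct proofs (Lorenz's, and the cut-balance argument of Hendrickx and Tsitsiklis) use (2) quantitatively in a monotonicity/energy argument: symmetry plus the lower bound $\delta$ forces every interaction across a persistent gap to move \emph{both} endpoints by a definite fraction of that gap, the extreme sorted entries of $A(0,t)x$ are monotone and hence convergent, and summability of the total displacement forces indices that interact infinitely often to share a limiting row, regardless of the timing of their interactions. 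Your proposal needs this replacement for its third and fourth paragraphs.
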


As stated in \cite{lorenz_stabalization_2005},
\cref{thm:lorenzthm} guarantees that the opinion of each node convergences to a limit opinion in the baseline HK and DW models. Because the node opinions in our adaptive-confidence HK and DW models update in the same way as in the corresponding baseline BCMs (see \cref{eq:HK_update_rule} and \cref{eq:DW_update_rule}, respectively), it follows that the node opinions in our models also converge to a limit opinion.

\subsection{Adaptive-confidence HK model}\label{sec:HK_theory}
\subsubsection{Confidence-bound analysis}\label{sec:conf_bound_analysis_HK}
In \cref{thm:convergence_HK}, we give our main result about the behavior of the confidence bounds (which update according to \cref{eq:HK_confidence}) in our adaptive-confidence HK model.

\begin{thm}\label{thm:convergence_HK}
In our adaptive-confidence HK model (with update rules \cref{eq:HK_opinion} and \cref{eq:HK_confidence}) with parameters $\gamma \in (0,1]$ and $\delta \in [0,1)$, the dyadic confidence bound $c_{ij}(t)$ of each pair of adjacent nodes, $i$ and $j$, converges either to $0$ or to $1$. Furthermore, if $i$ and $j$ are in different limit opinion clusters, then $c_{ij}(t)$ converges to $0$.
\end{thm}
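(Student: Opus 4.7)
The plan is to apply \cref{thm:lorenzthm} to the adaptive-confidence HK dynamics to conclude that the opinions converge, and then to analyze each $c_{ij}(t)$ case by case based on $d_{ij} := \lim_{t\to\infty}|x_i(t)-x_j(t)|$. To set up \cref{thm:lorenzthm}, I would write \cref{eq:HK_opinion} as $\vec{x}(t+1) = A(t)\vec{x}(t)$, where $[A(t)]_{ik} = |I(i,x(t))|^{-1}$ for $k \in I(i,x(t))$ and $0$ otherwise; this $A(t)$ is row-stochastic with strictly positive diagonal (since $i\in I(i,x(t))$), its nonzero pattern is symmetric because $c_{ij}(t) = c_{ji}(t)$, and its positive entries are uniformly bounded below by $1/N$. \cref{thm:lorenzthm} then produces the limiting opinion clusters $\mathcal{I}_1,\ldots,\mathcal{I}_p$, so each $\lim_{t\to\infty} x_i(t)$ exists and each $d_{ij}$ is well defined.

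For the case $d_{ij}=0$, I would show that $c_{ij}(t)$ converges to $0$ or to $1$ by an ``absorbing barrier'' argument. Pick any $\alpha>0$ and choose $T$ large enough that $|x_i(t)-x_j(t)|<\alpha$ for all $t\geq T$. If ever $c_{ij}(t_0)>\alpha$ for some $t_0\geq T$, then the increase rule applies at $t_0$ and $c_{ij}(t_0+1)=(1-\gamma)c_{ij}(t_0)+\gamma>c_{ij}(t_0)>\alpha$, so by induction the increase rule applies at every subsequent time and $c_{ij}(t)\to 1$ (the unique fixed point of $c\mapsto (1-\gamma)c+\gamma$ in $[0,1]$). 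Hence if $c_{ij}(t)\not\to 1$, then $\limsup_{t\to\infty} c_{ij}(t)\leq\alpha$ for every $\alpha>0$, which gives $c_{ij}(t)\to 0$.

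For the case $d_{ij}>0$, I would fix a small $\eta\in(0,d_{ij})$ and a time $T_\eta$ after which $|x_i(t)-x_j(t)|\in(d_{ij}-\eta,\, d_{ij}+\eta)$, and then distinguish three subcases: (i) some $t\geq T_\eta$ has $c_{ij}(t)\leq d_{ij}-\eta$; (ii) some $t\geq T_\eta$ has $c_{ij}(t)>d_{ij}+\eta$; and (iii) $c_{ij}(t)\in(d_{ij}-\eta,\, d_{ij}+\eta]$ for all $t\geq T_\eta$. In subcase (i), $c_{ij}(t)<|x_i(t)-x_j(t)|$ triggers the decrease rule, and iterating yields $c_{ij}(s)\leq\delta^{s-t}c_{ij}(t)\to 0$. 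In subcase (iii), a direct computation shows that for sufficiently small $\eta$, any single increase raises $c_{ij}$ by at least approximately $\gamma(1-d_{ij})$ while any single decrease reduces it by at least approximately $(1-\delta)d_{ij}$, so every update ejects $c_{ij}(t)$ from the window $(d_{ij}-\eta,\, d_{ij}+\eta]$, contradicting the hypothesis of (iii).

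The main obstacle is subcase (ii). Here, the same barrier argument used for $d_{ij}=0$ shows $c_{ij}(t)\to 1$; thus for all large $t$, $c_{ij}(t)>|x_i(t)-x_j(t)|$ and so $j\in I(i,x(t))$, which gives $[A(t)]_{ij}\geq 1/N$ for all such $t$. I would then argue that persistent positivity of $[A(t)]_{ij}$ in the products $A(0,t)$ forces $i$ and $j$ into the same class $\mathcal{I}_q$ in the block decomposition of \cref{thm:lorenzthm}, because the partition there is induced precisely by the asymptotic nonzero pattern of $A(0,t)$; hence $\lim x_i = \lim x_j$, contradicting $d_{ij}>0$. This rules out subcase (ii), leaving only subcase (i), from which $c_{ij}(t)\to 0$. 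The most delicate step in the whole proof is this last invocation of \cref{thm:lorenzthm} to translate ``persistently nonzero $[A(t)]_{ij}$'' into ``same limiting opinion cluster,'' and that is where I would spend the most care writing details.
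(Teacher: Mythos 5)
Your overall strategy is sound and genuinely different from the paper's: the paper first proves that every $c_{ij}(t)$ is eventually strictly monotone (\cref{lemma:monotone_HK}) via a single global argument and then shows that any limit of an eventually monotone $c_{ij}(t)$ must be $0$ or $1$ (\cref{lemma:convergence_HK}), whereas you classify pairs by $d_{ij}$ and run absorbing-barrier arguments. Your $d_{ij}=0$ case and subcases (i) and (iii) are correct, up to the edge case $d_{ij}=1$, where the increase step $\gamma(1-c_{ij}(t))$ is not bounded below inside your window; that case is easy to dispatch separately, since persistent inclusion $j\in I(i,x(t))$ gives $x_i(t+1)\geq |I(i,x(t))|^{-1}x_j(t)\geq x_j(t)/N$, contradicting $x^i=0$ and $x^j=1$.

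The genuine gap is subcase (ii). The version of Lorenz's theorem stated as \cref{thm:lorenzthm} asserts only the existence of a partition for which $\lim_{t\to\infty} A(0,t)$ has the stated block form; it does not say that the classes are characterized by the asymptotic nonzero pattern of the individual matrices $A(t)$, so ``$[A(t)]_{ij}\geq 1/N$ for all large $t$'' does not, from the statement alone, force $i$ and $j$ into the same class $\mathcal{I}_q$. The implication also cannot be recovered by a naive local computation: $x_i(t+1)-x_i(t)=\sum_k [A(t)]_{ik}\left(x_k(t)-x_i(t)\right)$ can be small even when the single term indexed by $j$ is bounded away from zero, because contributions from other in-bound neighbors can cancel it. The paper closes exactly this case with a minimality (infinite-descent) argument: among all nodes that at some late time $t$ have an in-bound neighbor from a different limiting cluster, it takes one whose limiting opinion $x^i$ is smallest; every in-bound neighbor of that node then has limiting opinion at least $x^i$, with at least one strictly larger, so no cancellation is possible and $|x_i(t+1)-x_i(t)|$ exceeds $\frac{1}{2d}\min_{x^m\neq x^n}|x^m-x^n|$ up to small errors, contradicting the convergence of $x_i(t)$. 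To complete your proof you must either reproduce an argument of this type or invoke Lorenz's theorem in a form that explicitly identifies the classes with the equivalence classes of persistent interaction; as written, the step you yourself flag as delicate is the one that is not justified.
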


We prove \cref{thm:convergence_HK} by proving \cref{lemma:no_influence_HK},
\cref{lemma:monotone_HK} and \cref{lemma:convergence_HK}, which we state shortly. Because $c_{ij}(t) \in [0,1]$, \cref{lemma:monotone_HK} gives convergence (because an eventually monotone\footnote{We say that a discrete time series $a(t)$ is ``eventually monotone increasing'' (respectively, ``eventually monotone decreasing'') if there exists a time $T$ such that $a(t + 1) \geq a(t)$ (respectively, $a(t + 1) \leq a(t)$) for all $t \geq T$. 
Additionally, we say that $a(t)$ is ``eventually strictly increasing'' (respectively, ``eventually strictly decreasing'') if there exists a time $T$ such that $a(t + 1) > a(t)$ (respectively, $a(t + 1) < a(t)$) for all times $t \geq T$.
\label{footnote:monotone_strict}
}
sequence in $[0,1]$ must converge). By \cref{lemma:convergence_HK}, we then have convergence either to $0$ or to $1$. Furthermore, by \cref{lemma:monotone_HK}, if nodes $i$ and $j$ are in different limit opinion clusters, then $c_{ij}(t)$ must eventually be strictly decreasing and hence must converge to $0$.
However, if adjacent nodes $i$ and $j$ are in the same limit opinion cluster, then $c_{ij}$ does not necessarily converge to $1$. In fact, as we discuss in \cref{sec:results}, our numerical simulations suggest that it is possible for the confidence bound of adjacent nodes in the same limit opinion cluster to instead converge to $0$.

In \cref{thm:convergence_HK}, \cref{lemma:monotone_HK}, and \cref{lemma:convergence_HK},
we consider our adaptive-confidence HK model with parameters $\gamma \in (0,1]$ and $\delta \in [0,1)$. These parameter restrictions preclude the baseline HK model (which is equivalent to our adaptive-confidence HK model with $(\gamma, \delta) = (0,1)$).
However, in \cref{lemma:no_influence_HK}, we consider
 $\gamma \in [0,1]$ and $\delta \in [0,1]$. Therefore, \cref{lemma:no_influence_HK} also applies to the baseline HK model, 
so we use it in our proof of \cref{thm:effgraph_baseline_HK} for the baseline HK model.

\begin{lemma}\label{lemma:no_influence_HK}
Consider our adaptive-confidence HK model (with update rules \cref{eq:HK_opinion} and \cref{eq:HK_confidence}) with {parameters} $\gamma \in [0,1]$ and $\delta \in [0,1]$.
There is a time $T$ such that no adjacent nodes $i$ and $j$ in different limit opinion clusters (i.e., $x^i \neq x^j$) are receptive to each other at any time $t \geq T$.
That is, for all pairs of adjacent nodes $i$ and $j$ with $x^i \neq x^j$, we have
$|x_i(t) - x_j(t)| \geq c_{ij}(t)$ for all times $t \geq T$.
\end{lemma}

\begin{proof}

Consider a pair of adjacent nodes, $i$ and $j$, that are in different limit opinion clusters (i.e., $x^i \neq x^j$). 
Let $d$ be $1$ more than the largest degree of a node of the graph $G$; that is, $d = 1 + \displaystyle\max_{i\in V} \deg(i)$. 
Choose $T$ such that the inequalities 
\begin{align}
	|x_k(t) - x^k| &< \frac{1}{4d}\min_{x^m\neq x^{n}}|x^m - x^{n}| \,, \label{eq:HK_monotone_condition1} \\
	|x_k(t) - x_k(t')| &< \frac{1}{4d} \min_{x^m\neq x^{n}}|x^m - x^{n}| 
\label{eq:HK_monotone_condition2}
\end{align}
hold for each node $k$ and for all $t' > t \geq T$.

We claim that nodes $i$ and $j$ are not receptive to each other
at any time $t \geq T$. 
Suppose the contrary. There then must exist some time $t \geq T$ and adjacent nodes $i$ and $j$ with $x^i \neq x^j$ and $|x_i(t) - x_j(t)| < c_{ij}(t)$.
Fix such a {time} $t$ and choose a node $i$ that gives the smallest limit opinion value $x^i$ such that there is a neighboring node $j$ with
$x^j \neq x^i$ and $|x_i(t) - x_j(t)| < c_{ij}(t)$.

For this node $i$, let $q = |I(i,x(t))| \leq d$.
Because of our choice of $x^i$, we have 
\begin{equation}\label{eq:HK_proof_limit_opinion_difference_bound}
	\frac{1}{q}\left|\sum_{\substack{j\in I(i,x(t))\\ j\neq i}} (x^i - x^j) \right|
	=\frac{1}{q}\sum_{\substack{j\in I(i,x(t))\\ j\neq i}} (x^j - x^i)
\geq\frac{1}{d}\min_{x^m\neq x^{n}}|x^m - x^{n}| \,.
\end{equation}
Using \cref{eq:HK_monotone_condition1}, we obtain
\small
\begin{align}
	\frac{1}{q}\left| \sum_{\substack{j\in I(i,x(t))\\ j\neq i}} (x^i - x^j) \right|
&\leq \frac{1}{q}\sum_{\substack{j\in I(i,x(t))\\ j\neq i}}\left|x_i(t) - x^i\right|
+ \frac{1}{q}\left|\sum_{\substack{j\in I(i,x(t))\\ j\neq i}} (x_i(t) - x_j(t)) \right| \nonumber\\
	&\qquad\qquad\qquad + \frac{1}{q}\sum_{\substack{j\in I(i,x(t))\\ j\neq i}}\left|x_j(t) - x^j\right| \nonumber \\
& < 2 \left(\frac{q-1}{q}\right) \left(\frac{1}{4d}\right) 
\min_{x^m\neq x^{n}}|x^m - x^{n}|
+\frac{1}{q}\left|\sum_{\substack{j\in I(i,x(t))\\ j\neq i}} (x_i(t) - x_j(t))\right| \nonumber \\
& < \frac{1}{2d}\min_{x^m\neq x^{n}}|x^m - x^{n}|
+\frac{1}{q}\left|\sum_{\substack{j\in I(i,x(t))\\ j\neq i}} (x_i(t) - x_j(t)) \right| \,. 
	\label{eq:HK_proof_limit_opinion_difference_other_one}
\end{align} 
\normalsize
Combining \cref{eq:HK_proof_limit_opinion_difference_bound} and \cref{eq:HK_proof_limit_opinion_difference_other_one} yields 
\begin{equation}\label{eq:HK_contradiction1}
	\frac{1}{q} \left|\sum_{\substack{j\in I(i,x(t))\\ j\neq i}} (x_i(t) - x_j(t)) \right| > \frac{1}{2d}\min_{x^m \neq x^{n}}|x^m - x^{n}| \,.
\end{equation}
Using the HK opinion-update rule \cref{eq:HK_opinion} and the inequality \cref{eq:HK_monotone_condition2}, we also have 
\begin{equation}\label{eq:HK_contradiction2}
	\frac{1}{q}\left|\sum_{\substack{j\in I(i,x(t))\\ j\neq i}} (x_i(t) - x_j(t)) \right|
		= |x_i(t+1)-x_i(t)|
		< \frac{1}{4d} \min_{x^m\neq x^{n}}|x^m - x^{n}| \,.
\end{equation}
The relations \cref{eq:HK_contradiction1} and \cref{eq:HK_contradiction2} cannot hold simultaneously, so nodes $i$ and $j$ 
are not receptive to each other at any time $t \geq T$.
\end{proof}

\begin{lemma}\label{lemma:monotone_HK}
In our adaptive-confidence HK model (with update rules \cref{eq:HK_opinion} and \cref{eq:HK_confidence}) with parameters $\gamma \in (0,1]$ and $\delta \in [0,1)$, the dyadic confidence bound $c_{ij}(t)$ of each pair of adjacent nodes, $i$ and $j$, is eventually either strictly increasing or strictly decreasing. 
That is, there is a time $T$ such that exactly one of the inequalities $c_{ij}(t_1) < c_{ij}(t_2)$ and $c_{ij}(t_1) > c_{ij}(t_2)$ holds for all times $t_2 > t_1 \geq T$.
Furthermore, if nodes $i$ and $j$ are in different limit opinion clusters {(i.e., if $x^i \neq x^j$)}, 
then $c_{ij}(t)$ is eventually strictly decreasing.
\end{lemma}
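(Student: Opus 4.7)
The plan is to prove eventual monotonicity by first establishing opinion convergence via \cref{thm:lorenzthm} and then separately analyzing the cases $d=0$ (same limiting cluster) and $d>0$ (different limiting clusters), where $d:=\lim_{t\to\infty}|x_i(t)-x_j(t)|$.

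First I would apply \cref{thm:lorenzthm} to the sequence of averaging matrices $A(t)$ defined by $[A(t)]_{ik}=1/|I(i,x(t))|$ when $k\in I(i,x(t))$ and $0$ otherwise. The three hypotheses are routine to verify: $[A(t)]_{ii}>0$ because each node self-includes, the zero-pattern is symmetric because $c_{ij}(t)=c_{ji}(t)$, and every positive entry is at least $1/N$. Consequently $\vec{x}(t)\to \vec{x}^{*}$ and $B(t):=|x_i(t)-x_j(t)|\to d\geq 0$.

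The two cases are then handled by the same ``trap'' argument, using the update rule \cref{eq:HK_confidence}. If $d=0$, pick $T$ with $B(t)<\gamma/2$ for $t\geq T$: either no increase occurs after $T$ (so every subsequent update is a strict decrease), or an increase occurs at some $t_{0}\geq T$, in which case $c_{ij}(t_{0}+1)=(1-\gamma)c_{ij}(t_{0})+\gamma\geq\gamma>\gamma/2>B(t_{0}+1)$, forcing another increase at $t_{0}+1$; induction makes every subsequent update an increase. If $d>0$, pick $\epsilon\in(0,d(1-\delta)/(1+\delta))$ and $T$ with $B(t)\in(d-\epsilon,d+\epsilon)$ for $t\geq T$. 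The analogous bound $c_{ij}(t_{0}+1)=\delta c_{ij}(t_{0})\leq \delta(d+\epsilon)<d-\epsilon<B(t_{0}+1)$, valid by the choice of $\epsilon$, shows that a decrease at $t_{0}\geq T$ forces every subsequent update to be a decrease.

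What remains, in the $d>0$ case, is to rule out the possibility that every update after $T$ is an increase. If this held, then $j\in I(i,x(t))$ for all $t\geq T$, meaning that $i$ and $j$ are persistently adjacent in the effective graph $G_{\mathrm{eff}}(t)$. I would argue that such persistent mutual connectivity places $i$ and $j$ in the same block $\mathcal{I}_{q}$ of the decomposition supplied by \cref{thm:lorenzthm}, forcing $x_i^{*}=x_j^{*}$ and contradicting $d>0$. This last step is the main obstacle: \cref{thm:lorenzthm} as stated only produces the block decomposition and does not explicitly identify the blocks with the equivalence classes of eventual connectivity. Making this identification rigorous requires either invoking the internal construction in Lorenz's proof or supplying a direct argument (for instance, a variance-contraction argument exploiting the symmetry $c_{ij}(t)=c_{ji}(t)$) that two persistently mutually-influential nodes must equalize in the limit.
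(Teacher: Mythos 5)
Your same-cluster case ($d=0$) is essentially the paper's argument: the same trap at level $\gamma$, with the same induction showing that one increase after the opinions have settled within $\gamma/2$ of their common limit forces increases forever. The genuine gap is in the cross-cluster case ($d>0$), and it is exactly the step you flag yourself: ruling out that $c_{ij}$ increases at every step after $T$. Your proposed route --- persistent membership $j\in I(i,x(t))$ for all $t\geq T$ implies $i$ and $j$ lie in the same block $\mathcal{I}_q$ of \cref{thm:lorenzthm}, hence $x^i=x^j$ --- does not follow from \cref{thm:lorenzthm} as stated. That theorem only asserts the existence of some block decomposition of the limit $\lim_t A(0,t)$; it says nothing about how the blocks relate to which entries of the individual matrices $A(t)$ stay positive. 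The implication you need (persistent, uniformly-bounded-below mutual interaction forces agreement) is true for this class of averaging processes, but it is a separate agreement-type result whose proof is comparable in difficulty to the lemma itself; as written, your argument is circular in the sense that it assumes the very connectivity-to-consensus link that has to be established. Note also that you only rule out the ``increase at \emph{every} step after $T$'' scenario, whereas what is really needed (and what the paper proves) is that no cross-cluster dyad can satisfy $|x_i(t)-x_j(t)|<c_{ij}(t)$ at even a \emph{single} time $t\geq T$; your case (a) patches this by showing one decrease propagates forever, so the logic is sound modulo the gap, but the gap is load-bearing.

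The paper closes this case by a quite different, self-contained argument that exploits the synchronous averaging. Suppose some cross-cluster dyad is in the effective graph at a time $t\geq T$ (with $T$ chosen so that all opinions are within $\frac{1}{4d}\min_{x^m\neq x^n}|x^m-x^n|$ of their limits and of each other across times, where $d$ is one plus the maximum degree). Among all nodes that at time $t$ have an effective neighbor in a different limiting cluster, pick the one, $i$, with the \emph{smallest} limiting opinion $x^i$. By this minimality, every $j\in I(i,x(t))$ satisfies $x^j\geq x^i$, with strict inequality for at least one $j$, so the averaged limiting displacement $\frac{1}{p}\bigl|\sum_{j\in I(i,x(t)),\,j\neq i}(x^i-x^j)\bigr|$ is at least $\frac{1}{d}\min_{x^m\neq x^n}|x^m-x^n|$ (no cancellation can occur). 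Transferring this bound to the actual opinions shows that the one-step move $|x_i(t+1)-x_i(t)|$ exceeds $\frac{1}{2d}\min_{x^m\neq x^n}|x^m-x^n|$, contradicting the Cauchy estimate $|x_i(t+1)-x_i(t)|<\frac{1}{4d}\min_{x^m\neq x^n}|x^m-x^n|$. You would need to either adopt this argument (or something equivalent) or prove the persistent-interaction-implies-agreement lemma separately; without one of these, the cross-cluster half of the statement is not established.
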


\begin{proof}

We first consider $c_{ij}$ for adjacent nodes $i$ and $j$ that are in different limit opinion clusters (i.e., $x^i \neq x^j$). By \cref{lemma:no_influence_HK}, there is a time $T$ such that 
nodes $i$ and $j$ are mutually unreceptive at all times $t \geq T$. 
Consequently, $c_{ij}(t)$ cannot increase at any time $t \geq T$ and must be monotone decreasing.
 
Because the adaptive-confidence HK model updates synchronously and the initial confidence bound $c_0 \in (0,1)$,
each confidence bound $c_{ij}$ must change at each time $t$. That is, for all pairs of adjacent nodes $i$ and $j$ and all times $t$, we have $c_{ij}(t + 1) \neq c_{ij}(t)$.
Consequently, for all adjacent nodes $i$ and $j$ in distinct limit opinion clusters and for all times $t \geq T$, we have that $c_{ij}$ is strictly decreasing (i.e., $c_{ij}(t + 1) < c_{ij}(t)$).

Now consider adjacent nodes $i$ and $j$ that are in the same limit opinion cluster (i.e., $x^i = x^j$). 
Choose $T > 0$ such that 
\begin{equation}\label{eq:HK_monotone_condition3}
    |x_k(t) - x^k| < \frac{\gamma}{2}
\end{equation}
for each node $k$ and all $t \geq T$.
We claim that there exists some $T_{ij} \geq T$ such that the dyadic confidence bound $c_{ij}$ is either strictly decreasing (i.e., $c_{ij}(t+1) < c_{ij}(t)$) or strictly increasing (i.e., $c_{ij}(t+1) > c_{ij}(t)$) for all times $t \geq T_{ij}$.

If $c_{ij}$ is strictly decreasing for all $t \geq T$, we choose $T_{ij} = T$. 
If $c_{ij}$ is not strictly decreasing for all $t \geq T$, there must exist some time $T_{ij} \geq T$ at which $|x_i(T_{ij}) - x_j(T_{ij})| < c_{ij}(T_{ij})$; therefore,
$c_{ij}(T_{ij} + 1) > c_{ij} (T_{ij})$. Without loss of generality, let $T_{ij}$ be the earliest such time. We will show by induction that $c_{ij}(t + 1) > c_{ij}(t)$ for all $t \geq T_{ij}$.
By assumption, this inequality holds for the base case $t = T_{ij}$.

Suppose that $c_{ij}(t + 1) > c_{ij}(t)$ for some value of $t \geq T_{ij}$.
We must then also have $|x_i(t) - x_j(t)| < c_{ij}(t)$ and 
\begin{equation}
	c_{ij}(t + 1) = c_{ij}(t) + \gamma(1 - c_{ij}(t)) \geq \gamma\,.
\end{equation}
By the inequality \cref{eq:HK_monotone_condition3}, we have
\begin{equation}
	|x_k(t') - x_{k'}(t')| \leq |x_k(t') - x^k| + |x^k - x^{k'}| + |x^{k'} - x_{k'}(t')| < \gamma
\end{equation}
for each node pair $k$ and $k'$ with $x^k = x^{k'}$ and all times $t' \geq T$.
Because $t+1 > T_{ij} \geq T$, it follows that
\begin{equation}
	|x_i(t+1) - x_j(t + 1)| < \gamma < c_{ij}(t + 1)\,,
\end{equation}
so $c_{ij}(t + 2) > c_{ij} (t + 1)$.
Consequently, by induction, if $c_{ij}$ increases at $t = T_{ij}$, then  $c_{ij}$ is strictly increasing (i.e., $c_{ij}(t+1) > c_{ij}(t)$) for all $t \geq T_{ij}$.
Therefore, there exists some time $T_{ij}$ such that $c_{ij}$ is either strictly decreasing or strictly increasing for all $t \geq T_{ij}$.

In summary, we have shown that $c_{ij}$ is eventually strictly decreasing for all adjacent nodes $i$ and $j$ in different limit opinion clusters.
Additionally, for all adjacent nodes $i$ and $j$ in the same limit opinion cluster, we have shown that $c_{ij}$ is eventually either strictly decreasing or strictly increasing.
\vspace{-18pt}
\[\] 

\end{proof}

\begin{lemma}\label{lemma:convergence_HK}
Consider our adaptive-confidence HK model (with update rules \cref{eq:HK_opinion} and \cref{eq:HK_confidence}) with parameters $\gamma \in (0,1]$ and $\delta \in [0,1)$.
Suppose that $c^{ij} = \lim\limits_{t\to\infty} c_{ij}(t)$ exists.
It then follows that either $c^{ij} = 0$ or $c^{ij} = 1$. 
\end{lemma}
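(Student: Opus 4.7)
The plan is to combine \cref{lemma:monotone_HK} with the explicit form of the confidence-bound update \cref{eq:HK_confidence} and analyze the two possible eventual monotone behaviors separately. By \cref{lemma:monotone_HK}, there exists some $T$ after which $c_{ij}(t)$ is either strictly increasing or strictly decreasing for every $t \geq T$. Since $c_{ij}(t) \in [0,1]$, a bounded monotone sequence converges, so $c^{ij}$ exists; the task is to pin down its value.

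First I would observe that on the interior $c_{ij}(t) \in (0,1)$, the two branches of \cref{eq:HK_confidence} have strictly opposite effects: the ``receptive'' branch yields $c_{ij}(t+1) - c_{ij}(t) = \gamma(1-c_{ij}(t)) > 0$ whenever $\gamma > 0$ and $c_{ij}(t) < 1$, while the ``nonreceptive'' branch yields $c_{ij}(t+1) - c_{ij}(t) = -(1-\delta)c_{ij}(t) < 0$ whenever $\delta < 1$ and $c_{ij}(t) > 0$. Consequently, if $c_{ij}$ is strictly increasing for all $t \geq T$, then the receptive branch must be used at every such $t$, and if $c_{ij}$ is strictly decreasing for all $t \geq T$, then the nonreceptive branch must be used at every such $t$. (The boundary cases $c_{ij}(t) \in \{0,1\}$ are fixed points of \cref{eq:HK_confidence} and immediately give the claim.)

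In the eventually-increasing case, the recurrence $c_{ij}(t+1) = (1-\gamma)c_{ij}(t) + \gamma$ holds for all $t \geq T$. Passing to the limit on both sides (or equivalently writing $1 - c_{ij}(t+1) = (1-\gamma)(1 - c_{ij}(t))$ and iterating, noting $|1-\gamma| < 1$ because $\gamma \in (0,1]$) yields $c^{ij} = 1$. In the eventually-decreasing case, the recurrence $c_{ij}(t+1) = \delta c_{ij}(t)$ holds for all $t \geq T$, and since $\delta \in [0,1)$, iteration gives $c_{ij}(t) = \delta^{t-T} c_{ij}(T) \to 0$, so $c^{ij} = 0$.

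There is no real obstacle here; the only subtlety is verifying that strict monotonicity forces a single branch of \cref{eq:HK_confidence} to be applied at every step after time $T$, which is immediate from the sign analysis above. Combined with \cref{lemma:monotone_HK}, this completes the proof (and, together with the second statement of \cref{lemma:monotone_HK}, also yields the ``furthermore'' clause of \cref{thm:convergence_HK}: if $i$ and $j$ lie in distinct limiting opinion clusters, then $c_{ij}$ is eventually strictly decreasing, hence $c^{ij} = 0$).
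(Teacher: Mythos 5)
Your proof is correct, but it takes a genuinely different route from the paper's. The paper's argument for this lemma does not invoke \cref{lemma:monotone_HK} at all: it uses only the hypothesis that $c_{ij}(t)$ converges, hence is Cauchy, so that the one-step increments $|c_{ij}(t+1)-c_{ij}(t)|$ are eventually smaller than $\frac{1}{2}\min\{1-\delta,\gamma\}\,\epsilon$; since the receptive branch produces an increment of exactly $\gamma(1-c_{ij}(t))$ and the nonreceptive branch one of exactly $(1-\delta)c_{ij}(t)$, whichever branch fires at a late time $t$ pins $c_{ij}(t)$ to within $\epsilon/2$ of $1$ or of $0$, and hence pins $c^{ij}$ to within $\epsilon$ of $1$ or of $0$. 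You instead use \cref{lemma:monotone_HK} to conclude that a single branch fires at every step after some $T$ and then solve the resulting affine recurrence in closed form. Both are valid. Your version is more explicit --- it exhibits the geometric rates $(1-\gamma)^k$ and $\delta^k$ of convergence --- but it is tied to the synchronous HK setting, where the dyad updates at every time step; the paper's Cauchy-increment argument transfers essentially verbatim to the asynchronous DW analogue (\cref{lemma:convergence_DW}), where $c_{ij}$ is only eventually monotone (not strictly) and is updated only at the random times the edge is selected, so ``one branch at every step'' is unavailable without extra work. One small imprecision in your parenthetical: $c_{ij}(t)=1$ is not always a fixed point of \cref{eq:HK_confidence}, because if $|x_i(t)-x_j(t)|=1$ exactly (opinions at the two endpoints of $[0,1]$) the nonreceptive branch fires and $c_{ij}(t+1)=\delta<1$. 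This does not damage your argument, since after the time $T$ of \cref{lemma:monotone_HK} the strict-monotonicity dichotomy covers that case anyway.
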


\begin{proof}
Given $\epsilon > 0$, choose a time $T$ so that the inequalities
\begin{align}
	|c_{ij}(t) - c^{ij}| &<  \epsilon/2\,, \\
	|c_{ij}(t_1) - c_{ij}(t_2)| &< \frac{1}{2}
    \left(\min\{1 - \delta, \gamma\}\right) \epsilon
\end{align}
hold for all times $t, t_1, t_2 \geq T$.
Fix some time $t \geq T$. It must be the case that either
\begin{equation}
	c_{ij}(t+1) = \delta  c_{ij}(t)
\end{equation}
or 
\begin{equation}
	c_{ij}(t+1) = c_{ij}(t) + \gamma  (1 - c_{ij}(t)) \,.
\end{equation}

Suppose first that $c_{ij}(t+1) = \delta c_{ij}(t)$. In this case, we claim that $c^{ij} = 0$. To verify this claim, first note that $c_{ij}(t) - c_{ij}(t + 1) = (1 - \delta)c_{ij}(t)$.
Because 
$c_{ij}(t) - c_{ij}(t + 1) < \frac{1}{2}(1 - \delta)\epsilon$, we see that $c_{ij}(t) < \epsilon/2$. Therefore,
\begin{align*}
	0 \leq c^{ij} &\leq |c^{ij} - c_{ij}(t)| + |c_{ij}(t)| \\
		&< \epsilon/2 + \epsilon/2\\
		&= \epsilon\,,
\end{align*}
which implies that $c^{ij} = 0$.

Now suppose that $c_{ij}(t+1) = c_{ij}(t) + \gamma(1 - c_{ij}(t))$. Note that $c_{ij}(t+1) - c_{ij}(t) = \gamma(1 - c_{ij}(t)) < \frac{1}{2}\gamma\epsilon$, which implies that $1 - c_{ij}(t) < \epsilon/2$. Additionally,
\begin{align*}
	0 \leq 1 - c^{ij} &\leq |1 - c_{ij}(t)| + |c_{ij}(t) - c^{ij}| \\ 
		&< \epsilon/2 + \epsilon/2 \\
		&= \epsilon\,,
\end{align*}
which implies that $c^{ij} = 1$. 

Therefore, it follows that either $c^{ij} = 0$ or $c^{ij} = 1$.
\end{proof}

\subsubsection{Effective-graph analysis}
In this section, we discuss the convergence of effective graphs in our adaptive-confidence HK model (see \cref{thm:effgraph_HK}) and the baseline HK model (see \cref{thm:effgraph_baseline_HK}). Our proofs of convergence employ some 
results from \cref{sec:conf_bound_analysis_HK}.

\begin{thm}\label{thm:effgraph_HK}
In our adaptive-confidence HK model with parameters $\gamma \in (0,1]$ and $\delta \in [0,1)$, the effective graph $G_\mathrm{eff}(t)$ is eventually constant with respect to time. That is, there is some time $T$ such that $G_\mathrm{eff}(t) = G_\mathrm{eff}(T)$ for all times $t \geq T$. Moreover, all of the edges of the \emph{limit effective graph} $\lim\limits_{t\to\infty} G_\mathrm{eff}(t)$ are between nodes in the same limit opinion cluster.
\end{thm}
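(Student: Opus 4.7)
I would reduce the claim to a per-edge analysis, since $E$ is finite and so $G_\mathrm{eff}(t)$ stabilizes as soon as every edge's membership in $E_\mathrm{eff}(t)$ stabilizes. For a fixed edge $(i,j)\in E$, \cref{lemma:monotone_HK} supplies a time $T_{ij}$ after which $c_{ij}(t)$ is either strictly increasing or strictly decreasing. The task is to translate this strict monotonicity into a permanent membership decision for $(i,j)$ in $E_\mathrm{eff}(t)$.

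The translation is immediate from the synchronous update rule \cref{eq:HK_confidence}: with $\gamma \in (0,1]$ and $\delta \in [0,1)$, a strict increase $c_{ij}(t+1) > c_{ij}(t)$ occurs exactly when $|x_i(t) - x_j(t)| < c_{ij}(t)$, i.e., when $(i,j) \in E_\mathrm{eff}(t)$, and a strict decrease occurs exactly when $(i,j) \notin E_\mathrm{eff}(t)$. Feeding in the eventual strict monotonicity from \cref{lemma:monotone_HK} therefore pins down the membership of $(i,j)$ in $E_\mathrm{eff}(t)$ for all $t \geq T_{ij}$. Setting $T = \max_{(i,j) \in E} T_{ij}$ then yields $G_\mathrm{eff}(t) = G_\mathrm{eff}(T)$ for all $t \geq T$, which is the first claim.

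For the second claim, suppose $(i,j)$ lies in the limiting effective graph, so that $|x_i(t) - x_j(t)| < c_{ij}(t)$ for all sufficiently large $t$. If nodes $i$ and $j$ belonged to distinct limiting opinion clusters, then \cref{thm:convergence_HK} would give $c_{ij}(t) \to 0$, while $|x_i(t) - x_j(t)| \to |x^i - x^j| > 0$, contradicting the strict inequality that must hold for all large $t$. Hence $x^i = x^j$.

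I do not expect a serious obstacle here, as the theorem is essentially a corollary of the confidence-bound results of \cref{sec:conf_bound_analysis_HK}. The only subtle point is verifying that the sign of the one-step change in $c_{ij}$ is genuinely an if-and-only-if indicator for the inequality $|x_i(t) - x_j(t)| < c_{ij}(t)$; this uses $\gamma \in (0,1]$ and $\delta \in [0,1)$ to rule out degenerate updates (in the regime $c_{ij} \in (0,1)$) that would leave $c_{ij}$ unchanged and thereby break the dichotomy that drives the argument.
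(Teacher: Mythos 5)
Your proof is correct and follows essentially the same route as the paper's: both reduce the first claim to the eventual strict monotonicity of each $c_{ij}$ from \cref{lemma:monotone_HK} and use the equivalence between the sign of the one-step change in $c_{ij}$ and the inequality $|x_i(t) - x_j(t)| < c_{ij}(t)$ to fix each edge's membership in $E_\mathrm{eff}(t)$ permanently. The only (immaterial) difference is in the second claim, where the paper argues directly from the strict decrease of $c_{ij}$ for cross-cluster dyads rather than combining $c_{ij}(t)\to 0$ from \cref{thm:convergence_HK} with the positive gap between limiting opinions.
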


\begin{proof}

By \cref{lemma:monotone_HK}, we can choose a time $T$ such that each dyadic confidence bound $c_{ij}$
is either strictly increasing or strictly decreasing for all times $t \geq T$.

For $t \geq T$, if $c_{ij}(t)$ is strictly decreasing, then we necessarily have that $|x_i(t) - x_j(t)| \geq c_{ij}(t)$ for all $t \geq T$, so $(i,j) \notin E_\mathrm{eff}(t)$ for all $t \geq T$. If $c_{ij}(t)$ is strictly increasing, then $|x_i(t) - x_j(t)| < c_{ij}(t)$ for all $t \geq T$, so $(i,j)\in E_\mathrm{eff}(t)$ for all $t \geq T$. Therefore, the set $E_\mathrm{eff}(t)$ of edges of the effective graph is constant for all $t \geq T$, so the effective graph is constant for $t \geq T$. 

For nodes $i$ and $j$ in different limit opinion clusters (i.e., $x^i\neq x^j$),
\cref{lemma:monotone_HK} guarantees that the confidence bound $c_{ij}$ is strictly decreasing for all times $t \geq T$.
Therefore, $|x_i(t) - x_j(t)| \geq c_{ij}(t)$ for all $t\geq T$, so $(i,j) \notin E_\mathrm{eff}(t)$ for all $t \geq T$.
\vspace{-22pt}
\[\] 

\end{proof}

\cref{thm:effgraph_HK} states that all edges of a limit effective graph are between nodes in the same limit opinion cluster. 
However, the edges between nodes in the same limit opinion cluster do not have to exist in the limit effective graph. As we will discuss in \cref{sec:simulations_quantities} and \cref{sec:results}, our numerical simulations suggest that our adaptive-confidence BCMs can have adjacent nodes in the same limit opinion cluster whose associated dyadic confidence bound converges to $0$. 
The associated edge is thus not in the limit effective graph.

\cref{thm:effgraph_baseline_HK} guarantees that the effective graphs in the baseline HK model converge in the limit $t \rightarrow \infty$.
Unlike in our adaptive-confidence HK model, all edges between nodes in the same limit opinion cluster in the baseline HK model must exist in the limit effective graph.
Therefore, the limit opinion values in the baseline HK model fully determine the structure of the limit effective graph.

\begin{thm}\label{thm:effgraph_baseline_HK}
    In the baseline HK model (with update rule \cref{eq:HK_update_rule}), the effective graph $G_\mathrm{eff}(t)$ is eventually constant with respect to time. 
    {That is, there is some time $T$ such that $G_\mathrm{eff}(t) = G_\mathrm{eff}(T)$ for all times $t \geq T$.}
    Moreover, the edge $(i,j)\in E$ exists in the limit effective graph if and only if this edge is between two nodes in the same limit opinion cluster (i.e., $x^i = x^j$).
\end{thm}

\begin{proof}
We first consider adjacent nodes, $i$ and $j$, that are in different limit opinion clusters (i.e., $x^i \neq x^j$).
By \cref{lemma:no_influence_HK}, because our adaptive-confidence HK model with $\gamma = 0$ and $\delta = 1$ reduces to the baseline HK model, there exists a time $T_1$ such that nodes $i$ and $j$ are not receptive to
each other (i.e., $|x_i(t) - x_j(t)| \geq c$) 
{at} any time $t \geq T_1$. 
Therefore, the edge $(i,j) \notin E_\mathrm{eff}(t)$ 
{at any time} $t \geq T_1$.

Now consider adjacent nodes, $i$ and $j$, that are in the same limit opinion cluster (i.e., $x^i = x^j$).
Choose a time $T_2$ such that $|x_k(t) - x^k| < c/2$ for each node $k$ and all times $t \geq T_2$.
For all $t \geq T_2$, we then have
\begin{align*}
	|x_i(t) - x_j(t)| \leq |x_i(t) - x^i| + |x^i - x^j| + |x^j - x_j(t)| < c/2 + 0 + c/2 = c \,.
\end{align*}
Therefore, at any time $t \geq T_2$, nodes $i$ and $j$ 
are receptive to each other and the edge $(i,j) \in E_\mathrm{eff}(t)$.
By taking $T = \max \{T_1, T_2\}$, for any time $t \geq T$, we have that $(i,j) \notin E_\mathrm{eff}(t)$ for all edges $(i,j)$ with $x^i \neq x^j$ and that $(i,j) \in E_\mathrm{eff}(t)$ for all edges $(i,j)$ with $x^i = x^j$.

\vspace{-23pt}
\[\] 

\end{proof} 

\subsection{Adaptive-confidence DW model}\label{sec:adaptive-confidenceDW_theorems}

In this section, we discuss our theoretical results for the confidence bounds and effective graphs in our adaptive-confidence DW model.
Both the baseline DW model and our adaptive-confidence DW model are asynchronous and stochastic. At each discrete time, we uniformly 
{randomly} select one pair of adjacent nodes to interact.
Because of the stochasticity in the baseline and adaptive-confidence DW models, our theoretical results for them are in an ``almost sure'' sense. 
By contrast, our theoretical results (see \cref{sec:HK_theory}) are deterministic for the baseline and adaptive HK models.


\subsubsection{Confidence-bound analysis}\label{sec:confidence_bound_analysis_DW}
We now give our main result about the behavior of the confidence bounds in our adaptive-confidence DW model (which has the update rules \cref{eq:DW_update_rule} and \cref{eq:HK_confidence}). This result mirrors the main result for our adaptive-confidence HK model in \cref{sec:conf_bound_analysis_HK}.

\begin{thm}\label{thm:convergence_DW}
In our adaptive-confidence DW model (with update rules \cref{eq:DW_update_rule} and \cref{eq:HK_confidence}) with parameters $\gamma \in (0,1]$ and $\delta \in [0,1)$, 
the dyadic confidence bound $c_{ij}(t)$ {of each pair of adjacent nodes, $i$ and $j$,}
converges either to $0$ or to $1$ almost surely. Moreover, if nodes $i$ and $j$ are in different limit opinion clusters (i.e., $x^i \neq x^j$), then $c_{ij}(t)$ converges to $0$ almost surely.
\end{thm}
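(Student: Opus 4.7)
The plan is to mirror the proof of \cref{thm:convergence_HK}, while handling the stochastic, asynchronous nature of the DW updates through almost-sure arguments. Because $c_{ij}(t)$ changes only at the (random) times at which edge $(i,j)$ is selected, the natural object is the subsequence of such selection times. By a standard Borel--Cantelli argument (each edge is selected with probability $1/|E|$ at each step, independently), each edge is selected infinitely often almost surely.

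First, I would establish almost-sure convergence of $\vec{x}(t)$. The total opinion $\sum_i x_i(t)$ is preserved by each DW update, and the quadratic $V(t) = \sum_i (x_i(t) - \bar{x})^2$ is nonincreasing: when the selected edge $(i,j)$ is a compromise, a direct computation gives $V(t+1) - V(t) = -2\mu(1-\mu)(x_i(t)-x_j(t))^2 \leq 0$, and otherwise $V(t+1) = V(t)$. Telescoping yields $\sum_{t} (x_i(t)-x_j(t))^2 \mathbf{1}\{(i,j) \text{ compromises at } t\} < \infty$ for every edge, almost surely, which by standard arguments (cf.\ \cite{chen_convergence_2020}) gives coordinatewise almost-sure convergence $x_i(t) \to x^i$ for each node $i$.

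Next, consider adjacent nodes $i,j$ with $x^i \neq x^j$. Set $\Delta = |x^i - x^j|/2 > 0$; for all sufficiently large $t$ we have $|x_i(t) - x_j(t)| > \Delta$. A compromise on $(i,j)$ at such a time would change $x_i$ by more than $\mu\Delta$, contradicting the convergence of $x_i(t)$. Hence compromises on $(i,j)$ occur at most finitely often almost surely; combined with the fact that $(i,j)$ is selected infinitely often, eventually every selection of $(i,j)$ is a non-compromise, so $c_{ij}$ is multiplied by $\delta \in [0,1)$ infinitely often, giving $c_{ij}(t) \to 0$ almost surely.

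For adjacent nodes with $x^i = x^j$, I would argue in the spirit of \cref{lemma:monotone_HK}. Choose a random time $T$ such that $|x_k(t) - x^k| < \gamma/2$ for every node $k$ and every $t \geq T$. If a compromise ever occurs on $(i,j)$ at some $t_0 \geq T$, then $c_{ij}(t_0{+}1) \geq \gamma$; since $|x_i(t)-x_j(t)| < \gamma$ for all $t \geq T$, every subsequent selection of $(i,j)$ is again a compromise, so $c_{ij}$ is monotonically increasing along the selection subsequence and converges to $1$. Otherwise no compromise occurs on $(i,j)$ after $T$, so $c_{ij}$ decays geometrically to $0$ along the infinitely many non-compromise events. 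Either way, $c_{ij}(t)$ converges almost surely to an element of $\{0,1\}$. The main obstacle I anticipate is rigorously passing from the decreasing sequence $V(t)$ to coordinatewise almost-sure convergence of $\vec{x}(t)$: unlike the HK case, where \cref{thm:lorenzthm} supplies convergence deterministically, for DW one must combine the Borel--Cantelli lemma with a careful argument that ``vanishing compromise-gap on every edge'' implies convergence of each node's opinion. Once this step is in hand, the remaining structure parallels the HK proof.
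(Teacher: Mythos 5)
The one genuine gap is your first step, the almost-sure convergence of $\vec{x}(t)$. Your Lyapunov computation gives $\sum_t (x_i(t)-x_j(t))^2\,\mathbf{1}\{(i,j)\text{ compromises at }t\} < \infty$, but passing from this summability to coordinatewise convergence $x_i(t)\to x^i$ is exactly the hard part of that route (one must rule out individual opinions drifting or oscillating while the compromise gaps vanish), and ``by standard arguments'' does not close it. Moreover, your stated reason for taking this detour --- that \cref{thm:lorenzthm} supplies convergence only in the synchronous HK case --- is mistaken. Each asynchronous DW update, whether or not the selected dyad compromises, is multiplication of $\vec{x}(t)$ by a row-stochastic matrix (the identity outside rows $i,j$, and a $2\times 2$ block with entries $\mu$ and $1-\mu$ on the selected dyad) whose diagonal entries are positive, whose support is symmetric, and whose positive entries are bounded below by $\mu$. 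Conditions (1)--(3) of \cref{thm:lorenzthm} therefore hold for every realization of the edge-selection sequence, and the paper invokes that theorem to get convergence of opinions \emph{deterministically}, with no probabilistic argument needed at this stage. Replacing your energy argument with this observation closes the gap.

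Everything after that point is essentially the paper's proof, just packaged as a single direct argument rather than split into the eventual-monotonicity statement (\cref{lemma:monotone_DW}) and the limit-dichotomy statement (\cref{lemma:convergence_DW}): for dyads in different limiting clusters you show compromises eventually cease because a compromise would move $x_i$ by at least $\mu\Delta$, and then Borel--Cantelli supplies infinitely many non-compromise selections so that $c_{ij}\to 0$; for dyads in the same cluster you use the same $\gamma/2$-neighborhood dichotomy as the paper (either some compromise after time $T$ forces $c_{ij}\geq\gamma$ and every later selection is a compromise, driving $c_{ij}\to 1$ along the infinitely many selection times, or no compromise ever occurs and $c_{ij}\to 0$). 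These parts are correct as written.
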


We prove \cref{thm:convergence_DW} by proving \cref{lemma:monotone_DW} and \cref{lemma:convergence_DW}, which we state shortly and prove in \cref{appendix:proof_cij_DW}. 
Because our adaptive-confidence DW model updates asynchronously, \cref{lemma:monotone_DW} guarantees eventual monotone increase or decrease of the confidence bounds. This result differs from the eventual strict increase or decrease in the confidence bounds in our adaptive-confidence HK model (see \cref{lemma:monotone_HK}).
(See \cref{footnote:monotone_strict} for our usage of the terms ``eventual monotone increase'' (and decrease) and ``eventual strict increase'' (and decrease).)
Because $c_{ij}(t) \in [0,1]$, \cref{lemma:monotone_DW} implies convergence (because an eventually monotone sequence in $[0,1]$ must converge). By \cref{lemma:convergence_DW}, we have almost sure convergence to $0$ or to $1$. Moreover, if nodes $i$ and $j$ are in different limit opinion clusters, then \cref{lemma:monotone_DW} guarantees that $c_{ij}(t)$ is monotone decreasing.
By \cref{lemma:convergence_DW}, $c_{ij}(t)$ thus almost surely converges to $0$.

\begin{lemma}\label{lemma:monotone_DW}
In our adaptive-confidence DW model (with update rules \cref{eq:DW_update_rule} and \cref{eq:HK_confidence})
with parameters $\gamma \in (0,1]$ and $\delta \in [0,1)$, the dyadic confidence bound $c_{ij}(t)$ of each pair of adjacent nodes, $i$ and $j$,
is eventually monotone increasing or monotone decreasing. That is, there is a time $T$ such that exactly one of the inequalities $c_{ij}(t_1) \leq c_{ij}(t_2)$ and $c_{ij}(t_1) \geq c_{ij}(t_2)$ holds for all times $t_2 > t_1 \geq T$.

Furthermore, if nodes $i$ and $j$ are in different limit opinion clusters, then $c_{ij}(t)$ is eventually monotone decreasing.  

\end{lemma}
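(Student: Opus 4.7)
The plan is to mirror the structure of the proof of \cref{lemma:monotone_HK}, adapted to the two essential differences introduced by the DW update rule: opinions converge only almost surely (rather than deterministically via \cref{thm:lorenzthm}), and the dyadic confidence bound $c_{ij}(t)$ changes only at the random times when edge $(i,j)$ is selected (and is constant in between). The first step is therefore to work on the almost-sure event that $x_i(t) \to x^i$ for every node $i$. For our adaptive-confidence DW model, this almost-sure opinion convergence can be obtained by a standard variance argument: each opinion update preserves $\sum_k x_k(t)$ and strictly decreases $\sum_k x_k(t)^2$ whenever the selected pair actually compromises, so $\sum_k x_k(t)^2$ is a bounded non-increasing sequence, from which almost-sure convergence of the opinion vector follows (in a manner analogous to the argument for the baseline DW model). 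After this step, the remainder of the argument can be carried out deterministically on the almost-sure event where opinions converge, with $x^i := \lim_{t \to \infty} x_i(t)$ well defined.

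Having fixed such a realization, I would split into two cases according to whether nodes $i$ and $j$ lie in the same limiting opinion cluster. In the first case, suppose $x^i \neq x^j$. The goal is to exhibit a (realization-dependent) time $T$ such that whenever edge $(i,j)$ is selected at some $t \geq T$, one has $|x_i(t) - x_j(t)| \geq c_{ij}(t)$. Suppose the contrary: there are arbitrarily late times $t$ at which $(i,j)$ is selected with $|x_i(t) - x_j(t)| < c_{ij}(t)$. The DW update rule then gives $|x_i(t+1) - x_j(t+1)| = (1 - 2\mu)\,|x_i(t) - x_j(t)|$. But opinion convergence forces both $|x_i(t) - x_j(t)|$ and $|x_i(t+1) - x_j(t+1)|$ to approach $|x^i - x^j| > 0$, which is incompatible with a contraction by the factor $(1 - 2\mu) < 1$ (and immediately impossible when $\mu = 1/2$). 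Hence for all $t \geq T$, any update of $c_{ij}$ must be a decrease, and since $c_{ij}$ is constant between updates, it is weakly monotone decreasing on $[T, \infty)$. This already proves the ``in particular'' clause of the lemma.

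In the second case, suppose $x^i = x^j = x$. Choose $T$ large enough that $|x_k(t) - x| < \gamma/4$ for $k \in \{i, j\}$ and all $t \geq T$, so $|x_i(t) - x_j(t)| < \gamma/2$ for such $t$. Either $c_{ij}$ never increases on $[T, \infty)$, in which case it is weakly monotone decreasing and we are done, or there is a first time $T_{ij} \geq T$ at which $c_{ij}$ increases, which yields $c_{ij}(T_{ij} + 1) = c_{ij}(T_{ij}) + \gamma(1 - c_{ij}(T_{ij})) \geq \gamma$. An induction on the successive selection times of edge $(i,j)$ after $T_{ij}$ then shows that at every such time $t$, we have $c_{ij}(t) \geq \gamma > \gamma/2 > |x_i(t) - x_j(t)|$, so the update strictly increases $c_{ij}$ (unless $c_{ij}(t) = 1$) while preserving the lower bound $c_{ij} \geq \gamma$. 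Combined with constancy between selections, this gives that $c_{ij}$ is weakly monotone increasing on $[T_{ij}, \infty)$.

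The main obstacle I expect is the first case, specifically ruling out late ``successful'' interactions between nodes whose limiting opinions differ. The clean way to dispose of it is to compare the pre-update and post-update differences at a single selection and exploit the fact that a.s.\ convergence applies to both endpoints, making the factor-$(1 - 2\mu)$ contraction untenable. Apart from this, the remaining work is bookkeeping: because a.s.\ opinion convergence is the only probabilistic input, the rest of the argument is deterministic on the almost-sure event, which justifies the ``almost surely'' qualifier that will propagate into \cref{thm:convergence_DW} via \cref{lemma:convergence_DW}.
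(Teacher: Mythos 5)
Your overall structure is sound and matches the paper's: establish convergence of the opinions, then split into the cases $x^i \neq x^j$ and $x^i = x^j$, rule out late ``successful'' interactions in the first case, and run the first-increase/induction argument in the second. Your treatment of both cases is correct; in fact, your contraction argument for the case $x^i \neq x^j$ (comparing $|x_i(t+1)-x_j(t+1)| = (1-2\mu)\,|x_i(t)-x_j(t)|$ with the fact that both sides must tend to $|x^i - x^j| > 0$) is a clean alternative to the paper's route, which instead bounds the single-step displacement $|x_j(t+1)-x_j(t)| = \mu\,|x_i(t)-x_j(t)|$ against a Cauchy-type condition on the converging opinions.

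The genuine gap is in your very first step, the convergence of the opinions. You assert that because $\sum_k x_k(t)$ is conserved and $\sum_k x_k(t)^2$ is bounded and non-increasing, ``almost-sure convergence of the opinion vector follows.'' That inference is not valid as stated: convergence of $\sum_k x_k(t)^2$ only tells you that the total dissipation $\sum_t 2\mu(1-\mu)\,(x_{i_t}(t)-x_{j_t}(t))^2$ over compromising interactions is finite, i.e., that the opinion increments are square-summable. Square-summable increments do not imply that each $x_k(t)$ converges (a bounded sequence can oscillate indefinitely with increments of order $1/t$), so this step needs a real argument rather than an appeal to the baseline DW model. Moreover, your stated reason for avoiding \cref{thm:lorenzthm} --- that opinions converge ``only almost surely'' in the DW setting --- is mistaken: for any realization of the edge-selection sequence, each DW update is multiplication by a row-stochastic matrix with positive diagonal, symmetric zero pattern, and positive entries bounded below by $\min\{\mu, 1-\mu\}$, so \cref{thm:lorenzthm} applies deterministically and yields convergence of every $x_k(t)$ for every realization. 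This is exactly what the paper does, and it is the easy repair for your proof; with that substitution, no probabilistic input is needed in this lemma at all (the ``almost surely'' enters only later, via the Borel--Cantelli argument in \cref{lemma:convergence_DW}).
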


\begin{lemma}\label{lemma:convergence_DW}
Consider our adaptive-confidence DW model (with update rules \cref{eq:DW_update_rule} and \cref{eq:HK_confidence}) with parameters $\gamma \in (0,1]$ and $\delta \in [0,1)$.
Suppose that $c_{ij} = \lim\limits_{t\to\infty} c_{ij}(t)$ exists. It then follows that, almost surely, either $c_{ij} = 0$ or $c_{ij} = 1$.
\end{lemma}

\subsubsection{Effective-graph analysis}
We now present \cref{thm:effgraph_DW}, which is our main result about effective graphs in our adaptive-confidence DW model.  
In \cref{appendix:proof_eff_DW}, we present its proof, which uses results from \cref{sec:confidence_bound_analysis_DW}.

\begin{thm}\label{thm:effgraph_DW}
In our adaptive-confidence DW model (with update rules \cref{eq:DW_update_rule} and \cref{eq:HK_confidence}) with parameters $\gamma \in (0,1]$ and $\delta \in [0,1)$, the
effective graph $G_\mathrm{eff}(t)$ almost surely eventually
has edges only between nodes of the same limit opinion cluster.
That is, there is almost surely some time $T$ such that 
for all times $t \geq T$, we have that $(i,j) \in E_\mathrm{eff}(t)$ implies that nodes $i$ and $j$ are in the same limit opinion cluster (i.e., $x^i = x^j$).
\end{thm}

Unlike in our adaptive-confidence HK model, $\lim\limits_{t\to\infty} G_\mathrm{eff}(t)$ may not exist in our adaptive-confidence DW model. When the limit does exist, we refer to $\lim\limits_{t\to\infty} G_\mathrm{eff}(t)$ 
as the \emph{limit effective graph}.

For completeness, we now state \cref{thm:effgraph_baseline_DW}, which guarantees the {almost-sure} convergence of the effective graphs as $t \rightarrow \infty$ in the baseline DW model.
In \cref{appendix:proof_eff_baseline_DW}, we prove \cref{thm:effgraph_baseline_DW} by first proving 
\cref{lemma:baseDW_eff_edges} and \cref{lemma:baseDW_less_than_c}.
Because the baseline DW model has a time-independent confidence bound $c$, our proof of \cref{thm:effgraph_baseline_DW} uses different ideas than our proof of \cref{thm:effgraph_DW}.

\begin{thm}\label{thm:effgraph_baseline_DW}
Consider the baseline DW model (with update rule \cref{eq:DW_update_rule}).
Almost surely, the effective graph $G_\mathrm{eff}(t)$ is eventually constant with respect to time. 
That is, there is almost surely a time $T$ such that $G_\mathrm{eff}(t) = G_\mathrm{eff}(T)$ for all times $t \geq T$. 
\\ \indent
Furthermore, suppose that the limit effective graph $\lim\limits_{t\to\infty} G_\mathrm{eff}(t)$ exists.
If adjacent nodes $i$ and $j$ have the same limit opinion (i.e., if $x^i = x^j$), then the edge $(i,j)$ is in the limit effective graph. 
Additionally, if the edge $(i,j)$ is in the limit effective graph, then $x^i = x^j$ almost surely.
\end{thm}

\section{Details of our numerical simulations}\label{sec:simulations}

We now discuss the details of our numerical simulations of our adaptive-confidence HK and DW models.

\subsection{Network structures}\label{sec:simulations_networks}

We first simulate our adaptive-confidence HK and DW models on complete graphs, and we then examine our models on a variety of networks and study how different network structures affect their behaviors.
We simulate our adaptive-confidence HK model on synthetic networks that we generate using random-graph models, and we simulate both adaptive-confidence BCMs on networks from empirical data. Because of computational limitations, we consider larger networks for the adaptive-confidence HK model than for the adaptive-model DW model.

We simulate our adaptive-confidence HK model on a complete graph, $G(N,p)$ Erd\H{o}s--R\'enyi (ER) random graphs, and two-community stochastic-block-model (SBM) random graphs. In each case, we consider graphs with 1000 nodes. We also simulate our adaptive-confidence HK model on social networks from the {\sc Facebook100} data set \cite{red2011, Facebook100}.

A $G(N,p)$ ER graph has $N$ nodes and independent probability $p$ of an edge between each pair of distinct nodes \cite{newman2018}. When $p = 1$, this yields a complete graph. We consider $G(N, p)$ graphs with $p \in \{0.1, 0.5\}$ to vary the sparsity of the graphs while still yielding connected graphs for our simulations. 
All of the ER graphs in our simulations are connected.

To determine how a network with an underlying community structure affects the dynamics of our adaptive-confidence HK model, we consider undirected SBM networks \cite{newman2018} with a $2 \times 2$ block structure in which each block corresponds to an ER graph. To construct these SBMs, we partition a network into two sets of nodes; one set (which we denote by \textrm{A}) has 75\% of the nodes, and the other set (which we denote by \textrm{B}) has the remaining 25\% of the nodes. Our choice is inspired by the two-community SBM that was considered in \cite{kureh2020}.
We define a symmetric edge-probability matrix
\begin{equation}
    P = \begin{bmatrix}
    P_\textrm{AA} & P_\textrm{AB} \\ P_\textrm{AB} & P_\textrm{BB}
    \end{bmatrix} \,,
\end{equation}
where $P_\textrm{AA}$ and $P_\textrm{BB}$ are the probabilities of an edge between two nodes within the sets \textrm{A} and \textrm{B}, respectively, and $P_\textrm{AB}$ is the probability of an edge between a node in set \textrm{A} and a node in set \textrm{B}. In our simulations, $P_\textrm{AA} = P_\textrm{BB} = 1$ and $P_\textrm{AB} = 0.01$. 

In addition to synthetic networks, we also simulate our adaptive-confidence HK model on several real-world networks. For each network, we use the largest connected component\footnote{A connected component \cite{newman2018} of an undirected network $G$ is a maximal subgraph with a path between each pair of nodes.} 
(LCC). In \cref{tab:network_sizes}, we give the numbers of nodes and edges in the LCCs of these networks, which are social networks from the 
{\sc Facebook100} data set \cite{red2011, Facebook100}. 
In each of these networks, the nodes are the Facebook pages of individuals at a university and the edges encode Facebook ``friendships'' between individuals in a one-day snapshot of the network from fall 2005 \cite{red2011, Facebook100}.
The numbers of nodes in the LCCs of the examined Facebook networks range from 962 to 14,917.

For our adaptive-confidence DW model, we examine a complete graph and one real-world network.
We simulate our adaptive-confidence DW model on a 100-node complete graph, which is one tenth of the size of the complete graph that we consider for our adaptive-confidence HK model. 
We use this smaller size because of computational limitations. 
Our simulations of our adaptive-confidence DW model on a 100-node complete graph frequently reach our ``bailout time'' (see \cref{sec:simulations_specs} and \cref{tab:DW_bailout}) for small initial confidence bounds.
We also simulate our adaptive-confidence DW model on the LCC of the real-world {\sc NetScience} network of coauthorships between researchers in network science \cite{netscience}.

\begin{table}[ht]
\centering
\caption{\label{tab:network_sizes} The real-world networks on which we simulate our adaptive-confidence BCMs. For each network, we use the largest connected component and indicate the numbers of nodes and edges in that component.}
\begin{tabular}{l | l | l | l}
\hline\hline
Network    & Number of Nodes & Number of Edges & Model \\ \hline
{\sc NetScience} & 379            & 914             & DW              \\
{\sc Reed}       & 962            & 18,812          & HK              \\
{\sc Swarthmore} & 1657           & 61,049          & HK              \\
{\sc Oberlin}    & 2920           & 89,912          & HK              \\
{\sc Pepperdine} & 3440           & 152,003         & HK              \\
{\sc Rice}       & 4083           & 184,826         & HK              \\
{\sc UC Santa Barbara}       & 14,917          & 482,215         & HK              \\ 
\hline\hline
\end{tabular}
\end{table}

\subsection{Simulation specifications}\label{sec:simulations_specs}

In \cref{tab:parameters}, we indicate the values of the model parameters that we examine in our simulations of our BCMs. 
The {BCM} parameters are
the confidence-increase parameter $\gamma$, the confidence-decrease parameter $\delta$, the initial confidence bound $c_0$,
and (for the adaptive-confidence DW model only) the compromise parameter $\mu$.
For both our HK and DW models, the parameter pair $(\gamma,\delta) = (0,1)$ corresponds to the associated baseline BCM.

Our BCM simulations include randomness both from the initial opinions of the nodes and from generating specific networks in random-graph ensembles.
The adaptive-confidence DW model also has randomness from the selection of nodes at each time step.
We use Monte Carlo simulations to mitigate the effects of noise.
For each parameter set of a random-graph model (i.e., the ER and SBM graphs), we generate 5 graphs.
Additionally, for each graph, we generate 10 sets of initial opinions uniformly at random and reuse these sets of opinions for all BCM parameter values.

\begin{table}[ht]
\centering
\caption{\label{tab:parameters} The {BCM} parameter values that we examine in simulations of our adaptive-confidence BCMs.
We consider more parameter values for complete graphs than for the other networks. We consider all of the indicated values for complete graphs, and we consider values without the asterisk ($^*$) for the ER, SBM, and real-world networks.
}
\begin{tabular}{ll}
\hline\hline
\textbf{Model} & \textbf{{BCM} Parameters} \\ \hline
Adaptive-Confidence HK &
{$\!\begin{aligned}
\gamma &\in \{0, 0.0001^*, 0.0005^*, 0.001, 0.005, 0.01, 0.05, 0.1^*\}\\ 
    \delta &\in \{0.01^*, 0.1^*, 0.5, 0.9, 0.95, 0.99, 1\}\\
c_0 &\in \{0.02, 0.03, \dots, 0.19, 0.20, 0.30, 0.40, 0.50\}\\ 
\end{aligned}$} \\
\hline
Adaptive-Confidence DW & 
{$\!\begin{aligned}
 \gamma &\in \{0.1, 0.3, 0.5^*\}\\ 
	\delta &\in \{0.3^*, 0.5, 0.7^*\}\\
c_0 &\in \{0.1, 0.2, 0.3, 0.4, 0.5, 0.6, 0.7, 0.8, 0.9\}\\ 
	\mu &\in \{0.1, 0.3, 0.5\}
\end{aligned}$} \\
\hline\hline
\multicolumn{2}{l}{
\footnotesize{$^*$ We consider these parameter values only for complete graphs.}
}
\end{tabular}
\end{table}

For our numerical simulations, we need a stopping criterion, as it can potentially take arbitrarily long for nodes to reach their limit opinions in a BCM simulation. We consider the effective graph $G_\mathrm{eff}(t)$, which we recall (see \cref{sec:theory}) is the subgraph of the original graph $G$ with edges only between nodes that are receptive to
each other at time $t$. In our simulations, each of the connected components of $G_\mathrm{eff}(t)$ is an ``opinion cluster'' $K_r(t)$ at time $t$. Our stopping criterion checks that the maximum difference in opinions between nodes in the same opinion cluster is less than some tolerance. That is,
\begin{equation}  \label{eq:stopping_criterion}
    \max \{ |x_i(t) - x_j(t)| \text{ such that } i,j \in K_r(t) \text{ for some } r\} < \text{tolerance}\,.
\end{equation}
We use a tolerance value of $1 \times 10^{-6}$ for our adaptive-confidence HK model. Because of computational limitations, we use a tolerance value of $0.02$ for our adaptive-confidence DW model.
We refer to the time $T_f$ at which we reach our stopping criterion as the ``convergence time'' of our simulations.
In our simulations, the ``final effective graph'' is the effective graph at the convergence time (i.e., the time $T_f$ that a simulation reaches our stopping criterion). 
We refer to the connected components of the final effective graph as the ``final opinion clusters'' of a simulation; they approximate the limit opinion clusters.

Our theoretical results about effective graphs inform our stopping criterion.
\cref{thm:effgraph_HK} and \cref{thm:effgraph_baseline_HK} give theoretical guarantees for our adaptive-confidence HK model and the baseline HK model, respectively, that eventually the only edges of an effective graph are those between adjacent nodes in the same limit opinion cluster.
\cref{thm:effgraph_DW} and \cref{thm:effgraph_baseline_DW}
give similar but weaker guarantees for our adaptive-confidence DW model and the baseline DW model.
Consequently, if one of our simulations runs for sufficiently many time steps, its final opinion clusters are a good approximation of the limit opinion clusters. 
However, instead of imposing a set number of time steps for our simulations, we use a tolerance value as a proxy to determine a ``sufficient'' number of time steps.

The final and limit opinion clusters {in our models} may not be the same, as our choice of tolerance values can lead to simulations stopping before we can determine their limit opinion clusters. 
Additionally, if two distinct connected components in an effective graph converge to the same limit opinion value, 
the nodes in those connected components are in the same limit opinion cluster.
However, the sets of nodes that constitute these connected components are distinct final opinion clusters.
In practice, our simulations are unlikely to have distinct opinion clusters that converge to the same opinion in the infinite-time limit. Therefore, for small tolerance values, our final opinion clusters are a good approximation of the limit opinion clusters. 

To ensure that our simulations stop after a reasonable amount of time, we use
a bailout time of $10^6$ time steps. Our simulations of the adaptive-confidence HK model in the present paper never reach this bailout time. 
However, our simulations of the adaptive-confidence DW model frequently reach the bailout time for small values of $c_0$.
See \cref{sec:DW_complete} and \cref{tab:DW_bailout}.

\subsection{Quantifying model behaviors}\label{sec:simulations_quantities}
In our numerical simulations, we investigate the convergence time and characterize the final opinions.
To examine the convergence time, we record the number $T_f$ of time steps that it takes for simulations to reach our stopping criterion.
To characterize opinions, we determine whether there is consensus or opinion fragmentation (which we define shortly), quantify the opinion fragmentation using Shannon entropy, and examine the numbers of nodes and edges in each opinion cluster.

In models of opinion dynamics, it is common to investigate whether or not nodes eventually reach a consensus (i.e., arrive at the same opinion) \cite{noorazar-review}.
In practice, to determine whether a simulation reaches a consensus state,
we use notions of ``major'' and ``minor'' clusters.
Consider a 1000-node network in which 998 nodes have one steady-state opinion but the remaining 2 nodes have a different steady-state opinion. 
In applications, it does not seem appropriate to characterize this situation as opinion polarization or fragmentation. 
Therefore, we use notions of major and minor opinion clusters \cite{lorenz2008, laguna2004}, which we characterize in an ad hoc way. 
We define a ``major'' opinion cluster as a final opinion cluster with strictly more than 1\% of the nodes of a network. A final opinion cluster that is not a major cluster is a ``minor'' cluster.
We say that a simulation that results in one major cluster yields a ``consensus'' state and that a simulation that results in at least two major clusters yields ``opinion fragmentation'' (i.e., a ``fragmented'' state).\footnote{In studies of opinion dynamics, it is common to use the term ``fragmentation'' to refer to situations with three or more opinion clusters and to use the term ``polarization'' to refer to situations with exactly two opinion clusters. In the present paper, it is convenient to quantify any state other than consensus as a fragmented state.}  
We track the numbers and sizes of all major and minor clusters, and we use all clusters (i.e., both major and minor clusters) to quantify opinion fragmentation.

There are many ways to quantify opinion fragmentation \cite{bramson2016, musco2021, adams2022}.
We distinguish situations in which the final opinion clusters (major or minor) are of similar sizes from situations in which these clusters have a broad range of sizes. 
Following Han et al.~\cite{han2020}, we calculate Shannon entropy to quantify opinion fragmentation.\footnote{See~\cite{grace2022} for another study that followed \cite{han2020} by calculating Shannon entropy to help quantify opinion fragmentation in a BCM.}
At some time $t$, suppose that there are $R$ opinion clusters, which we denote by $K_r(t)$ for $r \in \{1, 2, \ldots, R\}$. We refer to the set $\{K_r(t)\}_{r = 1}^R$, which is a partition of the set of nodes of a network, as an ``opinion-cluster profile''. The fraction of nodes in opinion cluster $K_r(t)$ is $|K_r(t)|/N$, where $N$ is the number of nodes of a network. The Shannon entropy $H(t)$ of an opinion-cluster profile is
\begin{equation} \label{eq:entropy}
    H(t) = - \sum\limits_{r=1}^R \frac{|K_r(t)|}{N} \ln \left( \frac{|K_r(t)|}{N} \right) \, .
\end{equation}
We calculate $H(T_f)$, which is the Shannon entropy of the opinion-cluster profile of final opinion clusters at convergence time. Computing Shannon entropy allows us to use a scalar value to quantify the distribution of opinion-cluster sizes, with larger entropies indicating greater opinion fragmentation. The Shannon entropy is larger when there are more opinion clusters. Additionally, for a fixed number $R$ of opinion clusters, the Shannon entropy is larger when the opinion clusters are evenly sized than when the sizes are heterogeneous. When comparing two opinion-cluster profiles, we consider both the numbers of major clusters and the Shannon entropies. When there are sufficiently few minor clusters, we expect the number of major clusters to follow the same trend as the Shannon entropy.

To examine the structure of final opinion clusters, we study the properties of final effective graphs.
Our theoretical results allow the possibility that some adjacent nodes 
converge to the same opinion without being mutually receptive.
We observe this phenomenon in our numerical simulations.
(See \cref{sec:results} for more discussion.)
To quantify this behavior, we calculate a weighted average of the fractions of edges (which we call the ``weighted-average edge fraction'') that are in each opinion cluster of the final effective graph.
In an opinion-cluster profile $\{K_i(t)\}_{r = 1}^R$, let $E(r)$ denote the set of edges of the original graph $G$ between nodes in opinion cluster $r$ and let $E_\mathrm{eff}(t,r)$ denote the set of edges of the effective graph $G_\mathrm{eff}(t)$ that are in opinion cluster $r$. 
That is,
\begin{align*}
	E(r) &= \{(i,j) \in E \text{ such that } i,j \in K_r(t)\}\,, \\
	E_\mathrm{eff}(t, r) &= \{(i,j) \in E_\mathrm{eff}(t) \text{ such that } i,j \in K_r(t)\} \, .
\end{align*}
The weighted average of the fractions of edges (i.e., the weighted-average edge fraction) that are in the effective graph for each opinion cluster is
\begin{equation} \label{eq:weighted_avg}
    W(t) = \sum\limits_{\substack{r = 1\\E(r)\neq 0} }^R 
    \left(\frac{|K_r(t)|}{N - \ell} \right)
    \left(\frac{|E_\mathrm{eff}(t, r)|}{|E(r)|} \right)\,,
\end{equation}
where $\ell$ is the number of isolated nodes of the effective graph.\footnote{If every component of an effective graph is an isolated node (i.e, $N = \ell$), then one can take either $W(t) = 0$ or $W(t) = 1$.   
In our simulations, this situation never occurred.}
An isolated node is an opinion cluster with $E(r) = 0$.
We are interested in the value of $W(t)$ at the convergence time $T_f$.
Therefore, we calculate the weighted-average edge fraction $W(T_f)$.
If each opinion cluster of an effective graph has all of its associated original edges of $G$, then $W = 1$. 
The value of $W$ is progressively smaller when there are progressively fewer edges between nodes 
in the same opinion cluster of the effective graph. In \cref{fig:G_eff}, we show examples of effective graphs with {$W(T_f) < 1$}.

\begin{figure}
    \centering
    \subfloat[One final opinion cluster in a simulation of our adaptive-confidence HK model that does not reach consensus on a 1000-node complete graph with $\gamma = 0.001$, $\delta = 0.5$, and $c_0 = 0.1$.]{\includegraphics[width=0.4\textwidth]{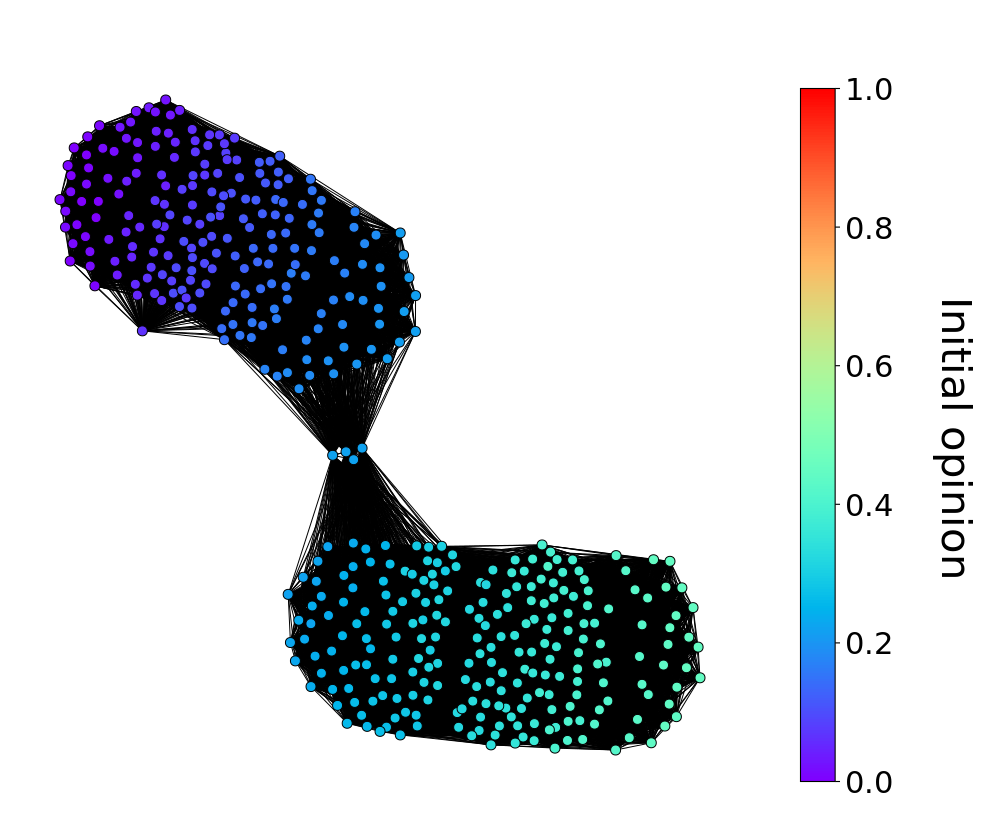}}
    \hspace{1cm}
    \subfloat[The final effective graph in a simulation of our adaptive-confidence DW model that reaches consensus on a 100-node complete graph with $\gamma = 0.1$, $\delta = 0.5$, $c_0 = 0.1$, and $\mu = 0.1$.]{\includegraphics[width=0.4\columnwidth]{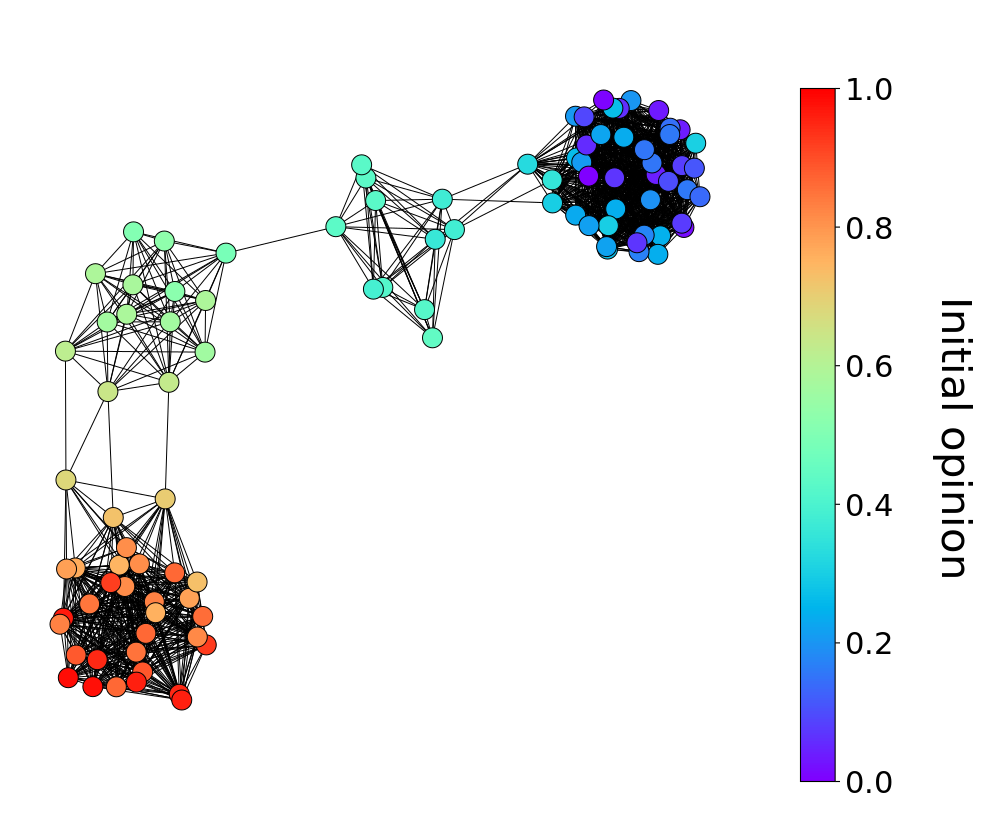}}
    \caption{Examples of final effective graphs with $W(T_f) < 1$. We color the nodes by their initial opinion values.}
    \label{fig:G_eff}
\end{figure}

\section{Results of our numerical simulations}\label{sec:results}
We now present the results of numerical simulations of our adaptive-confidence BCMs. 
We consider various values of the BCM parameters, which are the confidence-increase parameter $\gamma$, the confidence-decrease parameter $\delta$, the initial confidence bound $c_0$, and (for the adaptive-confidence DW model only) the compromise parameter $\mu$.
We use the BCM-parameter values in \cref{tab:parameters}, including the values that correspond to the baseline models (i.e., $(\gamma,\delta) = (0,1)$).
Our code and plots are available in our \href{https://gitlab.com/graceli1/Adaptive-Confidence-BCM}{code repository}.

As we described in \cref{sec:simulations_quantities}, for both of our adaptive-confidence BCMs, we examine 
the number of major clusters (which we use to determine whether a simulation reaches a consensus state or a fragmented state),
the number of minor clusters,
the Shannon entropy $H(T_f)$ (see equation \cref{eq:entropy}), 
the weighted-average edge fraction $W(T_f)$ (see equation \cref{eq:weighted_avg}), and the convergence time $T_f$.
When the Shannon entropy and the number of major clusters follow similar trends, we only show results for the number of major clusters, as it is easier to interpret than the entropy. 
{To avoid drowning readers with too much repetition, we include some of our plots of our adaptive-confidence HK model and adaptive-confidence DW model in \cref{appendix:HK_simulations} and \cref{appendix:DW_simulations}, respectively.
Furthermore, we do not show plots for all of our numerical results; the omitted plots are available in our \href{https://gitlab.com/graceli1/Adaptive-Confidence-BCM}{code repository}.}

Our simulation results and theoretical results about effective graphs complement each other.
In \cref{thm:convergence_HK}, we proved for our adaptive-confidence HK model that all dyadic confidence bounds converge either to $0$ or to $1$. 
We also proved that the dyadic confidence bounds for node pairs in different limit opinion clusters must converge to $0$.
However, we have not proven
whether or not the dyadic confidence bounds for nodes in the same limit opinion cluster converge to $1$, 
so it is possible for such confidence bounds to converge to $0$. 
(We first mentioned this point in \cref{sec:conf_bound_analysis_HK}.)
If a dyadic confidence bound convergences to $0$, then the corresponding edge is absent in the limit effective graph (which is guaranteed to exist by \cref{thm:effgraph_HK}).
Our numerical simulations suggest that a final opinion cluster can include adjacent nodes  
whose dyadic confidence bound converges to $0$.
In particular, in many simulations, we observe that the 
weighted-average edge fraction $W(T_f) < 1$, which corresponds to absent edges of the final effective graph between nodes that are in the same final opinion cluster.
For our adaptive-confidence DW model, we prove analogous theoretical results (see \cref{thm:convergence_DW} and \cref{thm:effgraph_DW}) and we again observe simulations with $W(T_f) < 1$.


\subsection{Adaptive-confidence HK model}\label{sec:HK_results}
\subsubsection{Summary of our simulation results}\label{sec:HK_results_summary}

\newcolumntype{t}{X}
\newcolumntype{q}{>{\hsize=0.27\hsize}X}
\renewcommand\tabularxcolumn[1]{m{#1}} 

\begin{table}[!ht]
\centering
\caption{\label{tab:HK_trends} Summary of the observed trends in our adaptive-confidence HK model. Unless we note otherwise, we observe these trends for the complete graph, all examined random-graph models, and all examined real-world networks.
}
\noindent
\begin{tabularx}{\textwidth}{qt}
\hline\hline
Quantity & Trends \\\hline
Convergence Time
&
$\bullet$ For fixed values of the initial confidence bound $c_0$,
our adaptive-confidence HK model tends to converge more slowly than the baseline HK model.
\smallskip \newline 
$\bullet$ When our simulations reach a consensus state, for fixed values of $c_0$ and the confidence-increase parameter $\gamma$, our model converges faster when
the confidence-decrease parameter $\delta = 1$ 
than when $\delta \leq 0.9$.
For $\delta \in \{0.95, 0.99\}$, the convergence time transitions from the $\delta \leq 0.9$ behavior to the $\delta = 1$ behavior as we increase $c_0$.
\\ \hline
Opinion \newline Fragmentation
& 
$\bullet$ Our adaptive-confidence HK model yields consensus for $\gamma \geq 0.05$. \smallskip \newline
$\bullet$ For fixed values of $c_0$,
our adaptive-confidence HK model tends to yield 
fewer major clusters than the baseline HK model. 
When we fix the other BCM parameters, the number of major clusters tends to decrease  
as either (1) we decrease $\delta$ or (2) we increase $\gamma$.
\smallskip \newline
$\bullet$ For our synthetic networks, we observe that the trends in Shannon entropy match the trends in the numbers of major clusters and that our adaptive-confidence HK model tends to yield less opinion fragmentation than the baseline HK model.$^*$
\smallskip \newline
$\bullet$ 
For the baseline HK model, as we increase $c_0$, the number of major clusters tends to decrease. In our adaptive-confidence HK model, for simulations without consensus and for sufficiently large $\gamma$,
the number of major clusters tends to first increase and then decrease as we increase $c_0$. 
\\ \hline
$W(T_f)$ 
& 
$\bullet$ When $\delta = 1$, both our adaptive-confidence HK model and the baseline HK model yield $W(T_f) = 1$. \smallskip \newline
$\bullet$ For fixed $\gamma$ and $c_0$, as we increase $\delta$, the weighted-average edge fraction $W(T_f)$ tends to decrease. \smallskip \newline
$\bullet$ For simulations that reach a consensus state, we observe two qualitative behaviors for $W(T_f)$: for $\delta \leq 0.9$, we observe that $W(T_f) <  1$ and that there is a seemingly linear relationship between $W(T_f)$ and $c_0$; for $\delta = 1$, we observe that $W(T_f) = 1$.
Additionally, for $\delta \in \{0.95, 0.99\}$, the behavior of $W(T_f)$ transitions from the $\delta \leq 0.9$ behavior to the $\delta = 1$ behavior as we increase $c_0$.\\
\hline\hline
\multicolumn{2}{>{\hsize=1.4\hsize}X}{
\footnotesize{$^*$ For the {\sc Facebook100} networks, we do not observe this trend, seemingly because of the large numbers of minor clusters (which are incorporated in our calculation of Shannon entropy in \cref{eq:entropy}) for these networks.}
}
\end{tabularx}
\end{table}

For our adaptive-confidence HK model, all of our numerical simulations reach
a consensus state for $c_0 \geq 0.3$.
(As we discussed in \cref{sec:simulations_quantities}, a consensus state has exactly one major opinion cluster.)
We show our simulation results for $c_0 \in \{0.02, 0.03, \ldots, 0.20 \}$; we include the results for the other
examined values of $c_0$ 
(see \cref{tab:parameters}) in our \href{https://gitlab.com/graceli1/Adaptive-Confidence-BCM}{code repository}.
We examine the numbers of major and minor clusters, the Shannon entropy $H(T_f)$ (see equation \cref{eq:entropy}), the weighted-average edge fraction $W(T_f)$ (see equation \cref{eq:weighted_avg}), and the convergence time $T_f$. 
We plot each of these quantities versus the initial confidence bound $c_0$.
For each value of the confidence-increase parameter $\gamma$, we generate one plot; each plot has one curve for each value of the confidence-decrease parameter $\delta$.
Each point in our plots is a mean of our numerical simulations for the associated values of the BCM parameter set ($\gamma$, $\delta$, and $c_0$).
We also show one standard deviation from the mean.
For our simulations on a complete graph and the {\sc Facebook100} networks, each point in our plots is a mean of 10 simulations (from 10 sets of initial opinions). For our simulations on $G(N,p)$ ER random graphs and SBM random graphs, each point in our plots is a mean of 50 simulations (from 5 random graphs that each have 10 sets of initial opinions).
We include all plots --- including those that we do not present in the present section (i.e., \cref{sec:HK_results}) or in \cref{appendix:HK_simulations} --- in our 
\href{https://gitlab.com/graceli1/Adaptive-Confidence-BCM}{code repository}.
In \cref{tab:HK_trends}, we summarize the trends that we observe in our simulations.

In all of our simulations of our adaptive-confidence HK model, we observe that $\gamma \geq 0.001$ results in fewer major clusters than in the baseline HK model. 
For a fixed initial confidence bound $c_0$, our adaptive-confidence HK model tends to yield fewer major opinion clusters and less opinion fragmentation as either (1) we increase $\gamma$ for fixed $\delta$ or (2) we decrease $\delta$ for fixed $\gamma$.
Intuitively, one expects larger values of $\gamma$ to encourage consensus because a larger $\gamma$ entails a larger increase in a dyadic confidence bound after a positive interaction.
Less intuitively, smaller values of $\delta$, which entail a larger decrease in a dyadic confidence bound after a negative interaction, also seem to encourage consensus.
In our adaptive-confidence HK model, we update opinions synchronously, with each node interacting with of all its neighboring nodes at each time. 
When two adjacent nodes are mutually unreceptive, their dyadic confidence bound decreases.
Given the synchronous updates in our adaptive-confidence HK model, we hypothesize that small values of $\delta$ yield
a faster decrease than large values of $\delta$ in the dyadic confidence bound between mutually unreceptive nodes.
For small values of $\delta$, pairs of nodes may quickly become mutually unreceptive and remain mutually unreceptive.
Meanwhile, individual nodes can be receptive to (and thus average) fewer ``conflicting'' opinions\footnote{When the neighbors to which a node is receptive have large differences in opinions with each other, we say that that node is receptive to ``conflicting'' opinions.\label{footnote:conflicting_opinion}}, possibly aiding in reaching a consensus.

In the baseline HK model, the number of major opinion clusters tends to decrease as we increase $c_0$.
For intermediate values of $\gamma$ (e.g., $\gamma \in \{0.005, 0.01\}$ for a complete graph; see panels (E) and (F) in \cref{fig:HK_complete_clusters}),
we do not observe this trend in our adaptive-confidence HK model. 
Instead, as we increase $c_0$, we observe an increase and then a decrease in the number of major clusters; 
unlike for the baseline HK model, small values of $c_0$ tend to promote more consensus (i.e., it tends to yield fewer major clusters).
For smaller values of $c_0$, it seems that nodes tend to be receptive to fewer nodes early in a simulation, so fewer opinions can influence them.
Therefore, for these values of $c_0$,
nodes that are mutually receptive may quickly approach a consensus when $\gamma$ is sufficiently large.
Our calculations of the weighted-average edge fraction $W(T_f)$ support this hypothesis.
For sufficiently large $\gamma$, small values of $c_0$ yield small values of $W(T_f)$, indicating that 
final opinion clusters tend to have many pairs of mutually unreceptive nodes.

\subsubsection{A complete graph}\label{sec:HK_complete}
We now discuss the simulations of our adaptive-confidence HK model on a complete graph. 
We summarize the observed trends in \cref{tab:HK_trends}.

\begin{figure} [ht]
\label{fig:HK_complete_clusters}
\includegraphics[width=\columnwidth]{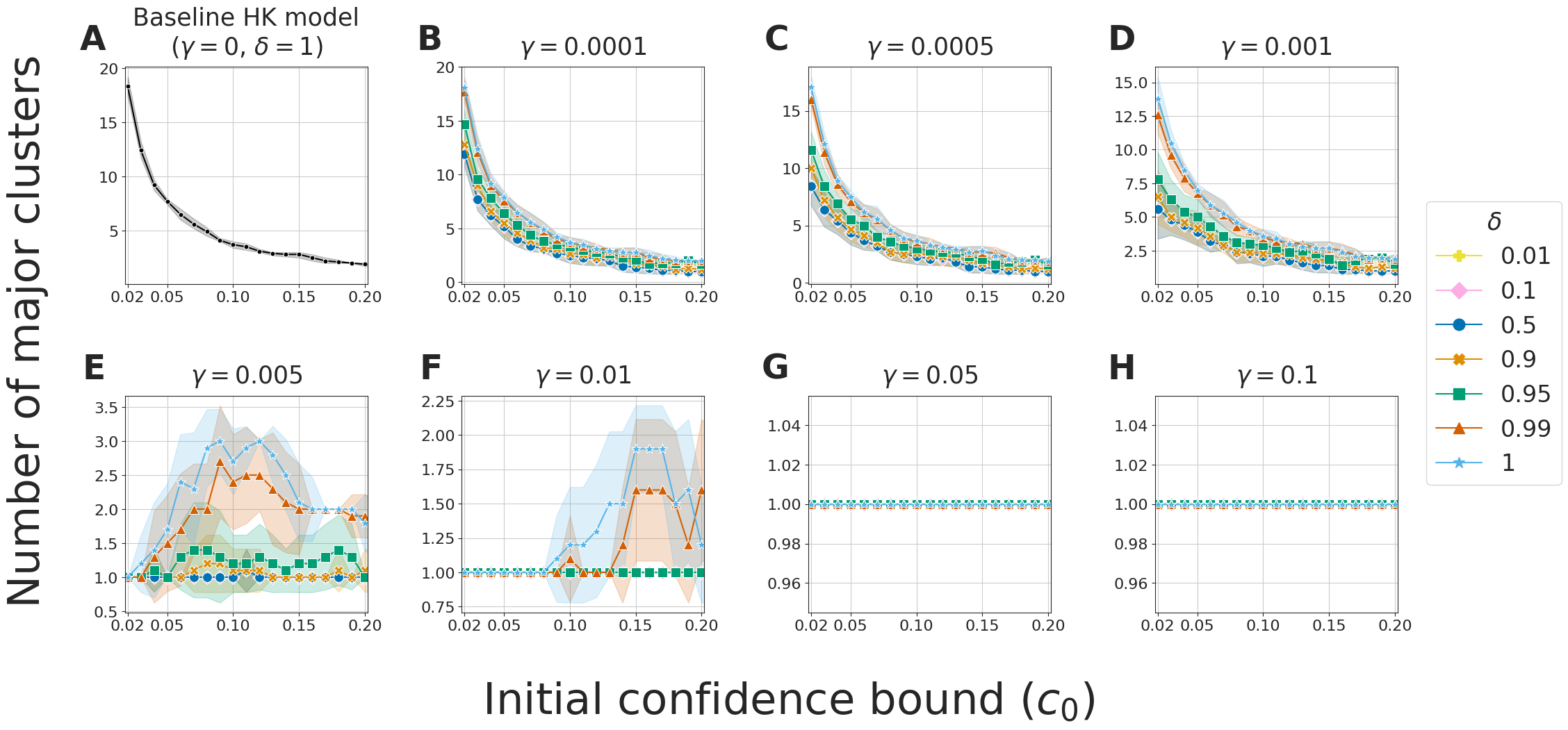}
\caption{The numbers of major clusters in simulations of our adaptive-confidence HK model on a 1000-node complete graph for various combinations of the BCM parameters $\gamma$, $\delta$, and $c_0$. 
In this and subsequent figures, we plot the mean value of our simulations for each set of BCM parameters. The bands around each curve indicate one standard deviation around the mean values.
For clarity, in this figure and in subsequent figures, the vertical axes of different panels have different scales.}
\end{figure}

In \cref{fig:HK_complete_clusters}, we observe for a 1000-node complete graph that our adaptive-confidence HK model yields fewer major clusters (i.e., it encourages more consensus) than the baseline HK model for a wide range of BCM parameter values. Our adaptive-confidence HK model always reaches a consensus state for $\gamma \geq 0.05$. In our simulations that do not reach consensus, we tend to observe progressively more major clusters and more opinion fragmentation as either (1) we decrease $\gamma$ for fixed $\delta$ and $c_0$ or (2) we increase $\delta$ for fixed $\gamma$ and $c_0$. 
For the baseline HK model and our adaptive-confidence HK model with small values of $\gamma$ (specifically, $\gamma \in \{0.0001, 0.0005, 0.001\}$), the number of major opinion clusters tends to decrease as we increase $c_0$.
We do not observe this trend for larger values of $\gamma$ (specifically, $\gamma \in \{0.005, 0.01\}$). 
Instead, for these values of $\gamma$, small values of $c_0$ tend to promote more consensus. For example, simulations always reach a consensus state when $\gamma = 0.01$ and $c_0 \leq 0.08$.
At the end of \cref{sec:HK_results_summary}, we 
discussed our hypothesis behind this observation.

We observe very few minor clusters in our simulations of our adaptive-confidence HK model on a 1000-node complete graph.
For each value of the BCM parameter set $(\gamma, \delta, c_0)$, the mean number of minor clusters in our 10 simulations is bounded above by $1$.
Because there are few minor clusters, the Shannon entropy (which accounts for both major clusters and minor clusters; see equation \cref{eq:entropy}) and the number of major clusters follow similar trends for a 1000-node complete graph.

\begin{figure} [ht]
\label{fig:HK_complete_WT}
\includegraphics[width=\columnwidth]{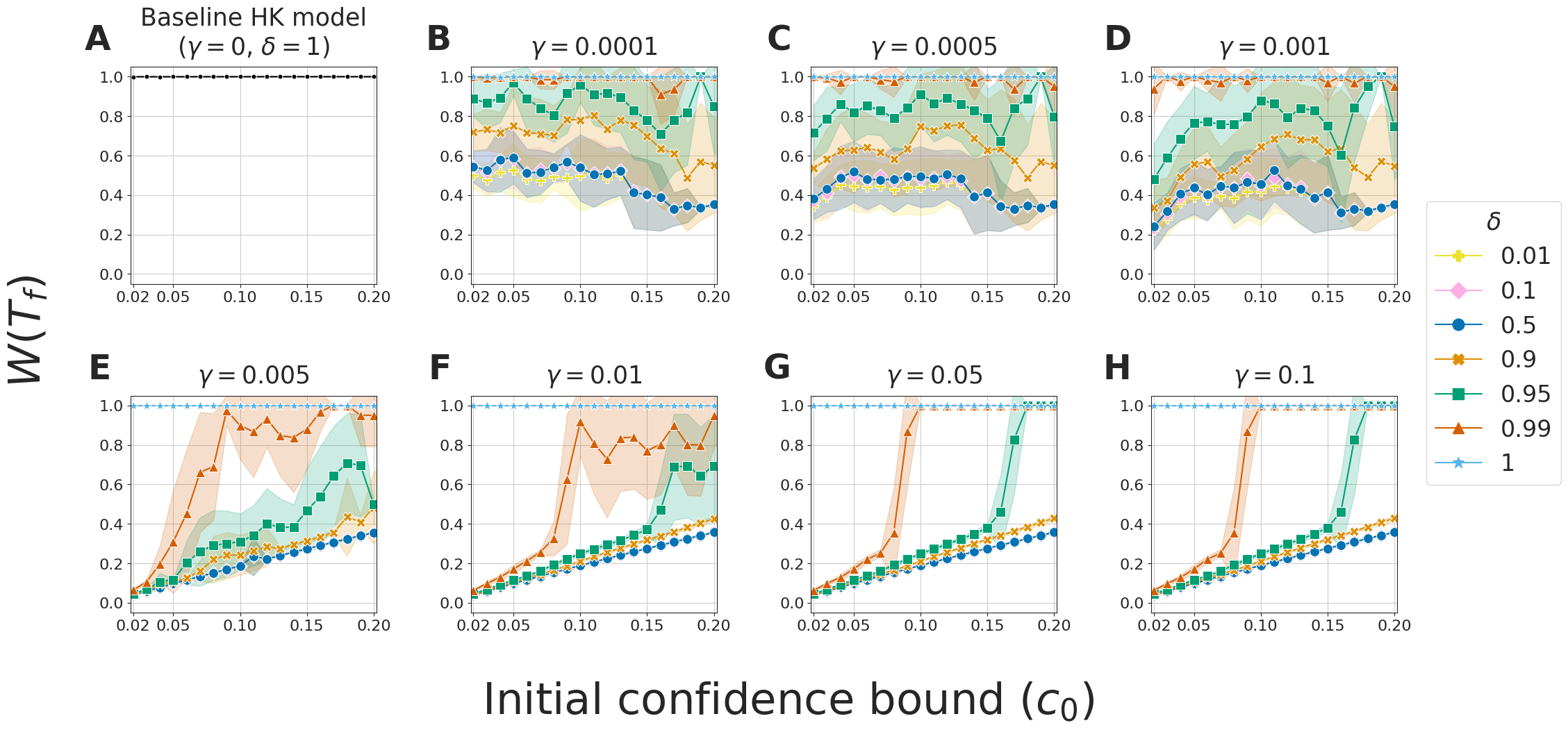}
\caption{
The weighted-average edge fraction $W(T_f)$ (see equation \cref{eq:weighted_avg})
in simulations of our adaptive-confidence HK model on a 1000-node complete graph for various combinations of the BCM parameters $\gamma$, $\delta$, and $c_0$.
}
\end{figure}

In \cref{fig:HK_complete_WT}, we show $W(T_f)$
(see equation \cref{eq:weighted_avg}), which is the weighted average of the fractions of edges in the opinion clusters of the final effective graph.
When $\delta = 1$, both our adaptive-confidence HK model and the baseline HK model yield $W(T_f) = 1$. This indicates that all final opinion clusters (i.e., the connected components of the effective graph at time $T_f$) are complete subgraphs (i.e., cliques).
By contrast, in our adaptive-confidence HK model, when $\delta < 1$ and for a wide range of the other BCM parameters, we observe that $W(T_f) < 1$. This indicates that some nodes that are adjacent in the graph $G$ and in the same final opinion cluster do not have an edge between them in the final effective graph $G_\mathrm{eff}(T_f)$. 
The nodes in these dyads are thus not receptive to each other (and hence do not influence each other's opinions),
but they nevertheless converge to the same opinion. 
When $\delta$ is sufficiently small (specifically, $\delta \leq 0.9$), we observe that our adaptive-confidence HK model can reach a consensus with $W(T_f) < 1$.
For sufficiently large values of $\gamma$ (specifically, $\gamma \geq 0.05$), even though the nodes in some dyads are not receptive to each other, most nodes (at least 99\% of them, based on our definition of major cluster) still converge to the same final opinion and hence reach a consensus.

For fixed values of $\gamma$ and $c_0$, we observe that $W(T_f)$ tends to decrease as we decrease $\delta$. 
For $\gamma \in \{0.05, 0.1\}$, our simulations always reach a consensus state. In these simulations, for each fixed $\delta$, we observe
that $W(T_f)$ appears to increase monotonically with respect to $c_0$. 
Additionally, for these values of $\gamma$, we observe a transition in $W(T_f)$ as a function of $\delta$. 
For $\delta \leq 0.9$, we observe that $W(T_f) <  1$ and that there is a seemingly linear relationship between $W(T_f)$ and $c_0$. 
When $\delta = 1$, we observe that $W(T_f) = 1$. For $\delta \in \{0.95, 0.99\}$, the behavior of $W(T_f)$ transitions from the $\delta \leq 0.9$ behavior to the $\delta = 1$ behavior as we increase $c_0$. 
This transition between behaviors occurs for smaller $c_0$ for $\delta = 0.99$ than for $\delta = 0.95$.

\begin{figure} [ht]
\label{fig:HK_complete_time}
\includegraphics[width=\columnwidth]{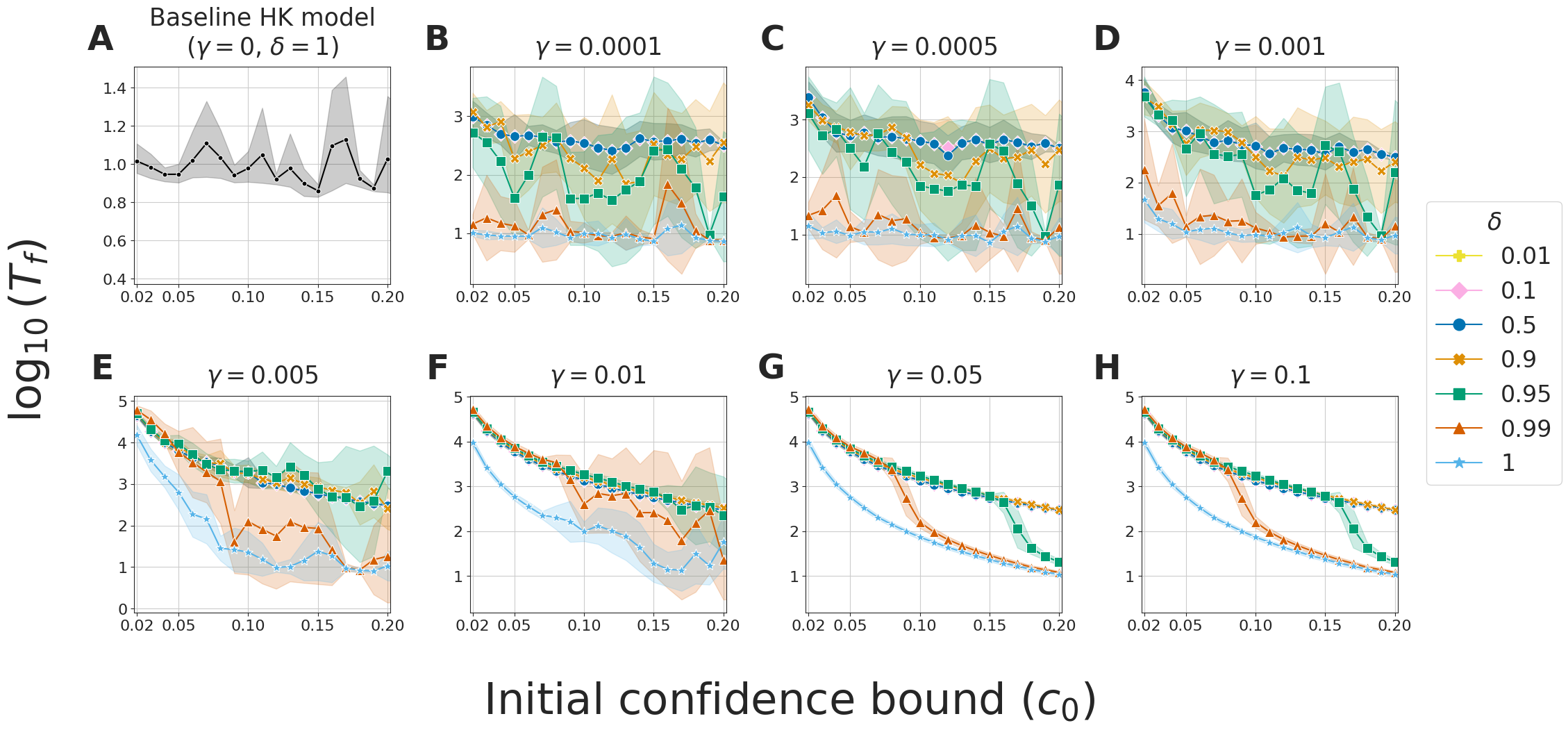}
\caption{The convergence times (in terms of the number of time steps) on a logarithmic scale in simulations of our adaptive-confidence HK model on a 1000-node complete graph for various combinations of the BCM parameters $\gamma$, $\delta$, and $c_0$.}
\end{figure}

In \cref{fig:HK_complete_time}, for fixed $c_0$, we observe that our adaptive-confidence HK model takes longer to converge than the baseline HK model.
For a 1000-node complete graph and fixed BCM parameters (i.e., $\gamma$, $\delta$, and $c_0$), we observe that the logarithm $\log_{10}(T_f)$ of the convergence time $T_f$
for our adaptive-confidence HK model can be up to 4 more than the logarithm of the convergence time for the baseline HK model.
That is, the convergence time can be as much as $10^4$ times larger.
The convergence time tends to increase as either (1) we increase $\gamma$ for fixed $\delta$ and $c_0$ or (2) we decrease $\delta$ for fixed $\gamma$ and $c_0$.
For large values of $\gamma$ (as is especially evident for $\gamma \in \{0.05,0.1\}$), the convergence time decreases with $c_0$. As with $W(T_f)$, for these values of $\gamma$, we observe a transition in the convergence time as a function of $\delta$. 
In \cref{fig:HK_complete_time}, the curves of $\log_{10}(T_f)$ versus $c_0$ for $\delta \leq 0.9$ overlay each other and indicate larger convergence times than the curve for $\delta = 1$. The curves for $\delta = 0.95$ and $\delta = 0.99$ transition from the $\delta \leq 0.9$ behavior to the $\delta = 1$ behavior as we increase $c_0$. 
By contrast, for the baseline HK model and for our model with small values of $\gamma$, we observe no clear pattern between the convergence time and initial confidence bound.
When our adaptive-confidence HK model reaches a consensus state, we observe from the behavior of $W(T_f)$ (see \cref{fig:HK_complete_WT}) and the convergence times (see \cref{fig:HK_complete_time}) in our simulations that there is qualitative transition in the model behavior as we vary $\delta$. 
We are not aware of previous discussions of similar transitions in variants of the HK model.

\subsubsection{Erd\H{o}s--R\'enyi (ER) and two-community stochastic-block-model (SBM) graphs}\label{sec:HK_ER_SBM} 

We now discuss our simulations of our adaptive-confidence HK model on $G(N,p)$ ER random graphs and two-community SBM random graphs. 
As in our simulations on the complete graph, we observe the trends in \cref{tab:HK_trends}.
We briefly discuss how our observations for these random graphs differ from our observations for the complete graph.
We give additional details about our ER simulations in \cref{sec:HK_ER_appx}, and we give additional details about our SBM simulations in \cref{sec:HK_SBM_appx}.

For fixed BCM parameters (namely, $\gamma$, $\delta$, and $c_0$),
we tend to observe fewer major opinion clusters for $G(1000, 0.1)$ graphs than for the 1000-node complete graph. 
Additionally, for small initial confidence bounds (specifically, $c_0 \leq 0.04$), we observe more minor clusters for the $G(1000, 0.1)$ graphs than the $G(1000, 0.5)$ graphs and the 1000-node complete graph. (For $G(1000, 0.1)$ graphs, once we take the mean for each BCM parameter set, we sometimes observe as many as $20$ minor clusters.) The expected mean degree of a $G(N,p)$ ER graph is $p(N - 1)$ \cite{newman2018}. Therefore, for small probability $p$, we expect more nodes to have small degrees. 
For small initial confidence bounds, we hypothesize that many nodes with small degrees 
quickly disconnect to form minor opinion clusters in the effective graph.

For our two-community SBM graphs, for fixed BCM parameters, the numbers of major clusters and Shannon entropies are similar to those for the complete graph. 
Each of our SBM graphs consists of two complete graphs that are joined by a small number of edges (see \cref{sec:simulations_networks}) to yield a two-community structure. It seems that this two-community structure does not significantly impact the simulation results of our adaptive-confidence HK model.

\subsubsection{\textsc{Facebook100} university networks}\label{sec:HK_facebook}
We now discuss our simulations of our adaptive-confidence HK model on {\sc Facebook100} networks (see \cref{sec:simulations_networks}) \cite{red2011, Facebook100}.
We show plots of the number of major clusters and Shannon entropy for the UC Santa Barbara network. In \cref{appendix:reed_clusters}, we show a plot of the number of major clusters for Reed College. In our \href{https://gitlab.com/graceli1/Adaptive-Confidence-BCM}{code repository}, we include all plots for our simulations on {\sc Facebook100} networks (including plots for the other examined quantities and the other four universities in \cref{tab:network_sizes}).

The six {\sc Facebook100} networks (see \cref{tab:network_sizes}) mostly exhibit the same trends.\footnote{The Reed College network is a notable exception.
For small initial confidence bounds $c_0 \leq 0.04$ and fixed values of the BCM parameters, it tends to have more major clusters and larger Shannon entropies than the other five {\sc Facebook100} networks.
We hypothesize that this observation, which we discuss further in \cref{appendix:reed_clusters}, arises from finite-size effects.}
Except for the trends in Shannon entropy, we observe the same trends (see \cref{tab:HK_trends}) for the {\sc Facebook100} networks that we observed for the synthetic networks. 
For the {\sc Facebook100} networks, most of the final opinion clusters for both our adaptive-confidence HK model and the baseline HK model are minor opinion clusters. 
Our simulations on the UC Santa Barbara network yield more minor clusters than our simulations on the other {\sc Facebook100} networks; when $c_0 = 0.02$ and $\delta \leq 0.9$, the UC Santa Barbara network has more than $4000$ minor clusters.
Our calculation of Shannon entropy (see \cref{eq:entropy}) includes contributions from minor opinion clusters. Therefore, because of the large numbers of minor clusters for the {\sc Facebook100} networks, the Shannon entropy and numbers of major opinion clusters follow different trends. For these networks, they thus give complementary views of opinion fragmentation.

\begin{figure} [ht]
\centering
\includegraphics[width=0.8\columnwidth]{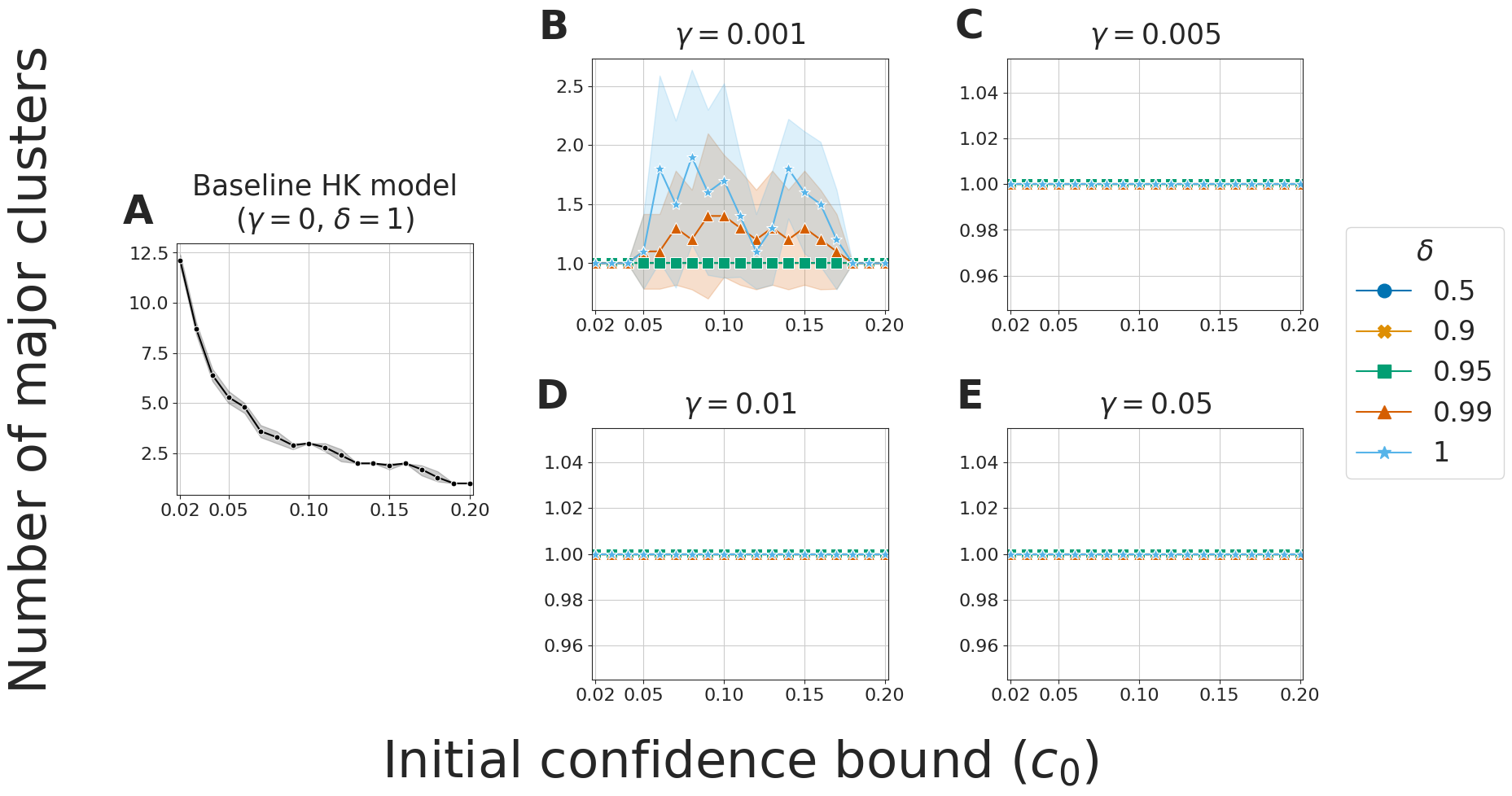}
\caption{The numbers of major clusters in simulations of our adaptive-confidence HK model on the UC Santa Barbara network for various combinations of the BCM parameters $\gamma$, $\delta$, and $c_0$.}
\label{fig:HK_ucsb_clusters}
\end{figure}

\begin{figure} [ht]
\centering
\includegraphics[width=0.8\columnwidth]{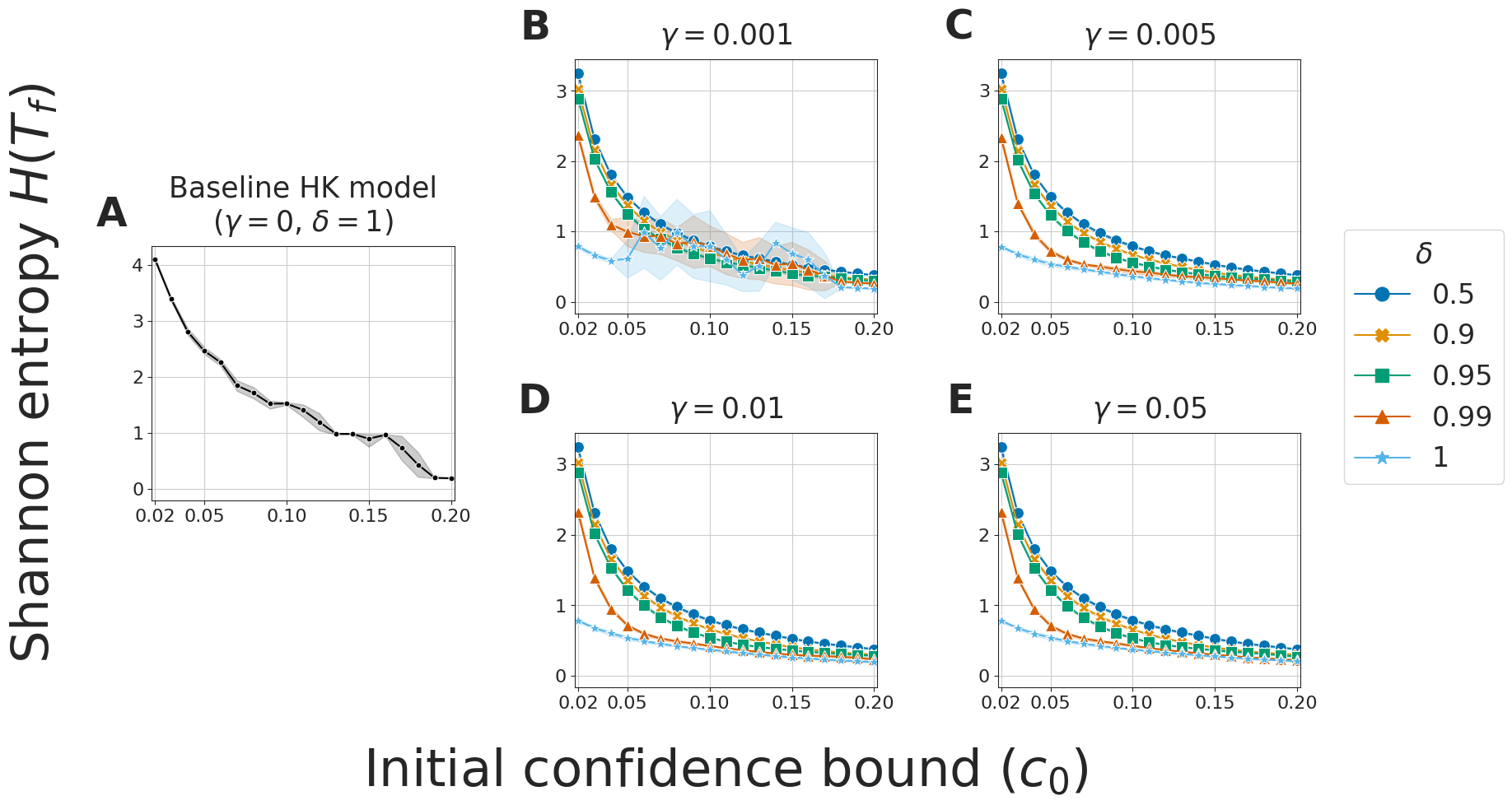}
\caption{The Shannon entropies in simulations of our adaptive-confidence HK model on the UC Santa Barbara network for various combinations of the BCM parameters $\gamma$, $\delta$, and $c_0$.}
\label{fig:HK_ucsb_entropy}
\end{figure}

In \cref{fig:HK_ucsb_clusters}, we observe for the UC Santa Barbara network that our adaptive-confidence HK model always yields consensus (the number of major clusters is exactly $1$) when $\gamma \geq 0.005$.
For these values of $\gamma$, the Shannon entropy (see \cref{fig:HK_ucsb_entropy}) tends to decrease as we increase $c_0$ for fixed values of $\gamma$ and $\delta$.
This trend occurs because the number of minor opinion clusters also tends to decrease as we increase $c_0$ for fixed values of $\gamma$ and $\delta$.
This observation contrasts with our simulations of our adaptive-confidence HK model on synthetic networks (see \cref{sec:HK_complete,sec:HK_ER_SBM}), for which we observed that the Shannon entropy follows similar trends as the number of major clusters as we vary one of $\gamma$, $\delta$, or $c_0$ while fixing the other BCM parameters.
We believe that one reason for this difference is that the {\sc Facebook100} networks have many small-degree nodes, which allow more minor opinion clusters to form.

\subsection{Adaptive-confidence DW model}
We now briefly discuss our simulations of our adaptive-confidence DW model on a 100-node complete graph and the {\sc NetScience} network \cite{netscience}. In \cref{tab:DW_trends}, we summarize the trends that we observe in these simulations.
Because of the long computation times, we consider much smaller graphs and fewer BCM parameter values for our adaptive-confidence DW model than we did for our adaptive-confidence HK model.
Notably, the value of the compromise parameter $\mu$ (which is in the DW models but is not the HK models) affects our simulation results.

Our adaptive-confidence DW model yields some of the same trends that we obtained in our adaptive-confidence HK model. 
One of these trends is that, for both the 100-node complete graph and the {\sc NetScience} network,  
the weighted-average edge fraction $W(T_f) < 1$ for some BCM parameters.
Another trend is that our adaptive-confidence DW model has longer convergence times than the baseline DW model in our simulations on the 100-node complete graph.
Additionally, for the 100-node complete graph, we observe less opinion fragmentation as we increase $\gamma$.
We do not observe these latter two trends in our simulations on the {\sc NetScience} network.
In \cref{appendix:DW_simulations}, we give a detailed discussion of the results of our simulations of our adaptive-confidence DW model.

\begin{table}[ht]
\centering
\caption{\label{tab:DW_trends} Summary of the observed trends in our adaptive-confidence DW model.
}
\noindent
\begin{tabularx}{\textwidth}{qt}
\hline\hline
Quantity & Trends \\\hline
Convergence Time
&
$\bullet$ For the complete graph, 
for fixed values of the compromise parameter $\mu$ and initial confidence bound $c_0 \leq 0.3$, our adaptive-confidence DW model tends to converge more slowly than the baseline DW model. \smallskip \newline
$\bullet$ For the {\sc NetScience} network, for fixed values of $\mu$ and $c_0$,
our adaptive-confidence DW model and the baseline DW model have similar convergence times.
\\ \hline
Number of \newline Major Clusters 
& 
$\bullet$ For the complete graph, when we fix the other BCM parameters, 
we (1) tend to observe fewer major clusters as we increase {the confidence-increase parameter} $\gamma$ and 
(2) observe little effect on the numbers of major clusters 
when we vary the confidence-decrease parameter $\delta$.
\smallskip \newline
$\bullet$ For the complete graph, for a fixed value of $c_0 \leq 0.3$, 
our adaptive-confidence DW model yields fewer major clusters when $\mu = 0.1$ than when $\mu \in \{0.3, 0.5\}$. The baseline DW model does not have this behavior. \smallskip \newline
$\bullet$ For the {\sc NetScience} network, for a fixed value of $c_0$, our adaptive-confidence DW model yields at least as many major clusters as the baseline DW model. For this network, $\mu$ has little effect on the number of major clusters. 
\\ \hline
$W(T_f)$ 
&
$\bullet$ The baseline DW model always yields $W(T_f) = 1$. 
Our adaptive-confidence DW model also yields $W(T_f) = 1$ for the complete graph with $c_0 \geq 0.4$ and the {\sc NetScience} network with $c_0 \in \{0.8, 0.9\}$.
\smallskip \newline
$\bullet$  
When $W(T_f) < 1$, for fixed values of the parameters $\gamma$, $\delta$, and $c_0$,
decreasing $\mu$ tends to also decrease $W(T_f)$ for both the
complete graph and the {\sc NetScience} network.
\\ \hline\hline
\end{tabularx}
\end{table}


\section{Conclusions and discussion}\label{sec:discussion}
\subsection{Summary and discussion of our results}\label{sec:discussion_summary}
We developed two bounded-confidence models (BCMs) --- a synchronously-updating one that generalizes the Hegselmann--Krause (HK) model and an asynchronously-updating one that generalizes the Deffuant--Weisbuch (DW) model --- with adaptive confidence bounds. The confidence bounds in our adaptive-confidence BCMs are distinct for each dyad (i.e., pair of adjacent nodes) of a network and change when nodes interact with each other. 
One can interpret the changes in confidence bounds as changes in receptiveness between nodes.
We demonstrated that incorporating time-dependent, adaptive confidence bounds in {our} BCMs yields a variety of interesting behaviors, such as adjacent nodes that converge to the same limit opinion but are eventually unreceptive to each other.

For both our adaptive-confidence HK model and our adaptive-confidence DW model, we proved convergence properties for the dyadic confidence bounds and the limiting behaviors of effective graphs, which track which nodes of a network are able to influence each other.
We demonstrated using numerical simulations that our BCMs have fewer major opinion clusters and take longer to converge than the associated baseline BCMs.
See \cref{tab:HK_trends} for a summary of the trends in our adaptive-confidence HK model, and see \cref{tab:DW_trends} for a summary of the trends in our adaptive-confidence DW model.

The results of our numerical simulations of our adaptive-confidence BCMs complement our theoretical results (which informed the stopping criteria in our computations).
For our adaptive-confidence HK model, we proved that (1) all dyadic confidence bounds must converge either to $0$ or to $1$ and (2)
the dyadic confidence bounds between nodes in different limit opinion clusters must converge to $0$ (see \cref{thm:convergence_HK}).
For our adaptive-confidence DW model, we proved that analogous results hold
almost surely (see \cref{thm:convergence_DW}).
However, in both of our adaptive-confidence BCMs, the dyadic confidence bounds between nodes in the same limit opinion cluster do not necessarily converge to $1$, 
as it is possible for them to instead converge to $0$.
Indeed, when the confidence-decrease parameter $\delta < 1$, our numerical simulations of both of our adaptive-confidence BCMs demonstrate for a wide range of the other BCM parameter values that some dyads in the same final opinion cluster have confidence bounds that converge to $0$.
Although the nodes in these dyads are unreceptive to each other, they still converge to the same opinion.
The nodes in these dyads do not have an edge between them in the final effective graph, so the final opinion clusters (i.e., the connected components of the final effective graph) in our BCMs can have a richer structure than those in the baseline BCMs.

\subsection{Future work}
Our investigation lays groundwork and provides a point of comparison for the study of more complicated adaptive-confidence mechanisms in BCMs. 
Future investigations of adaptive-confidence BCMs include establishing additional theoretical guarantees, examining and validating such BCMs in sociological contexts, and generalizing these models in various ways.

It is worthwhile to further explore the theoretical guarantees of our adaptive-confidence BCMs. 
We showed (see \cref{thm:effgraph_DW}) that, almost surely,
the effective graph in our adaptive-confidence DW model eventually only has edges between nodes in the same limit opinion cluster. 
However, unlike for our adaptive-confidence HK model (see \cref{thm:effgraph_HK}), we did not prove any guarantee that the effective graph of our adaptive-confidence DW model is eventually constant (not even almost surely). Further theoretical analysis of our adaptive-confidence DW model can help strengthen knowledge of its properties, including the structural properties of the limit effective graphs.

It is also relevant to analytically and numerically study the mutual receptiveness of nodes in our BCMs when they reach a consensus state.
In our numerical simulations of our adaptive-confidence BCMs, when the confidence-decrease parameter $\delta < 1$, some adjacent nodes in the same final opinion cluster eventually are not receptive to each other.
More specifically, our numerical simulations suggest that some
adjacent nodes can converge to the same limit opinion without having an edge between them in the limit effective graph.
One can explore this behavior of our BCMs and determine how the model parameters influence the existence of edges between adjacent nodes with the same limit opinion in limit effective graphs.
It is also interesting to examine the effects of different initial conditions. In our simulations, we observed for uniformly distributed initial opinions that our adaptive-confidence BCMs yield fewer major opinion clusters than the corresponding baseline models. It is worth exploring whether or not this is also the case when different network communities have different initial opinion distributions.

It is also important to consider how the behaviors of our BCMs connect to real-life social situations.  
One can interpret the opinion values in our models as representing outwardly expressed opinions, which may differ from internally held beliefs \cite{kuran1995}. The achievement of a ``consensus'' can represent agents arriving at the same outwardly expressed behavior or decision, rather than achieving an actual agreement of their internal values \cite{horowitz1962}.
Researchers have studied models with both internal and expressed opinions \cite{noorazar_recent_2020, cheng2020, hou2021}, and one can incorporate such considerations into adaptive-confidence BCMs.

In our adaptive-confidence BCMs, adjacent agents that are unreceptive to each other's opinions can still interact with each other. 
Alternatively, a pair of agents can eventually stop interacting with each other --- effectively changing the network structure --- after repeated negative interactions.
Researchers have modeled such ideas, along with network restructuring to consider new social interactions,
using adaptive networks with edge rewiring \cite{kan_adaptive_2022, pansanella2022}.
A possible area of further study is the investigation of which models effectively have ``mediator'' nodes that assist in bringing 
together the opinions of agents that are unreceptive to each other or no longer interact. 
If there are such mediator nodes, one can examine whether or not they share common characteristics or are identifiable from network structure and initial agent opinions.

There are many possible areas to explore in the study of adaptive opinion models. In research on opinion dynamics, it is important to incorporate network adaptivity, which provides fertile ground for theoretical, computational, and empirical investigations of opinion dynamics.

\appendix

\section{Proofs of our theoretical results for our adaptive-confidence DW model}\label{appendix:proofs}
We now prove the results for our adaptive-confidence DW model that we presented in \cref{sec:adaptive-confidenceDW_theorems}.


\subsection{Proofs of our confidence-bound results}\label{appendix:proof_cij_DW}
We first prove \cref{lemma:monotone_DW}, which states that each confidence bound $c_{ij}(t)$ is eventually monotone.

\begin{proof}[Proof of \cref{lemma:monotone_DW}]

We first consider $c_{ij}(t)$ for adjacent nodes, $i$ and $j$, that are in different limit opinion clusters (i.e., $x^i \neq x^j$). 
Choose a time $T$ such that the inequalities
\begin{align}
	|x_k(t) - x^k| &< \frac{1}{4}\min_{x^m\neq x^{n}}|x^m - x^{n}|\,, \label{eq:DW_monotone_condition1} \\
	|x_k(t) - x_k(t')| &< \frac{\mu}{4} \min_{x^m\neq x^{n}}|x^m - x^{n}|  \label{eq:DW_monotone_condition2}
\end{align}
hold for each node $k$ and all times $t' > t \geq T$.

We claim that $c_{ij}(t)$ is monotone decreasing (i.e., $c_{ij}(t+1) \leq c_{ij}(t)$) for all $t \geq T$. Note that 
\begin{equation}\label{eq:DW_obvious_difference}
	|x^i - x^j| \geq \min_{x^m\neq x^{n}}|x^m - x^{n}|\,.  
\end{equation}
By the triangle inequality and \cref{eq:DW_monotone_condition1}, we have 
\begin{align*}
	|x^i - x^j| &\leq |x^i - x_i(t)| + |x_i(t) - x_j(t)| + |x_j(t) - x^j|\\
	&< \frac{1}{2}\min_{x^m\neq x^{n}}(|x^m - x^{n}|) + |x_i(t) - x_j(t)|\,.  
\end{align*}
Rearranging terms and using \cref{eq:DW_obvious_difference} yields
\begin{equation}\label{eq:DW_contradiction1}
	|x_i(t) - x_j(t)| > \frac{1}{2}\min_{x^m\neq x^{n}}|x^m - x ^{n}|\,.
\end{equation}

Suppose that $c_{ij}(t)$ increases (i.e., $c_{ij}(t+1) > c_{ij}(t)$) at time $t \geq T$. 
This implies that $x_j(t + 1) = x_j(t) + \mu(x_i(t) - x_j(t))$,
which in turn implies that 
\begin{equation}\label{eq:update_diff_DW}
	|x_j(t+1)-x_j(t)|= \mu|x_i(t) - x_j(t)| \,.
\end{equation}
By \cref{eq:DW_monotone_condition2}, we have 
\begin{equation}\label{eq:DW_monotone_condition2_consequence}
	|x_j(t+1) - x_j(t)| < \frac{\mu}{4}\min_{x^m\neq x^{n}}|x^m - x^{n}|\,,
\end{equation}
and \cref{eq:update_diff_DW,eq:DW_monotone_condition2_consequence} together imply that
\begin{equation}\label{eq:DW_contradiction2}
    |x_i(t) - x_j(t)| < \frac{1}{4}\min_{x^m\neq x^{n}}|x^m - x^{n}|\,.
\end{equation}
The inequalities \cref{eq:DW_contradiction1} and \cref{eq:DW_contradiction2} cannot hold simultaneously.
Therefore, any interactions between nodes $i$ and $j$ for times $t \geq T$ must result in a decrease of $c_{ij}$. 
Consequently, for all adjacent nodes $i$ and $j$ from distinct limit opinion clusters, $c_{ij}$ is monotone decreasing (i.e., $c_{ij}(t + 1) \leq c_{ij}(t)$) for all $t \geq T$.

Now consider adjacent nodes, $i$ and $j$, that are in the same limit opinion cluster (i.e., $x^i = x^j$).
Choose a time $T > 0$ such that 
\begin{equation}\label{eq:DW_monotone_condition3}
	|x_k(t) - x^k| < \frac{\gamma}{2}
\end{equation}
for each node $k$ and all times $t \geq T$.
We claim that there is some time $T_{ij} \geq T$ such that either $c_{ij}(t)$ is monotone decreasing (i.e., $c_{ij}(t + 1) \leq c_{ij}(t)$) or it is monotone increasing (i.e., $c_{ij}(t + 1) \geq c_{ij}(t)$) for all $t \geq T_{ij}$.

If $c_{ij}(t)$ is monotone decreasing for all $t \geq T$, choose $T_{ij} = T$. 
If $c_{ij}(t)$ is not monotone decreasing for all $t \geq T$, there must exist a time $T_{ij} \geq T$ at which $|x_i(T_{ij}) - x_j(T_{ij})| < c_{ij}(T_{ij})$. This implies that
\begin{equation}
	c_{ij}(T_{ij} + 1) = c_{ij}(T_{ij}) + \gamma(1 - c_{ij}(T_{ij})) \geq \gamma\,.
\end{equation}
We claim that $c_{ij}(t)$ only increases or remains constant for times $t \geq T_{ij}$.
By \cref{eq:DW_monotone_condition3}, we have
\begin{equation}
	|x_k(t) - x_{k'}(t)| 
    \leq |x_k(t) - x^k| + |x^k - x^{k'}| + |x^{k'} - x_{k'}(t)|
    <\gamma
\end{equation}
for each node pair $k$ and $k'$ with $x^k = x^{k'}$ and all times $t \geq T$.
Therefore, 
\begin{equation}
	|x_i(t) - x_j(t)| < \gamma 
\end{equation}
for all times $t \geq T_{ij} \geq T$,
which implies that subsequent interactions between nodes $i$ and $j$ 
increase $c_{ij}(t)$ (because $c_{ij}(t) \geq \gamma$). 
Consequently,
if $c_{ij}(t)$ increases at a certain time $T_{ij} \geq T$, then it subsequently either increases or remains constant. 
If $c_{ij}(t)$ never increases after time $T$, then by definition it is eventually monotone decreasing. This implies that $c_{ij}(t)$ is either eventually monotone increasing (i.e., $c_{ij}(t_2) \geq c_{ij}(t_1)$ for all $t_2 > t_1 \geq T$) or eventually monotone decreasing. 

\vspace{-22pt}
\[\] 

\end{proof}

\medskip

We now prove \cref{lemma:convergence_DW}, which states that if $c_{ij}(t)$ converges, then its limit $c^{ij} = \lim\limits_{t\to\infty} c_{ij}(t)$ is either $0$ or $1$ almost surely.

\begin{proof}[Proof of \cref{lemma:convergence_DW}]

We prove this lemma using an argument that is similar to the one that we used to prove \cref{lemma:convergence_HK}. 
Unlike in an HK model, 
adjacent nodes in a DW model need not interact with each other at each discrete time.
In fact, it is possible (although it occurs with probability $0$) that 
there exists a pair of adjacent nodes that only interact a finite number of times.

Given $\epsilon > 0$, choose a time $T$ such that 
the inequalities
\begin{align}
    |c_{ij}(t) - c^{ij}| &<  \epsilon/2\,, 	\label{eq:DW_cij_condition1} \\
    |c_{ij}(t_1) - c_{ij}(t_2)| &< \frac{1}{2}\left(\min\{1 - \delta, \gamma\}\right) \epsilon	\label{eq:DW_cij_condition2}
\end{align}
hold for all times $t,t_1,t_2\geq T$.
Suppose that we choose the adjacent nodes $i$ and $j$ to interact at some time $t \geq T$. 
It then follows that
either $c_{ij}(t + 1) = \delta c_{ij}(t)$ or $c_{ij}(t + 1) = c_{ij}(t)+\gamma(1-c_{ij}(t))$.

Suppose that $c_{ij}(t + 1) = \delta c_{ij}(t)$. In this case, we claim that $c^{ij} = 0$. To verify this claim, note that $c_{ij}(t) - c_{ij}(t + 1) = (1 - \delta)c_{ij}(t)$. 
We know from \cref{eq:DW_cij_condition2} that $c_{ij}(t) - c_{ij}(t + 1) < \frac{1}{2}(1 - \delta)\epsilon$, so
$c_{ij}(t) < \epsilon/2$. Therefore, with \cref{eq:DW_cij_condition1}, we obtain
\begin{align*}
	0 \leq c^{ij} &\leq |c^{ij} - c_{ij}(t)| + |c_{ij}(t)| \\
	&< \epsilon/2 + \epsilon/2\\
	&= \epsilon\,, 
\end{align*}
which implies that $c^{ij} = 0$.

Now suppose that $c_{ij}(t + 1) = c_{ij}(t) + \gamma(1-c_{ij}(t))$. Rearranging terms yields 
$c_{ij}(t + 1) - c_{ij}(t) = \gamma(1 - c_{ij}(t)) < \frac{1}{2}\gamma\epsilon$, which implies that $1 - c_{ij}(t) < \epsilon/2$. Therefore, 
\begin{align*}
	0 \leq 1 - c^{ij} &\leq |1 - c_{ij}(t)| + |c_{ij}(t) - c^{ij}| \\ 
	&< \epsilon/2 + \epsilon/2 \\
	&= \epsilon\,,
\end{align*}
which implies that $c^{ij} = 1$\,. 

Consequently, if nodes $i$ and $j$ interact infinitely often, $c^{ij}$ must be either $0$ or $1$. By the Borel--Cantelli lemma, nodes $i$ and $j$ interact infinitely many times with probability $1$. Therefore, it is almost surely the case that either $c^{ij} = 0$ or $c^{ij} = 1$.
\end{proof}

\subsection{Proof of the effective-graph theorem for our adaptive-confidence DW model}\label{appendix:proof_eff_DW}
We now prove \cref{thm:effgraph_DW}, which is our main result about effective graphs in our adaptive-confidence DW model. It states that, almost surely, an effective graph in our adaptive-confidence DW model eventually has edges only between adjacent nodes in the same limit opinion cluster. 

\begin{proof}[Proof of \cref{thm:effgraph_DW}]

By \cref{thm:convergence_DW}, for adjacent nodes $i$ and $j$ that are in different limit opinion clusters, $c_{ij}(t)$ almost surely converges to $0$. 
Therefore, almost surely, there is some $T_1$ such that 
\begin{equation}\label{eq:confidence_bound_zero_DW_effective_graph}
	c_{ij}(t) < \frac{1}{2}\min_{x^m \neq x^{n}}|x^m - x^{n}|
\end{equation}
for all times $t \geq T_1$. 
We also choose $T_2$ such that
\begin{equation}\label{eq:nodes_close_to_final_opinions_DW_effective_graph}
	|x_k(t) - x^k|<\frac{1}{4}\min_{x^m \neq x^{n}}|x^m - x^{n}|
\end{equation}
for each node $k$ and all times $t \geq T_2$.

Let $T = \max\{T_1,T_2\}$, and fix adjacent nodes $i$ and $j$ that are in different limit opinion clusters. 
The time $T$ exists almost surely because $T_1$ exists almost surely.
For all times $t \geq T$, the inequality \cref{eq:nodes_close_to_final_opinions_DW_effective_graph} implies that 
\begin{align*}
	|x^i - x^j| &\leq |x_i(t) - x^i| + |x_i(t) - x_j(t)| + |x_j(t) - x^j| \\ 
	&\leq \frac{1}{2}\min_{x^m \neq x^{n}}|x^m - x^{n}| + |x_i(t) - x_j(t)|\,.
\end{align*}
Because $\min\limits_{x^m \neq x^{n}}|x^m - x^{n}| \leq |x^i - x^j|$, it follows that 
\begin{equation}
	\min_{x^m\neq x^{n}}|x^m - x^{n}| \leq \frac{1}{2}\min_{x^m \neq x^{n}}|x^m - x^{n}| + |x_i(t) - x_j(t)|\,.
\end{equation}
Therefore, with \cref{eq:confidence_bound_zero_DW_effective_graph}, we obtain
\begin{align*}
	c_{ij}(t) < \frac{1}{2}\min_{x^k \neq x^{k'}}|x^k - x^{k'}| \leq |x_i(t) - x_j(t)|\,.
\end{align*}
That is, $|x_i(t) - x_j(t)| \geq c_{ij}(t)$ for all $t\geq T$, so
the edge $(i,j)$ is not in the effective graph at time $t$ for all $t \geq T$. Therefore, the only edges in the effective graph for times $t \geq T$ 
are between nodes in the same limit opinion cluster. 
\vspace{-20pt}
\[\] 

\end{proof}

The effective-graph theorem for our adaptive-confidence DW model (see \cref{thm:effgraph_DW}) is weaker than that for our adaptive-confidence HK model (see \cref{thm:effgraph_HK}). In particular, we are unable to conclude for the adaptive-confidence DW model that the effective graph is eventually constant (or even almost surely eventually constant). The obstruction to obtaining such a guarantee 
is the stochasticity that arises from the asynchronous opinion updating of the adaptive-confidence DW model. 
In particular, consider adjacent nodes $i$ and $j$ in the same limit opinion cluster.
By \cref{lemma:monotone_DW}, there exists a time $T$ such that $c_{ij}$ is monotone for all times $t \geq T$. 
Suppose that $c_{ij}$ is monotone decreasing for all $t \geq T$. 
If nodes $i$ and $j$ interact at time $t \geq T$, then $c_{ij}(t)$ decreases and the edge $(i,j)$ is not in the effective graph (i.e., $(i,j) \notin E_\mathrm{eff}(t)$). However, if nodes $i$ and $j$ do not interact at time $t$, it is possible that edge $(i,j) \in E_\mathrm{eff}(t)$.(By contrast, for our adaptive-confidence HK model, which updates synchronously, the existence of a time $T$ such that $c_{ij}$ is strictly decreasing for all times $t \geq T$ implies that the edge $(i,j) \notin E_\mathrm{eff}(t)$ for times $t \geq T$.)
Suppose that $t_1, t_2, \ldots$ are successive times at which nodes $i$ and $j$ interact after time $T$. 
For each $s$, it can be the case that both the inequality $|x_i(t_s) - x_j(t_s)| \geq c_{ij}(t_s)$ holds and there is a time $\tilde{t}_s$ between $t_s$ and $t_{s+1}$ such that $|x_i(\tilde{t}_s) - x_j(\tilde{t}_s)| <  c_{ij}(\tilde{t}_s) = c_{ij}(t_{s})$. 
That is, between each pair of interaction times $t_s$ and $t_{s+1}$, the opinions of nodes $i$ and $j$ can (because of other adjacent nodes) first become close enough so that the difference between their opinions is less than their confidence bound and then subsequently become sufficiently far apart so that the difference between their opinions exceeds their confidence bound. 
In this situation, the effective graph is not eventually constant. 

Although the example in the previous paragraph may seem pathological, it is unclear 
whether and how frequently such situations can occur. 
There also may be other scenarios in which an effective graph is not eventually constant. 
This issue does not arise in the proof of \cref{thm:effgraph_HK} because the nodes in each dyad interact at every time in the adaptive-confidence HK model. 

\section{Proof of the effective-graph theorem for the baseline DW model}\label{appendix:proof_eff_baseline_DW}
We now prove \cref{thm:effgraph_baseline_DW}, 
which is our convergence result for effective graphs in the baseline DW model. 
To do this, we first prove \cref{lemma:baseDW_eff_edges} and \cref{lemma:baseDW_less_than_c}.

\begin{lemma}\label{lemma:baseDW_eff_edges}
Consider the baseline DW model (with update rule \cref{eq:DW_update_rule}). There is a time $T_1$ and there is almost surely a time $T_2$
such that the following statements hold for all adjacent nodes $i$ and $j$.
\begin{itemize}
    \item[(1)] If $|x^i - x^j| < c$, then $|x_i(t) - x_j(t)| < c$ and the edge $(i,j)$ is in the effective graph for all times $t \geq T_1$.
    \item[(2)] If $|x^i - x^j| > c$, then $|x_i(t) - x_j(t)| > c$ and the edge $(i,j)$ is not in the effective graph 
    at any time $t \geq T_1$.
    \item[(3)] If $|x^i - x^j| = c$, then 
    $|x_i(t) - x_j(t)| \geq c$ 
    and the edge $(i,j)$ is not in the effective graph 
    at any time $t \geq T_2$.
\end{itemize}
\end{lemma} 

\begin{proof}
Consider adjacent nodes $i$ and $j$, and let $\Delta_{ij} = |x^i - x^j|$ denote the difference between their opinions.

We first consider the case in which $\Delta_{ij} \neq c$. Choose a time $T_{ij}$ such that
\begin{equation}
    |x_k(t) - x^k| < \frac{1}{2} |c - \Delta_{ij}|
\end{equation}
for node $k \in \{i,j\}$ and all times $t \geq T_{ij}$.

Suppose that $\Delta_{ij} < c$. For all times $t \geq T_{ij}$, we have
\begin{align*}
	|x_i(t) - x_j(t)| &\leq |x_i(t) - x^i| + |x^i - x^j| + |x_j(t) - x^j| \\
		&< 2\left(\frac{1}{2}\right)(c - \Delta_{ij}) + \Delta_{ij} \\
		&= c \,.
\end{align*}
Therefore, the edge $(i,j)$ is in the effective graph for all $t \geq T_{ij}$.

Now suppose that $\Delta_{ij} > c$. Without loss of generality, let $x^i > x^j$. For all times $t \geq T_{ij}$, we have
\begin{align*}
	x_i(t) - x_j(t) &> \left(x^i -  \frac{1}{2} \left|c - \Delta_{ij}\right| \right) 
 - \left(x^j+ \frac{1}{2} \left|c - \Delta_{ij}\right| \right) \\
	&= (x^i - x^j) - \left|c - \Delta_{ij}\right|\\
	&= \Delta_{ij} - \Delta_{ij} + c \\
	&= c \,.
\end{align*}
Therefore, the edge $(i,j)$ is not in the effective graph at any time $t \geq T_{ij}$.

If there are no adjacent nodes $i$ and $j$ with $|x^i - x^j| \neq c$, then let $T_1 = 0$. Otherwise, let
\begin{equation}
	T_1 = \max\limits_{(i,j) \in E} \{T_{ij} \text{~such that~} |x^i - x^j| \neq c\} \,.
\end{equation}
We have shown that statements (1) and (2) hold for all times $t \geq T_1$.

We now consider the case $\Delta_{ij} = c$. Without loss of generality, let $x^i > x^j$. Choose a time $\tilde{T}_{ij}$ so that 
\begin{equation}\label{eq:baseDW_equal_c}
    |x^k - x_k(t)| < \frac{\mu c}{2(1 + 2\mu)}
\end{equation}
for node $k \in \{i,j\}$ and all times $t \geq \tilde{T}_{ij}$.
We will show that, almost surely, there are a finite number of times $t \geq \tilde{T}_{ij}$ such that $|x_i(t) - x_j(t)| < c$. 
Suppose on the contrary that there is a sequence $t_1, t_2, \ldots$
of times such that $t_k \geq \tilde{T}_{ij}$ and $|x_i(t_k) - x_j(t_k)| < c$ for all {$k$}.
At each time $t$, nodes $i$ and $j$ interact with probability $1 / |E| > 0$, where $|E|$ is the number of edges in the graph. Therefore, with probability $1$, nodes $i$ and $j$ interact at some time $t_k \geq \tilde{T}_{ij}$ with
 $|x_i(t_k) - x_j(t_k)| < c$.
Nodes $i$ and $j$ compromise their opinions at time $t_k$, so the inequality \cref{eq:baseDW_equal_c} implies that
\begin{align}  
    |x_i(t + 1) - x_i(t)| = \mu|x_i(t) - x_j(t)| 
    &\geq \mu \left[ \left(x^i - \frac{\mu c}{2(1 + 2\mu)}\right) 
        - \left(x^j + \frac{\mu c}{2(1 + 2\mu)}\right)  \right] \nonumber \\
    &= \mu \left[c - 2\left(\frac{\mu c}{2(1 + 2\mu)}\right) \right] \nonumber \\
    &= \frac{\mu c [(1 + 2\mu) - \mu]}{1 + 2\mu} \nonumber \\ 
    &= (1 + \mu) \frac{\mu c}{1 + 2\mu} \nonumber \\ 
    &> \frac{\mu c}{1 + 2\mu} \label{eq:c_apart_contradict} \,.
\end{align}
From the inequality \cref{eq:baseDW_equal_c}, we have
\begin{align*}  
    |x_i(t + 1) - x_i(t)| \leq |x_i(t + 1) - x^i| + |x^i - x_i(t)| < 2 \left( \frac{\mu c}{2(1 + 2\mu)} \right) = \frac{\mu c}{1 + 2\mu} \,,
\end{align*}
which cannot hold simultaneously with inequality \cref{eq:c_apart_contradict}.
Therefore, with probability $1$, there are a finite number of 
times $t \geq \tilde{T}_{ij}$ such that $|x_i(t) - x_j(t)| < c$. Consequently, there almost surely
exists some time $T_{ij} \geq \tilde{T}_{ij}$ such that $|x_i(t) - x_j(t)| \geq c$ and the edge $(i,j)$ is not in the effective graph for any $t \geq T_{ij}$.

If there are no adjacent nodes $i$ and $j$ with $|x^i - x^j| \neq c$, then let $T_2 = 0$.
Otherwise, let
\begin{equation}
    T_2 = \max\limits_{(i,j)\in E} 
    \{ T_{ij} \text{~such that~} |x^i - x^j| = c \} \,,
\end{equation}
where $T_2$ exists almost surely because each $T_{ij}$ exists almost surely.
We have shown that statement (3) holds for all times $t \geq T_2$ if $T_2$ exists.

\vspace{-22pt}
\[\] 

\end{proof}

\begin{lemma}\label{lemma:baseDW_less_than_c}
For adjacent nodes $i$ and $j$ with $|x^i - x^j| < c$, we have $x^i = x^j$ almost surely.
\end{lemma}

\begin{proof}
Fix adjacent nodes $i$ and $j$ with $|x^i - x^j| < c$, and let $\Delta_{ij} = |x^i - x^j|$ denote the distance between their opinions. Without loss of generality, let $x^i > x^j$.
We want to show that $\Delta_{ij} = 0$ almost surely. Suppose instead that $\Delta_{ij} > 0$.
Fix $\epsilon$ so that $0 < \epsilon < \min\{\frac{1}{4} (c - \Delta_{ij}), \frac{\Delta_{ij}}{2(1+ 1/\mu)}\}$ and choose $T_{ij}$ so that
\begin{equation}\label{eq:baseDW_less_c_ineq1}
	|x^k - x_k(t)| < \epsilon
\end{equation}
for each node $k$ and all times $t \geq T_{ij}$.

By the Borel--Cantelli lemma, there is almost surely some time $t \geq T_{ij}$ at which nodes $i$ and $j$ interact.
The inequality $\epsilon < \frac{1}{4} (c-\Delta_{ij})$ implies that
\begin{align*}
    |x_i(t) - x_j(t)| 
    	&\leq |x_i(t) - x^i| + |x^i - x^j| + |x^j - x_j(t)| \\
    	&< \frac{1}{4} (c -\Delta_{ij}) + \Delta_{ij} + \frac{1}{4} (c-\Delta_{ij})
    	= \frac{1}{2} (\Delta_{ij} + c) \\
    	&< c \,,
\end{align*}
so nodes $i$ and $j$ are receptive to each other at time $t$. 
Consequently, if they interact at time $t$, they update their opinions and
\begin{align}
    x_j(t + 1) &= x_j(t) + \mu [x_i(t) - x_j(t)] \nonumber \\
   	 &\geq x_j(t) + \mu [x^i - \epsilon - (x^j + \epsilon)] \nonumber \\ 
   	 &= x_j(t) + \mu (\Delta_{ij} - 2\epsilon) \nonumber \\
   	 &\geq (x^j - \epsilon) + \mu (\Delta_{ij} - 2\epsilon) \nonumber \\
	    &> x^j + \epsilon \,, \label{eq:baseDW_less_c_ineq2}
\end{align}
where the last inequality holds because 
$\epsilon < \frac{\Delta_{ij}}{2(1+ 1/\mu)}$, which we rearrange to obtain 
$2\epsilon < \mu(\Delta_{ij} - 2\epsilon)$.
The inequality \cref{eq:baseDW_less_c_ineq1} implies that
\begin{equation}
	|x^j - x_j(t + 1)| < \epsilon \,,
\end{equation}
which cannot hold simultaneously with the inequality \cref{eq:baseDW_less_c_ineq2}.
Therefore, if $0 < x^i - x^j < c$, then nodes $i$ and $j$ cannot interact at times $t \geq T_{ij}$.
However, by the Borel--Cantelli lemma, nodes $i$ and $j$ almost surely interact infinitely often. Consequently, $0 < x^i - x^j < c$ with probability $0$. 
Therefore, we almost surely have $x^i = x^j$.

\vspace{-20pt}
\[\] 

\end{proof}

We now use \cref{lemma:baseDW_eff_edges} and \cref{lemma:baseDW_less_than_c} to prove \cref{thm:effgraph_baseline_DW}.

\begin{proof}[Proof of \cref{thm:effgraph_baseline_DW}]

There is a time $T_1$ such that statements (1) and (2) of \cref{lemma:baseDW_eff_edges} hold, and there is almost surely a time $T_2$ such that statement (3) of \cref{lemma:baseDW_eff_edges} holds. 
Therefore, there is almost surely a time $T = \max \{T_1, T_2\}$ such that all three statements (1)--(3) of \cref{lemma:baseDW_eff_edges} hold for all times $t \geq T$.
Consequently, the edges of the effective graph satisfy $E_\mathrm{eff}(t) = E_\mathrm{eff}(T)$ for all $t \geq T$. 
The effective graph is thus eventually constant with respect to time for all $t \geq T$.

Suppose that the limit effective graph $\lim\limits_{t\to\infty} G_\mathrm{eff}(t)$ exists.
For adjacent nodes $i$ and $j$ with the same limit opinion (i.e., $x^i = x^j$), we know that $|x^i - x^j| = 0 < c$. By statement (1) of \cref{lemma:baseDW_eff_edges}, there thus exists a time $T_1$ such that the edge $(i,j)$ is in the effective graph for all times $t \geq T_1$. Therefore, the edge $(i,j)$ is in the limit effective graph.

Now suppose that the edge $(i,j)$ is in the limit effective graph. We seek to show that $x^i = x^j$ almost surely.
Because the edge $(i,j)$ is in the limit effective graph, there exists a time $\tilde{T}$ such that $(i,j) \in E_\mathrm{eff}(t)$ for all times $t \geq \tilde{T}$.
Consequently, by statement (2) of \cref{lemma:baseDW_eff_edges}, it cannot be the case that $|x^i - x^j| > c$. Therefore, by statement (3) of \cref{lemma:baseDW_eff_edges}, it almost surely cannot be the case that $|x^i - x^j| = c$. Consequently, we almost surely have $|x^i - x^j| < c$.
By \cref{lemma:baseDW_less_than_c}, it is almost surely the case that $x^i = x^j$. 

\end{proof}

\section{Additional results and discussion of our numerical simulations of our adaptive-confidence HK model}\label{appendix:HK_simulations}

In this appendix, we show additional numerical results for our adaptive-confidence HK model on ER graphs (see \cref{sec:HK_ER_appx}), two-community SBM graphs (see \cref{sec:HK_SBM_appx}), and the Reed College network (see \cref{fig:HK_reed_clusters}).
We again examine the numbers of major and minor clusters, the Shannon entropy $H(T_f)$ (see equation \cref{eq:entropy}), the weighted-average edge fraction $W(T_f)$ (see equation \cref{eq:weighted_avg}), and the convergence time $T_f$. 
We consider the values of the BCM parameters (namely, the confidence-increase parameter $\gamma$, confidence-decrease parameter $\delta$, and initial confidence bound $c_0$) in \cref{tab:parameters}.
Each point in our plots is a mean of our numerical simulations for the associated values of the BCM parameter set ($\gamma$, $\delta$, $c_0$).
All plots, including those that we do not include in this appendix, are in our
\href{https://gitlab.com/graceli1/Adaptive-Confidence-BCM}{code repository}.


\subsection{ER graphs}\label{sec:HK_ER_appx}

We now discuss additional results of our simulations of our adaptive-confidence HK model on $G(N,p)$ ER random graphs. We generate 5 ER random graphs for each value of $p \in \{0.1,0.5\}$. Each point in our plots is a mean of 50 simulations (from 5 random graphs that each have 10 sets of initial opinions). 
For fixed BCM parameters {(namely,} $\gamma$, $\delta$, and $c_0$),
our results for $G(1000, 0.5)$ graphs are more similar than those for $G(1000, 0.1)$ graphs to our results for the 1000-node complete graph.

\begin{figure} [htb]
\label{fig:HK_ER_clusters}
\includegraphics[width=\columnwidth]{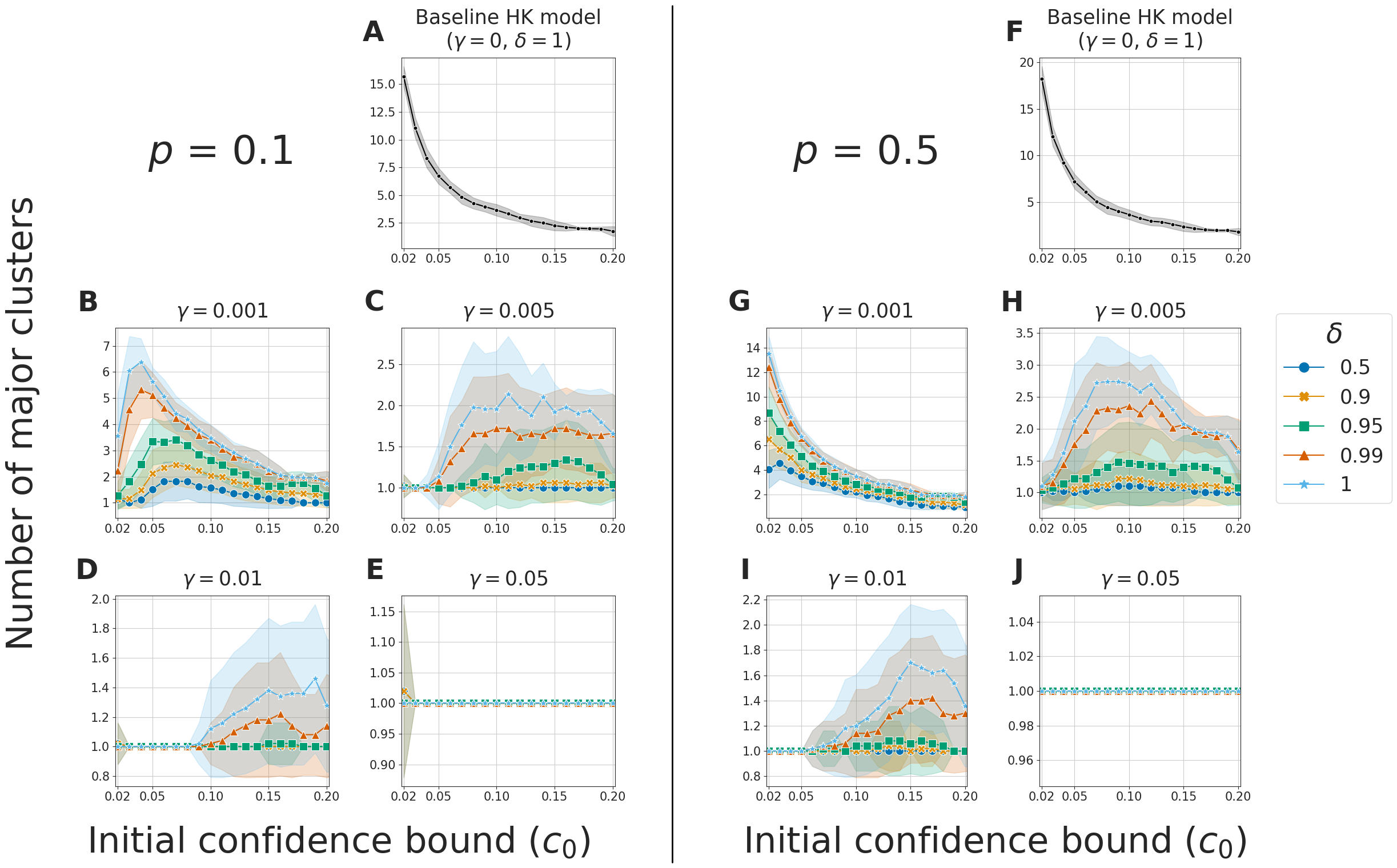}
\caption{The numbers of major clusters in simulations of our adaptive-confidence HK model on $G(1000, p)$ ER random graphs with (A--E) $p = 0.1$ and (F--J) $p = 0.5$ for various combinations of the BCM parameters $\gamma$, $\delta$, and $c_0$.
}
\end{figure}

In \cref{fig:HK_ER_clusters}, we show the numbers of major clusters in our simulations of our adaptive-confidence HK model on ER graphs.
For the 1000-node complete graph and $G(1000, 0.5)$ graphs, for fixed values of $\gamma$ and $\delta$, the number of major clusters tends to decrease as we increase $c_0$.
For the $G(1000, 0.1)$ graphs, when $\gamma = 0.001$, small values of $c_0$ tend to yield few major clusters. As we increase $c_0$, we observe an initial increase in the number of major clusters followed by a decrease in that number. By contrast, for the 1000-node complete graph, small values of $c_0$ tend to yield the most major clusters.
As we discussed in \cref{sec:HK_ER_SBM}, $G(1000, 0.1)$ graphs have more small-degree nodes than the complete graph. These small-degree nodes can easily form minor opinion clusters, especially for small values of $c_0$.
We hypothesize that these minor clusters form quickly in a simulation and that the nodes in them quickly become unreceptive to the other nodes of a network. 
It is thus possible that the nodes that are not in these minor clusters become receptive to fewer neighbors with conflicting opinions.\footnote{When the neighbors to which a node is receptive have very different opinions, recall (see \cref{footnote:conflicting_opinion}) that that node is receptive to ``conflicting'' opinions, resulting in more consensus (i.e., fewer major opinion clusters).}

As we discussed in \cref{sec:HK_ER_SBM}, for small values of $c_0$, our simulations of the adaptive-confidence HK model on the $G(1000, 0.1)$ graphs yield more minor clusters than our simulations on the $G(1000, 0.5)$ graphs and the 1000-node complete graph. Nevertheless, although Shannon entropy (see equation \cref{eq:entropy}) accounts for minor clusters, we still observe that it follows similar trends as the number of major clusters for both $p = 0.1$ and $p = 0.5$. 
Specifically, the Shannon entropy tends to increase as either (1) we decrease $\gamma$ for fixed $\delta$ and $c_0$ or (2) we increase $\delta$ for fixed $\gamma$ and $c_0$.

For our simulations on ER graphs with both $p = 0.1$ and $p = 0.5$, we observe the convergence-time trends in \cref{tab:HK_trends}.
For fixed values of $\gamma$, $\delta$, and $c_0$, the mean convergence time for $p = 0.1$ is at least as long as that for $p = 0.5$. 
Unlike for the complete graph, the ER graphs do not have a clear trend in the dependence of the convergence time either on
$\gamma$ (with fixed $\delta$ and $c_0$) or on
$\delta$ (with fixed $\gamma$ and $c_0$).
As with the 1000-node complete graph,
our fastest convergence times for ER graphs typically occur for $\delta = 1$.
For fixed $\gamma$ and $c_0$, we often observe that the convergence time increases as we decrease $\delta$.
However, we do not always observe this trend; for some values of $\gamma$ and $c_0$, smaller values of $\delta$ yield faster convergence than larger values of $\delta$.

\subsection{Two-community SBM graphs}\label{sec:HK_SBM_appx}

We now discuss additional results of our simulations of our adaptive-confidence HK model on two-community SBM graphs. Each of our SBM graphs consists of two complete graphs that are joined by a small number of edges (see \cref{sec:simulations_networks}). This yields a two-community structure.

In \cref{fig:HK_SBM_clusters}, we show the numbers of major clusters in our simulations on SBM graphs. 
For fixed values of the BCM parameters (namely, $\gamma$, $\delta$, and $c_0$), these simulations yield similar numbers of major clusters as in our simulations on the 1000-node complete graph (see \cref{fig:HK_complete_clusters}) and $G(1000, 0.5)$ ER graphs (see \cref{fig:HK_ER_clusters}).
We observe few minor clusters; for each BCM parameter set, the mean number of minor clusters is bounded above by $3$.
Consequently, the Shannon entropy and the number of major clusters follow similar trends.

\begin{figure} [htb]
\centering
\includegraphics[width=0.8\columnwidth]{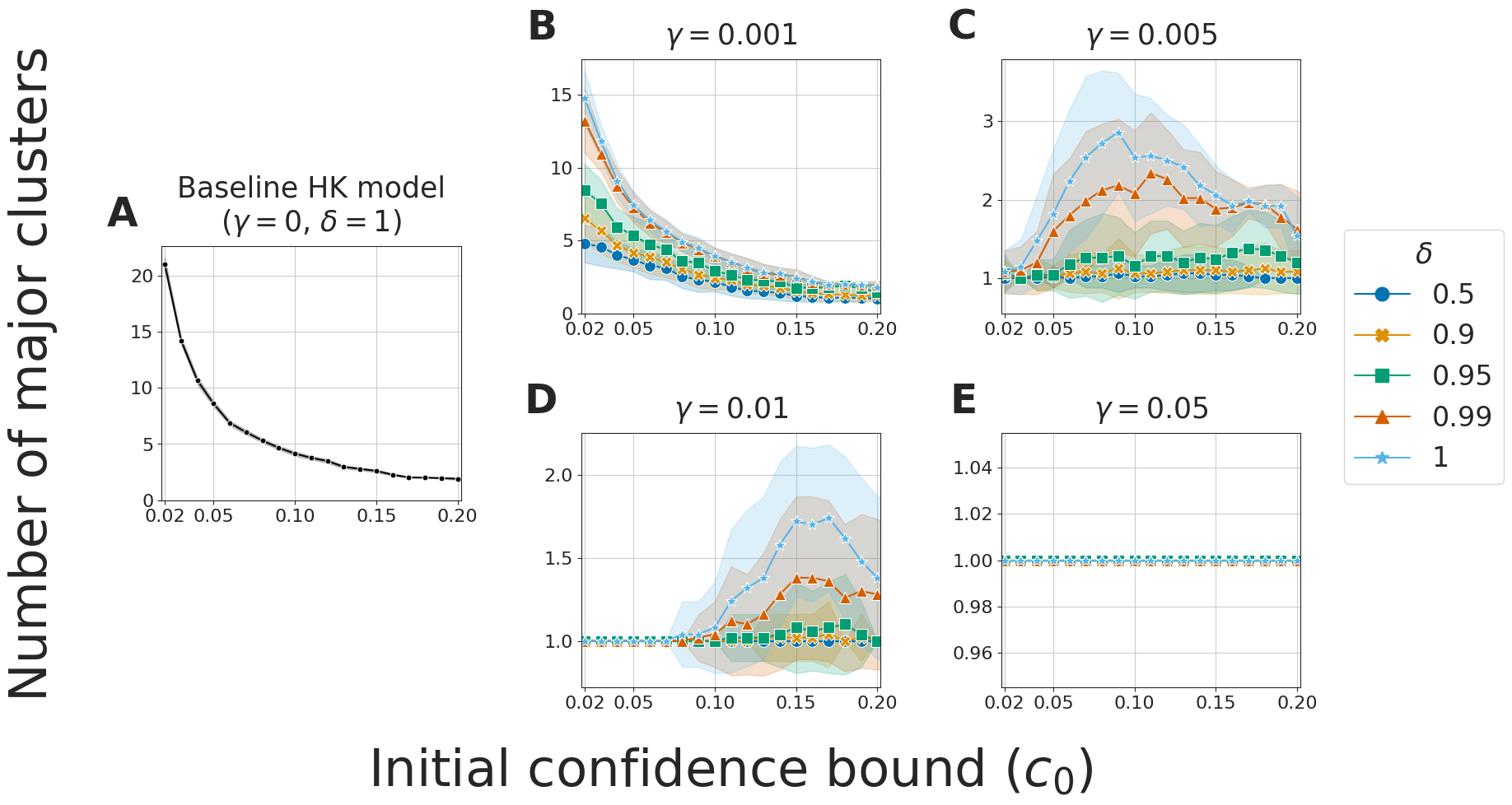}
\caption{The numbers of major clusters in simulations of our adaptive-confidence HK model on 1000-node SBM random graphs with connection probabilities $p_{aa} = p_{bb} = 1$ and $p_{ab} = 0.01$ for various combinations of the BCM parameters $\gamma$, $\delta$, and $c_0$.}
\label{fig:HK_SBM_clusters}
\end{figure}

The convergence times in our simulations on SBM graphs follow the trends in \cref{tab:HK_trends}. 
For fixed values of $\gamma$ and $c_0$, we do not observe a clear trend in how the convergence time changes as we vary $\delta$.
One commonality between the SBM graphs, the ER graphs, and the complete graph is that $\delta = 1$ gives the fastest convergence times.
For a wide range of fixed values of $\gamma$ and $\delta$, we also observe that the convergence time tends to decreases as we increase $c_0$ for both our adaptive-confidence HK model and the baseline HK model.

\subsection{Number of major clusters in simulations on the Reed College network}\label{appendix:reed_clusters}
In our simulations of our adaptive-confidence HK model on the {\sc Facebook100} networks, the Reed College network (see \cref{fig:HK_reed_clusters}) has 
more major opinion clusters than the other universities for very small initial confidence bounds $c_0$ (specifically, $c_0 \in \{0.02, 0.03, 0.04\}$). This difference 
may arise from the small size of the Reed College network in concert with our definition of major cluster.
For example, a final opinion cluster with 20 nodes is a major cluster for the Reed College network (which has 962 nodes in its LCC), but an opinion cluster of that size is a minor cluster for 
the UC Santa Barbara network (which has 14,917 nodes in its LCC).

\begin{figure} [htb]
\centering
\includegraphics[width=0.8\columnwidth]{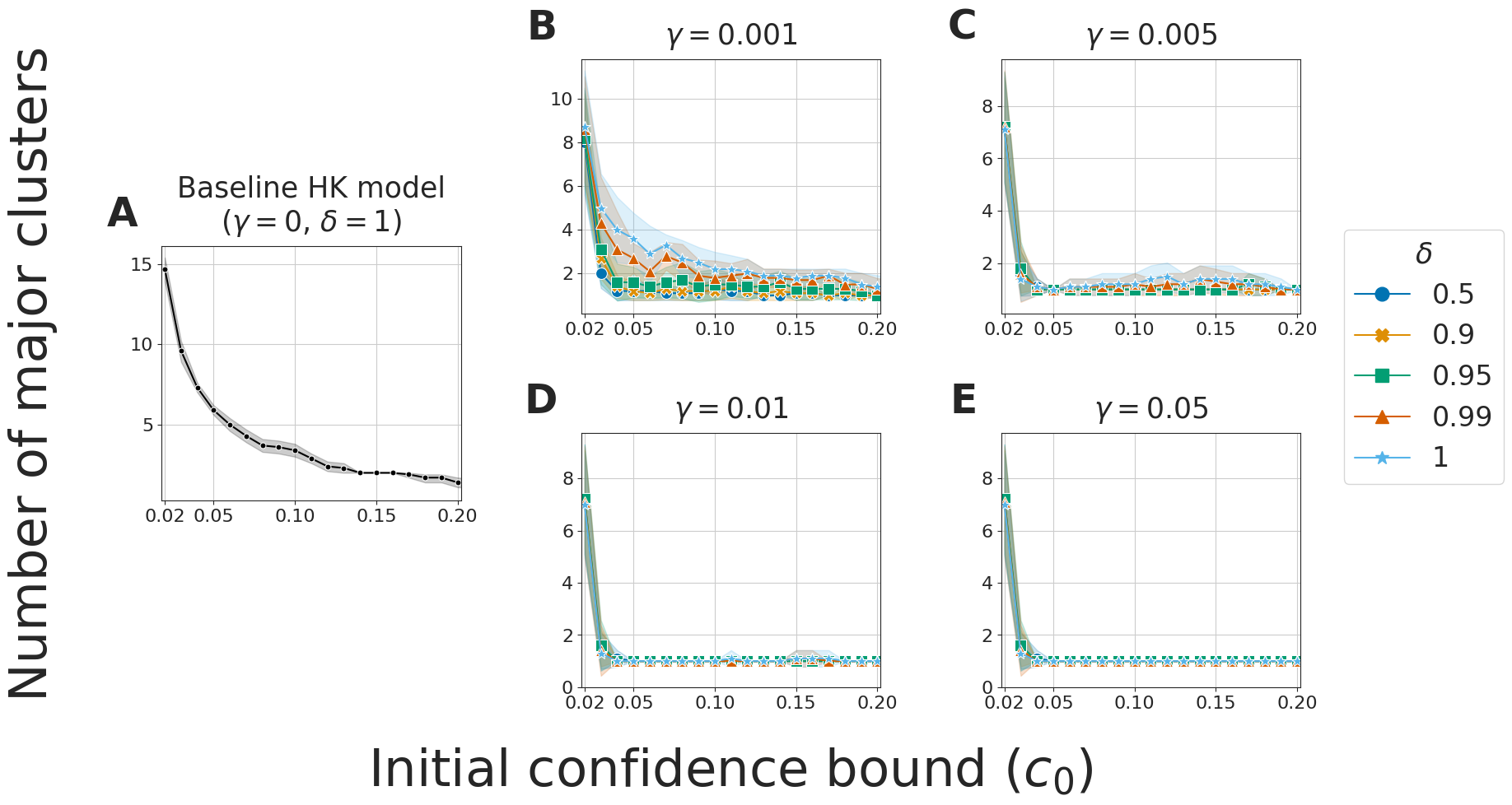}
\caption{The numbers of major clusters in simulations of our adaptive-confidence HK model on the Reed College network for various combinations of the BCM parameters $\gamma$, $\delta$, and $c_0$.}
\label{fig:HK_reed_clusters}
\end{figure}


\section{Additional results and discussion of our numerical simulations of our adaptive-confidence DW model}\label{appendix:DW_simulations}

We now further discuss our simulations of our adaptive-confidence DW model. We simulate this model a 100-node complete graph and the {\sc NetScience} network.
We simulate our adaptive-confidence DW model with the values of the BCM parameters (namely, the confidence-increase parameter $\gamma$, the confidence-decrease parameter $\delta$, the initial confidence bound $c_0$, and the compromise parameter $\mu$) in \cref{tab:parameters}.
See \cref{tab:DW_trends} for a summary of the trends for the two networks.

We explore the dependence of the numbers of major and minor clusters, the Shannon entropy $H(T_f)$ (see equation \cref{eq:entropy}), the weighted-average edge fraction $W(T_f)$ (see equation \cref{eq:weighted_avg}), and the convergence time 
$T_f$ on the initial confidence bound $c_0$. For each value of $(\gamma, \delta)$, we generate one plot; each plot has one curve for each value of the compromise parameter $\mu$.
Each point in our plots is the mean of $10$ numerical simulations (from 10 sets of initial opinions) with one BCM parameter set $(\gamma, \delta, c_0, \mu)$. We also show one standard deviation from the mean.
All plots, including those that we do not present in this appendix, are available in our \href{https://gitlab.com/graceli1/Adaptive-Confidence-BCM}{code repository}.


\subsection{A complete graph}\label{sec:DW_complete}
We first discuss our simulations of our adaptive-confidence DW model on a 100-node complete graph. 
In the present section, we show plots of the numbers of major opinion clusters (see \cref{fig:DW_complete_clusters}) and the weighted-average edge fractions $W(T_f)$ (see \cref{fig:DW_complete_WT}).

\begin{table}[htb]
\small
\centering
\caption{\label{tab:DW_bailout} Summary of the numbers of simulations of our adaptive-confidence DW model that reach the bailout time of $10^6$ time steps. For each combination of the BCM parameters ($\gamma$, $\delta$, $c_0$, and $\mu$), we run 10 simulations, which each have a different set of initial opinions. 
In each table entry, the focal number is the number of simulations that reach the bailout time and the number in parentheses is the number of those simulations for which we are {also} unable determine the final opinion clusters. We run our simulations with $(\gamma, \delta) = (0.1, 0.5)$ to convergence (i.e., without a bailout time); for those simulations, we do not track the number of opinion clusters at the bailout time.}
\begin{tabular}{cc|cccccc}
\hline\hline
\multicolumn{1}{l}{} & \multicolumn{1}{l|}{} & \multicolumn{6}{c}{\begin{tabular}[c]{@{}c@{}}Number of simulations that reach bailout\\ (number of simulations for which we are also\\unable to determine the final opinion clusters)\end{tabular}} \\ \cline{3-8} 
 &  & \multicolumn{3}{c|}{$\mu=0.1$} & \multicolumn{2}{c|}{$\mu=0.3$} & $\mu=0.5$ \\ \cline{3-8} 
 &  & $c_0=0.1$ & $c_0=0.2$ & \multicolumn{1}{c|}{$c_0=0.3$} & $c_0=0.1$ & \multicolumn{1}{c|}{$c_0=0.2$} & $c_0=0.1$ \\ \hline
\multirow{3}{*}{$\gamma=0.1$} & \multicolumn{1}{c|}{$\delta=0.3$} & 9 (7) & 2 (2) & \multicolumn{1}{c|}{1 (1)} & 0 & \multicolumn{1}{c|}{0} & 0 \\
 & \multicolumn{1}{c|}{$\delta=0.5$} & 8 & 1 & \multicolumn{1}{c|}{0} & 1 & \multicolumn{1}{c|}{0} & 0 \\
 & \multicolumn{1}{c|}{$\delta=0.7$} & 9 (6) & 2 (2) & \multicolumn{1}{c|}{1 (1)} & 2 (0) & \multicolumn{1}{c|}{0} & 0 \\ \hline
\multirow{3}{*}{$\gamma=0.3$} & \multicolumn{1}{c|}{$\delta=0.3$} & 9 (5) & 0 & \multicolumn{1}{c|}{0} & 2 (1) & \multicolumn{1}{c|}{0} & 2 (0) \\
 & \multicolumn{1}{c|}{$\delta=0.5$} & 8 (7) & 0 & \multicolumn{1}{c|}{0} & 2 (2) & \multicolumn{1}{c|}{0} & 0 \\
 & \multicolumn{1}{c|}{$\delta=0.7$} & 7 (4) & 0 & \multicolumn{1}{c|}{0} & 5 (3) & \multicolumn{1}{c|}{2 (1)} & 0 \\ \hline
\multirow{3}{*}{$\gamma=0.5$} & \multicolumn{1}{c|}{$\delta=0.3$} & 9 (6) & 0 & \multicolumn{1}{c|}{0} & 2 (2) & \multicolumn{1}{c|}{1 (0)} & 0 \\
 & \multicolumn{1}{c|}{$\delta=0.5$} & 8 (4) & 0 & \multicolumn{1}{c|}{0} & 2 (1) & \multicolumn{1}{c|}{0} & 0 \\
 & \multicolumn{1}{c|}{$\delta=0.7$} & 6 (4) & 0 & \multicolumn{1}{c|}{0} & 7 (4) & \multicolumn{1}{c|}{0} & 1 (1) \\ \hline\hline
\end{tabular}
\end{table}

Our adaptive-confidence DW model tends to converge more slowly than both the baseline DW model and our adaptive-confidence HK model.
Our simulations of our adaptive DW model often reach the bailout time, particularly for small values of $c_0$ and $\mu$.
In \cref{tab:DW_bailout}, we indicate the numbers of simulations that reach the bailout time.
In some simulations, despite reaching the bailout time, we are still able to identify the final opinion clusters. However, the maximum difference in the opinions of the nodes in these clusters is not within our tolerance value (see \cref{eq:stopping_criterion}) of $0.02$ for our adaptive-confidence DW model.
In those instances, we still use the cluster information to calculate the numbers of major and minor opinion clusters, 
the Shannon entropy $H(T_f)$,
and the weighted-average edge fraction $W(T_f)$. 
For our simulations of our adaptive-confidence DW model with $(\gamma, \delta) = (0.1, 0.5)$, we run each simulation to convergence (i.e., until we reach the stopping condition that we described in \cref{sec:simulations_specs}). We plot the results of these simulations in
\cref{fig:DW_complete_clusters}E and \cref{fig:DW_complete_WT}B.
Although some simulations reach the bailout time, the information that we are able to obtain about the opinion clusters  
(from both the simulations that we run to convergence and the simulations that reach the bailout time) give us confidence in the 
trends in \cref{tab:DW_trends}.

\begin{figure} [ht]
\centering
\includegraphics[width=0.9\columnwidth]{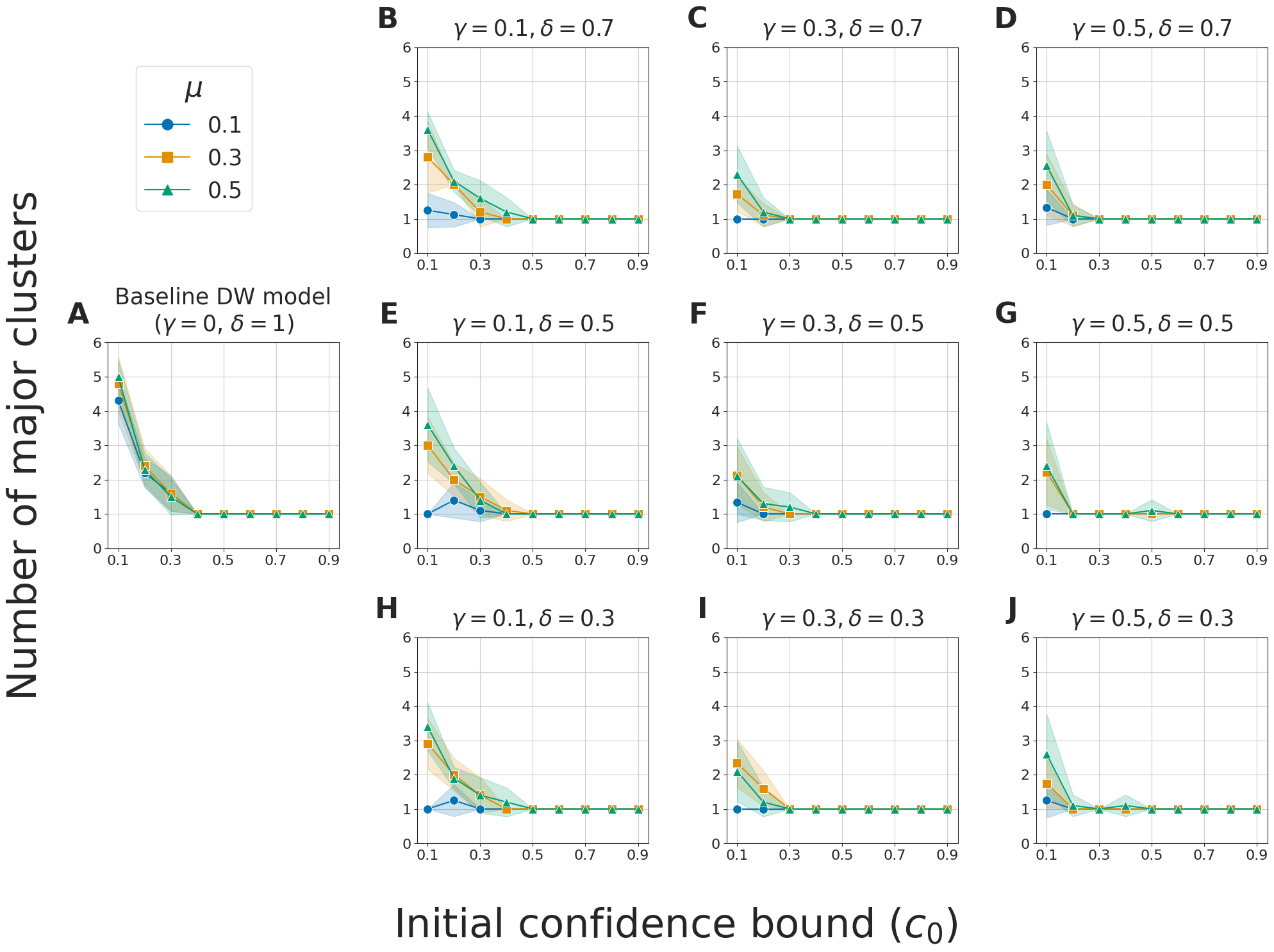}
\caption{The numbers of major clusters in simulations of (A) the baseline DW model and (B--J) our adaptive-confidence DW model on a 100-node complete graph for various combinations of the BCM parameters $\gamma$, $\delta$, $c_0$, and $\mu$.
In this figure and subsequent figures, we do not use simulations in which we are unable to determine the final opinion clusters (see \cref{tab:DW_bailout}) to calculate the means and standard deviations.
In (E), in which we show our simulations with $(\gamma, \delta) = (0.1, 0.5)$, we run all of our simulations to convergence (i.e., we ignore the bailout time) and use all of our simulations to calculate the mean numbers of major opinion clusters.
}
\label{fig:DW_complete_clusters}
\end{figure}

In \cref{fig:DW_complete_clusters}, we observe for a wide range of BCM parameter values that our adaptive-confidence DW model yields fewer major clusters (i.e., it encourages more consensus) than the baseline DW model. When $c_0 \geq 0.5$, our adaptive-confidence DW model and the baseline DW model always reach consensus.
For fixed values of $\gamma$, $\delta$, and $c_0$, when our adaptive-confidence DW model does not reach consensus, 
decreasing the compromise parameter $\mu$ tends to result in fewer major clusters.
By contrast, $\mu$ has little effect on the number of major clusters in the baseline DW model.   
Increasing $\gamma$ with the other BCM parameters (i.e., $\delta$, $c_0$, and $\mu$) fixed also tends to result in fewer major clusters.
Changing $\delta$ with the other parameters fixed has little effect on the number of major clusters. In fact,
changing $\delta$ with the other parameters fixed appears to have little effect on any of the computed quantities, so we show results only for $\delta = 0.5$ in our subsequent figures.
In our \href{https://gitlab.com/graceli1/Adaptive-Confidence-BCM}{code repository}, we include plots for the other examined values of $\delta$.

We observe very few minor clusters in our simulations of our adaptive-confidence DW model on the 100-node complete graph. For each BCM parameter set $(\gamma, \delta, c_0, \mu)$, the mean number of minor clusters in our 10 simulations is bounded above by $1$.
Consequently, the number of major clusters and Shannon entropy follow similar trends.
Overall, in our simulations on the 100-node complete graph, our adaptive-confidence DW model encourages more consensus than the baseline DW model and this difference between these two models becomes more pronounced for larger values of the confidence-increase parameter $\gamma$ and smaller values of the compromise parameter $\mu$. 

In \cref{fig:DW_complete_WT}, we show the weighted-average edge fraction $W(T_f)$ (see \cref{eq:weighted_avg}).
The baseline DW model always has $W(T_f) = 1$.
By contrast, for sufficiently small initial confidence values $c_0$, our adaptive-confidence DW model yields $W(T_f) < 1$. 
For $\mu = 0.1$ and small $c_0$ (specifically, $c_0 \leq 0.3$), our adaptive-confidence DW model can reach consensus with $W(T_f) < 1$. 
As in our adaptive-confidence HK model (see our discussion in \cref{sec:HK_complete}), this observation indicates that some adjacent nodes in the same final opinion cluster are not receptive to each other.

\begin{figure} [ht]
\centering
\includegraphics[width=0.9\columnwidth]{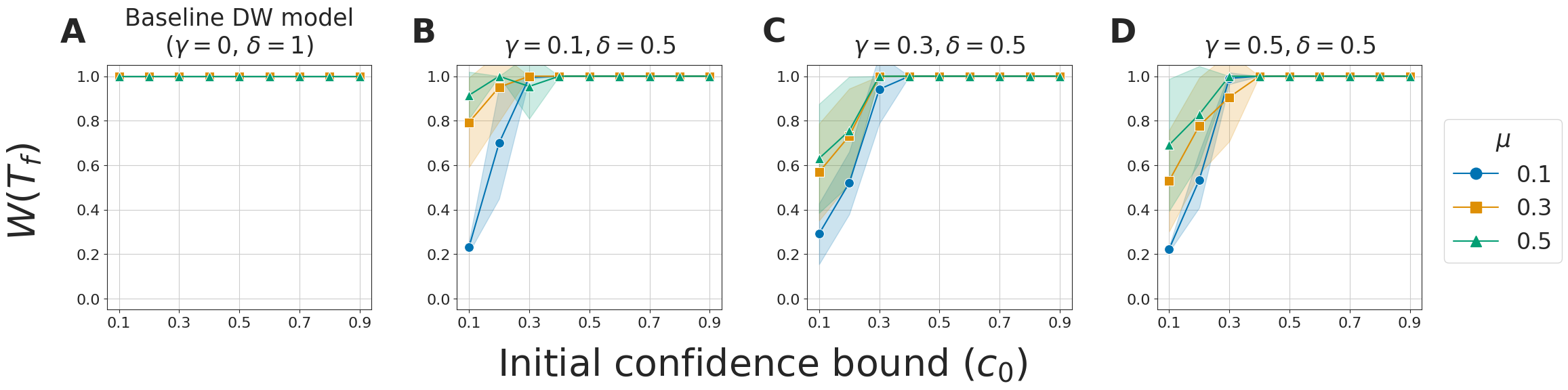}
\caption{
The weighted-average edge fraction $W(T_f)$ (see equation \cref{eq:weighted_avg})
in simulations of (A) the baseline DW model and (B--D) our adaptive-confidence DW model on a 100-node complete graph for various combinations of the BCM parameters $\gamma$, $\delta$, $c_0$, and $\mu$.
In (E), in which we show our simulations with $(\gamma, \delta) = (0.1, 0.5)$, we run all of our simulations to convergence (i.e., we ignore the bailout time) and use the resulting final opinion clusters.
}
\label{fig:DW_complete_WT}
\end{figure}

For fixed values of $c_0 \leq 0.3$ and $\mu$, our adaptive-confidence DW model tends to converge more slowly than the baseline DW model.
Additionally, when we fix the other BCM parameters (i.e., $\gamma$, $\delta$, and $\mu$), the convergence time tends to increase as we decrease $c_0$.
As we showed in \cref{tab:DW_bailout}, for small values of $c_0$ (specifically, $c_0 \in\{0.1, 0.2\}$), more simulations reach the bailout time as we decrease $\mu$.
In both our adaptive-confidence DW model and the baseline DW model, $\mu = 0.1$ yields longer convergence times than $\mu \in \{0.3, 0.5\}$ for fixed values of $\gamma$, $\delta$, and $c_0$.

\subsection{Network of network-scientist coauthorships}\label{sec:DW_netscience}
We now discuss our simulations of our adaptive-confidence DW model on the {\sc NetScience} network \cite{netscience}, which is a network of network scientists with unweighted and undirected edges that encode paper coauthorships.

For the {\sc NetScience} network and fixed values of $c_0$ and $\mu$, our adaptive-confidence DW model tends to have at least as many major opinion clusters (see \cref{fig:DW_netscience_major}) and minor opinion clusters (see \cref{fig:DW_netscience_minor}) as the baseline DW model. 
In \cref{fig:DW_netscience_minor}, we see for $c_0 \leq 0.5$ that both our adaptive-confidence DW model and the baseline DW model yield many more minor clusters for the {\sc NetScience} network than for the 100-node complete graph.
For values of $c_0$ that are near the transition between consensus and opinion fragmentation (specifically, $c_0 \in \{0.3, 0.4, 0.5\}$), our adaptive-confidence DW model yields noticeably more major clusters and minor clusters than the baseline DW model. The transition between consensus and fragmentation appears to occur for a larger threshold in our adaptive-confidence DW model than in the baseline DW model.
For the {\sc NetScience} network (and unlike for the $100$-node complete graph), changing the value of $\mu$ with the other BCM parameters fixed appears to have little effect on the numbers of major and minor opinion clusters.

For the {\sc NetScience} network and fixed values of $c_0$ and $\mu$, 
our adaptive-confidence DW model has convergence times that are similar to those of the baseline DW model.
All of our simulations of our adaptive-confidence DW model on the {\sc NetScience} network converge before reaching the bailout time.
We obtain the longest convergence times for $c_0 = 0.3$. By contrast, for the 100-node complete graph, the convergence time increases as we decrease $c_0$ and many simulations reach the bailout time for $c_0 \in \{0.1, 0.2\}$.
In both our adaptive-confidence DW model and the baseline DW model, $\mu = 0.1$ yields longer convergence times than $\mu \in \{0.3, 0.5\}$ for fixed values of $\gamma$, $\delta$, and $c_0$.
We do not observe a clear trend in how the convergence time changes either as a function of $\gamma$ (with fixed $\delta$, $c_0$, and $\mu$) or as a function of $\delta$ (with fixed $\gamma$, $c_0$, and $\mu$).

\begin{figure} [ht]
\centering
\includegraphics[width=0.8\columnwidth]{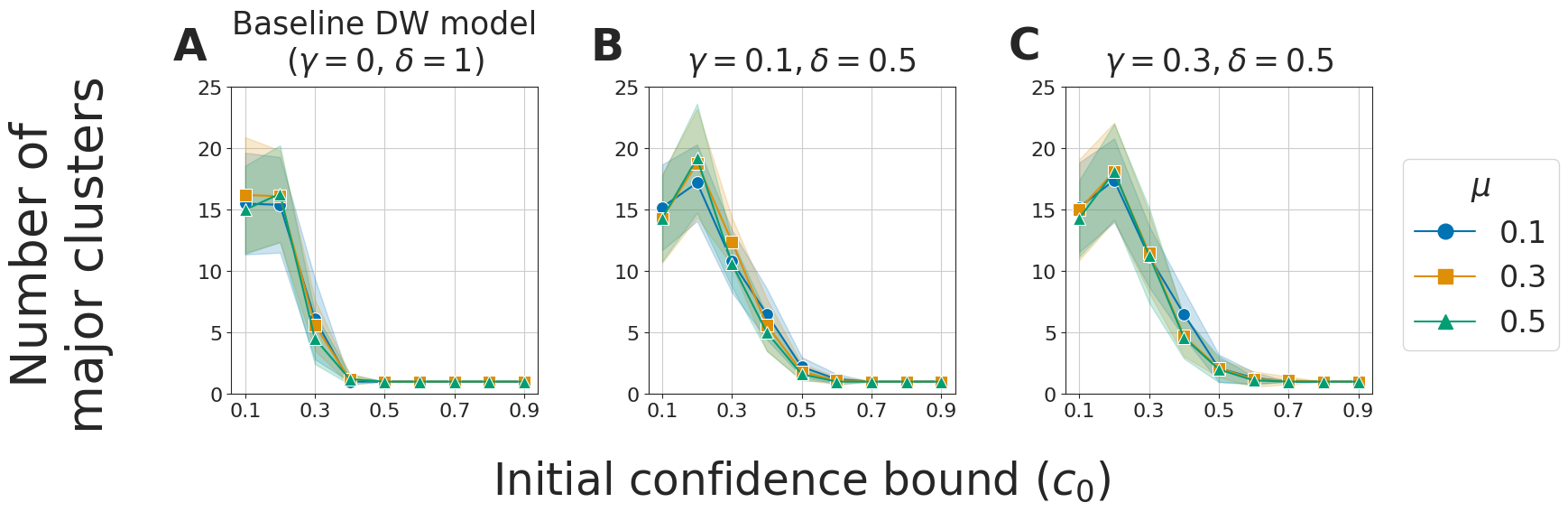}
\caption{The numbers of major clusters in simulations of (A) the baseline DW model and (B, C) our adaptive-confidence DW model on the {\sc NetScience} network for various combinations of the BCM parameters $\gamma$, $\delta$, $c_0$, and $\mu$.}
\label{fig:DW_netscience_major}
\end{figure}

\begin{figure} [ht]
\centering
\includegraphics[width=0.8\columnwidth]{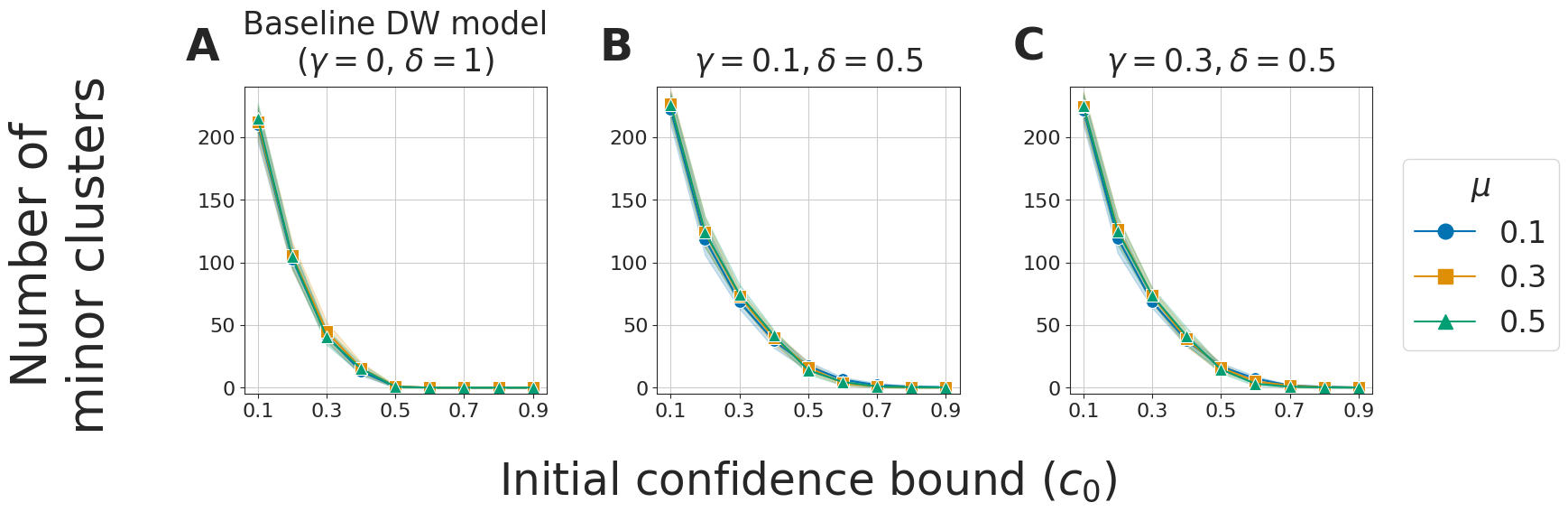}
\caption{The numbers of minor clusters in simulations of (A) the baseline DW model and (B, C) our adaptive-confidence DW model on the {\sc NetScience} network for various combinations of the BCM parameters $\gamma$, $\delta$, $c_0$, and $\mu$.}
\label{fig:DW_netscience_minor}
\end{figure}


\section*{Acknowledgements}
We thank Weiqi Chu, Gillian Grindstaff, Abigail Hickok, and the participants of UCLA’s Networks Journal Club for helpful discussions and comments. 
Additionally, we thank Weiqi Chu for her ideas and helpful discussions for the proof of \cref{thm:effgraph_baseline_DW}.
We also thank two anonymous referees for their helpful comments, and we thank Carmela Bernardo, Damon Centola, PJ Lamberson, and Jim Moody for pointers to helpful references. 





\end{document}